\def\bE{\mathbb E}
\def\bE{E}
\def\R{\mathbb R}
 \newcommand{\xhdr}[1]{\vspace{0.1mm}\noindent{{\bf #1.}}}
\def\P{{\text{pr}}}
\def\argmin{\mathop{\rm arg\,min}}
\def\diag{\mathop{\rm diag}\nolimits}
\def\dim{\mathop{\rm dim}\nolimits}
\def\sign{\mathop{\rm sign}}
\def\supp{{\rm supp}}
\def\Sum{\overset{n}{\underset{i=1}{\sum}}}
\renewcommand{\algocf@captiontext}[2]{#1\algocf@typo. \AlCapFnt{}#2} 
\def\@algocf@capt@plain{top}
\renewcommand{\algocf@makecaption}[2]{%
  \addtolength{\hsize}{\algomargin}%
  \sbox\@tempboxa{\algocf@captiontext{#1}{#2}}%
  \ifdim\wd\@tempboxa >\hsize
    \hskip .5\algomargin%
    \parbox[t]{\hsize}{\algocf@captiontext{#1}{#2}}
  \else%
    \global\@minipagefalse%
    \hbox to\hsize{\box\@tempboxa}
  \fi%
  \addtolength{\hsize}{-\algomargin}%
}
\begin{document}



\markboth{C. Donnat, O. Klopp and N. Verzelen.}{One-Bit Total Variation Denoising over Networks}

\title{One-Bit Total Variation Denoising over Networks with Applications to Partially Observed Epidemics}

\author{Claire DONNAT}
\affil{Department of Statistics, University of Chicago,\\ 5747 S Ellis Ave, Chicago, Illinois, 60637, United States of America
\email{cdonnat@uchicago.edu}}

\author{Olga KLOPP}
 \affil{
 ESSEC Business School\\
 3 Av. Bernard Hirsch, 95000 Cergy, France\\CREST, ENSAE, Institut Polytechnique de Paris,\\ 5 Av. Le Chatelier, 91120 Palaiseau, France
\email{kloppolga@math.cnrs.fr}}

\author{\and Nicolas VERZELEN}
\affil{MISTEA, Université Montpellier, INRAE, Institut Agro\\
2 Pl. Pierre Viala, 34000 Montpellier, France\\
\email{nicolas.verzelen@inrae.fr
}}

\maketitle

\begin{abstract}
	This paper introduces a novel approach for epidemic nowcasting and forecasting over networks using total variation (TV) denoising, a method inspired by classical signal processing techniques. 
 Considering a network that models a population as a set of $n$ nodes characterised by their infection statuses $Y_i$ and that represents contacts as edges, we prove the consistency of graph-TV denoising for estimating the underlying infection probabilities $\{p_i\}_{ i \in \{1,\cdots, n\}}$ in the presence of Bernoulli noise. 
 Our results provide an important extension of existing bounds derived in the Gaussian case to the study of binary variables --- an approach hereafter referred to as one-bit total variation denoising. 
 The methodology is further extended to handle incomplete observations, thereby expanding its relevance to various real-world situations where observations over the full graph may not be accessible. Focusing on the context of epidemics, we establish that one-bit total variation denoising enhances both nowcasting and forecasting accuracy in networks, as further evidenced by comprehensive numerical experiments and two real-world examples. 
 The contributions of this paper lie in its theoretical developments, particularly in addressing the incomplete data case, thereby paving the way for more precise epidemic modelling and enhanced surveillance strategies in practical settings.
\end{abstract}

\begin{keywords}
High-dimensional statistics; Graph total-variation; Graph-Trend Filtering; Gamma sparsity; Epidemic Forecasting; Epidemic Nowcasting.
\end{keywords}

\section{Introduction}

We consider the problem of signal denoising over networks
and apply our results to the analysis of infectious disease transmission over contact networks.
The spread of epidemics over networks is an important area of research with implications for public health \citep{ottaviano2018optimal,sanchez2022multilayer}, epidemiology \citep{ danon2011networks,keeling2005networks,moreno2002epidemic,shirley2005impacts, spricer2019sir}, and network analysis \citep{kim2021network,wan2014inferring}. While the study of disease propagation through interconnected populations had already gained significant attention over the past decade \citep{keeling2005implications, newman2002spread,pastor2015epidemic,pellis2015eight}, the recent emergence of new infectious diseases, such as COVID-19, further underscores the importance of developing comprehensive models that can capture the intricate interplay between epidemic transmission processes and the underlying network structures through which they spread \citep{keeling2005networks, manzo2020complex}.

Studies of epidemic processes typically rely on compartmental models \citep{kermack1927contribution, tolles2020modeling}, which split a population of size $n$ into different compartments (or groups) based on their disease status, and specify how agents transition between disease states (e.g., Susceptible, Infectious and Recovered). These approaches can be further refined into two categories, depending on how they account for randomness and uncertainty in the disease transmission process:  \textit{deterministic}, versus \textit{stochastic} epidemic models. Deterministic models \citep{bowman2005mathematical, carcione2020simulation, eikenberry2020mask, kumar2021infection, mubarak2021stochastic} assume that the disease transmission process can be precisely described using mathematical equations with fixed parameters. These models typically use differential equations to describe how the numbers of infected, susceptible, and recovered individuals change over time, and do not capture individual-level variability in the disease transmission process.
 Although deterministic models have provided valuable insights into the dynamics of epidemics, they often fail to capture the inherent randomness and unpredictability associated with real-world outbreaks. By contrast, the consideration of variability recognises that epidemic processes are influenced by random events, chance encounters, and unforeseen interactions among individuals. This acknowledgment prompts the need for stochastic models that can explicitly incorporate randomness into the modeling framework \citep{ball_2019}. 
 
 Instead of using differential equations, stochastic models often rely on techniques such as agent-based modelling \citep{hackl2019epidemic,perez2009agent,siettos2015modeling,teweldemedhin2004agent} or stochastic differential equations on compartmental models \citep{andersson2012stochastic, britton2010stochastic, bu2022likelihood, huang2022detecting, morsomme2022exact} to simulate individual interactions and disease transmission events. In this framework, each interaction or transmission event is subject to chance. The fluxes between compartments therefore follow a probability distribution, rather than a deterministic equation. Although these approaches are designed to better capture the uncertainty inherent in epidemic spreads, most of them only typically incorporate randomness in terms of transmission events: given a contact between an infected and susceptible individual, the transmission is subject to chance, and typically modelled as a Bernouilli variable. Specifically,  these models do not attempt to remedy another major limitation of compartmental models: the uniform mixing assumption. This assumption implies that every infected individual in the population can potentially transmit the disease to any susceptible individual in the network, which is not always realistic \citep{keeling2005networks}. Such a simplification overlooks critical dynamics in epidemic spread on networks, particularly in large populations where the uniform mixing assumption is known to fail.  These approximations can significantly affect the accuracy of models that attempt to capture how diseases spread through complex networks.

A possible solution to address this limitation is to model transmissions within a given contact network, thereby capturing more realistically possible transmission events.  The structure of contact networks can in fact known to significantly impact disease spread \citep{ganesh2005effect}. Indeed, while the standard epidemic models described in the previous paragraph correspond to fully connected networks,  accounting for a more realistic structure of interactions and connections between individuals may lead to more accurate predictions and more effective control strategies \citep{chakrabarti2008epidemic,wang2013effect}. 
  For example, \citet{pastor-satorras_epidemic_2001} have shown that the structure of the contact network greatly influences the epidemic threshold and disease prevalence. Research on temporal networks \citep{holme_temporal_2012} has shown that the timing of interactions plays a crucial role in shaping epidemic dynamics by affecting disease transmissibility. Epidemic models on graphs have also highlighted the importance of super-spreaders, that is,  individuals or nodes with a disproportionately high potential for transmission. Finally, \citet{keeling_networks_2005} introduced the notion of ``assortativity" in networks, where highly connected nodes preferentially interact with each other. This assortative mixing has significant implications for the effective targeting interventions and controlling epidemics.

Another key limitation of traditional deterministic and stochastic differential equation models is the assumption that the current state of the epidemic is known: all agents' infectious, susceptible, or recovered states are observed.  Knowledge of the current state of the epidemic is, in fact, essential for making accurate predictions about its future course, as small inaccuracies can lead to substantial changes in epidemic trajectories over time \citep{siegenfeld2020models}. However, accurately assessing the current situation of an epidemic, a process often referred to as nowcasting, is notoriously challenging \citep{desai2019real, mcgough2020nowcasting, wu2021nowcasting}, due to the availability of quality data and the delays in reporting of current cases~\citep{chakraborty2022nowcasting, rosenfeld2021epidemic}. 
In many situations, complete information about individuals' health status cannot be disclosed due to privacy concerns. It thus becomes essential to develop methods that are adapted to the practical constraints of real data and respect individuals' rights while still providing valuable insights into epidemic spread.

%
 
 In the present paper, we propose a new approach to nowcasting and forecasting of epidemic spreads on a network inspired by total-variation signal denoising. 
 Signal denoising based on the total variation (TV) penalty is a widely used technique in signal processing and image reconstruction that aims to remove noise from an observed signal while preserving important spatial or structural features. This approach uses the TV penalty, which measures the variation in the signal over a network. Letting $p_i, i \in \{1, \cdots, n\}$  be a signal over a network $\mathcal G$ on $n$ nodes and $E$ edges, the total variation penalty can be written as: $\sum_{(i,j) \in E} | p_i-p_j|.$ By penalising differences in the signal over neighbouring vertices, this term acts as a regularization penalty that enforces smoothness in the reconstructed signal. TV regularization is typically formulated as an optimisation problem, where the goal is to minimise the TV of the signal subject to a fidelity constraint in the form of a data-fitting loss, ensuring the reconstructed signal remains close to the original data. A more extensive review of Total-variation denoising is provided in Section~\ref{sec:related_works}.

In this paper, we study TV denoising in the setting of Bernoulli observations and show that this approach improves nowcasting estimates and prediction accuracy of dynamic behavior of epidemic processes over simple compartmental approaches.  To address the challenges of epidemic nowcasting, we also extend our approach to the case of incomplete observations which allows a better assessment of the disease's spread and can help implement more targeted surveillance and control measures.

\subsection*{Contributions}

Our contributions are threefold and summarised as follows.
First, we prove that the total variation denoiser that we propose -- thereafter referred to as one-bit TV denoiser --- is consistent in the case of Bernoulli noise (see Theorem \ref{thm:denoising}). These results extend existing bounds that have been obtained in the literature in the case of Gaussian noise \citep{hutter_optimal_2016}. In particular, we obtain new bounds on the risk of TV denoiser measured in terms of the $\ell_1$-norm.
 Second, we extend our analysis to the challenging case of missing observations and demonstrate good performance of the one-bit TV denoiser (see Theorem \ref{thm:denoising_missing}). To the best of our knowledge, such results are new in the literature. Finally,  we show how the one-bit total variation denoiser can be applied to improve the prediction of virus spread dynamics on networks (see Section \ref{sec:epidemics}). Our results are supported by extensive numerical studies that confirm the advantages of the proposed method, as well as two real-data experiments.

\subsection*{Notations}
For any $k$, $\mathbf{1}_{k}$ denotes the $k$-dimensional vector of ones and 
$\mathbf{0}_{k}$ denotes the $k$-dimensional vector of zeros.
For an integer $k\in \mathbb{N}$, we write $[k]=\{1,\dots,k\}$. We denote by $A^{\dagger}$ the Moore-Penrose pseudo-inverse of a matrix $A$ and by $\Vert A\Vert$ its operator norm.
Let $S\subset [n]$ be a subset of the nodes in $[n]$, we will use the notation $d_{\max}(S)$ to denote the maximum degree of nodes $i$  in subset $S$.

\section{Signal denoising}\label{Sec:signal_denoising}
Consider a network  where each node $i \in V$
 has an underlying state $y_i\in \{0, 1\}$ where
$0$ and $1$ may respectively represent, for example, healthy and infected states.
The
probability that person $i$ is in the infected state (or the proportion of the infected subpopulation at node $i$) is
indicated by $p^{*}_i$, that is, $p^{*}_i=\P(y_i=1)$. Let $p^{*}=(p^{*}_i)_{i=1,\dots n}$ denote the vector that captures the infected states of an interconnected population consisting of $n$ nodes.
The state of the network can thus be represented by the following ``signal + noise" model:
\begin{equation}\label{model}
	Y=p^{*}+\xi
\end{equation}
where $Y=(y_i)_{i\in[n]}$ are our observations and $\xi$ is a vector of centred Bernoulli noise. Given one vector of observations $Y$,  we consider the problem of estimating the underlying $(p^{*}_i)_{i\in[n]}$.

Let $\mathcal G=(V, E)$ be the underlying
undirected connected graph with vertex set $V$ and edge set $E$.  We set the cardinalities of both sets as $\vert V\vert =n$ and $\vert E\vert =m$.  We denote by  $A = (a_{ij})_{(i,j)\in [n]\times [n]}$  the graph's adjacency matrix.  It will be convenient for us to represent a graph by its edge-vertex incidence matrix $D\in \{-1,0,1\}^{m\times n}$. To each edge $e=(i,j)\in E$ corresponds a row $D_{e}$ of $D$ where the $k$th entry $D_{e,k}$ of $D_{e}$ is given by
\begin{equation*}
	D_{e,k}=\begin{cases}
		1& \text{if } k=\min(i,j)\\
		-1& \text{if } k=\max(i,j)\\
		0& \text{otherwise.}\\
	\end{cases}
\end{equation*}

Note that that  $L=D^{T}D=diag(A\mathbf{1}_{n})-A$ is the \textit{unnormalized Laplacian} of the graph $\mathcal G$. Here $diag(A\mathbf{1}_{n})$ is the diagonal matrix with $j$th diagonal element given by the degree of vertex $j$.

%
 The TV denoiser $\widehat p$ associated to $G$ is any solution of the following minimization problem:
\begin{equation}\label{def:tv_denoiser}
	\widehat{p}\in\underset{p\in \mathbb{R}^{n}}
	{\argmin}\left \{\dfrac{1}{n}\Sum \left (y_i-p_i\right )^{2}+\lambda \Vert Dp\Vert_1\right \},
\end{equation}
where $\lambda>0$ is a regularization parameter and the TV penalty $\Vert Dp\Vert_1$ is the convex relaxation of $\Vert Dp\Vert_0$, which corresponds to the number of times $p$ changes values along the edges of the graph $G$. Note that \eqref{def:tv_denoiser} is a convex problem that may be solved efficiently \citep{arnold_efficient_2016}. Given a $p\in \mathbb{R}^{n}$, let $\mathcal{P}(p)=\left (\mathcal{P}(p_{i})\right )^{n}_{i=1}\in [0,1]^{n}$ be defined as follows
\begin{equation*}
	\mathcal{P}(p_{i})=\begin{cases}
		p_i& \text{if }\; p_i\in [0,1],\\
		0& \text{if } \;p_i<0,\\
		1& \text{if } \;p_i>0.\\
	\end{cases}
\end{equation*}
Let $\mathcal{F}=\dfrac{1}{n}\Sum \left (y^{k_0}_i-p_i\right )^{2}+\lambda \Vert Dp\Vert_1$. We have that for any $p$, $\mathcal{F}(\mathcal{P}(p))\leq \mathcal{F}(p)$. So, without loss of generality, we assume that $\widehat{p}_{i}\in [0,1]^{n}$.

We start by defining the \textit{compatibility factor} and the \textit{inverse scaling factor} that will be crucial in characterising the performance of the TV denoiser:
\begin{definition}[Compatibility factor]\label{def:compatibility factor}
	Let $D\in\{-1,0,1\}^{m\times n}$ be the incidence matrix. The compatibility factor of a set $T\subset [m]$ is defined as
	\[\kappa_{T}=\kappa_{T}(D)=\underset{p\in [0,1]^{n}}{\inf}\dfrac{\sqrt{\vert T\vert}\,\Vert p\Vert_{2}}{\left \Vert \left (Dp\right )_{T}\right \Vert_{1}}\quad\text{for}\quad T\not = \emptyset.\]  
\end{definition}
\begin{definition}[Inverse scaling factor]
	Let $S=D^{\dagger}=[s_{1},\dots,s_{m}]$. The inverse scaling factor of $D$ is defined as 
	\[\rho=\rho(D)=\underset{j\in[m]}{\max}\Vert s_j\Vert_{2}.\]
\end{definition}
The following result shown in \citet{hutter_optimal_2016} provides a bound on both $\rho$ and $\kappa_{T}$:
\begin{proposition}\label{prp: bounds_kappa_rho}
	Let $D$ be the incidence matrix of a connected graph $G$ with maximal degree $d_{\max}$ and $\emptyset\not = T\subseteq E$. Let $0=\lambda_{1}< \lambda_{2} \leq \dots \leq \lambda_{n}$ be the eigenvalues of the Laplacian $L$.  Then,
	\begin{equation*}
		\kappa_{T}\geq \dfrac{1}{2\min\{\sqrt{d_{\max}},\sqrt{\vert T\vert}\}}\quad \text{and}\quad \rho \leq \dfrac{\sqrt{2}}{\lambda_{2}}.
	\end{equation*}

\end{proposition}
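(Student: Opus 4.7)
The plan is to establish the two bounds separately; both follow from standard spectral and combinatorial facts about the incidence matrix and the graph Laplacian, so I would carry them out in two independent arguments.

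For the compatibility factor, I would reduce the $\ell_1$-norm in the denominator of the defining ratio to an $\ell_2$-norm via Cauchy--Schwarz: $\|(Dp)_T\|_1 \leq \sqrt{|T|}\,\|(Dp)_T\|_2$. A direct computation shows that $\|(Dp)_T\|_2^2 = p^\T L_T p$, where $L_T$ is the unnormalised Laplacian of the subgraph $(V,T)$. A Gershgorin-disc argument then gives $\|L_T\| \leq 2 d_{\max}(T)$, namely twice the maximal degree of the subgraph induced by $T$. Two elementary observations control $d_{\max}(T)$: it is bounded by the global maximum degree $d_{\max}$, and it is also bounded by $|T|$, since no vertex can be incident to more than $|T|$ edges of $T$. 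Chaining these bounds yields
\[
\|(Dp)_T\|_1 \leq \sqrt{2|T|\min(d_{\max},|T|)}\,\|p\|_2,
\]
so the ratio in Definition~\ref{def:compatibility factor} is at least $1/\sqrt{2\min(d_{\max},|T|)}$, which a fortiori gives the constant $1/(2\min(\sqrt{d_{\max}},\sqrt{|T|}))$ stated in the proposition. The argument is scale-invariant in $p$, so restricting to the box $[0,1]^n$ does not alter the infimum.

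For the inverse scaling factor, I would rely on the pseudo-inverse identity $D^{\dagger} = L^{\dagger} D^\T$, which is valid because $G$ is connected and hence $\ker(D) = \ker(L) = \{t\mathbf{1}_n : t \in \mathbb{R}\}$. For an edge $j=(u,v)$, the $j$th column of $D^\dagger$ is therefore $s_j = L^{\dagger}(e_u-e_v)$, whence $\|s_j\|_2^2 = (e_u-e_v)^\T (L^{\dagger})^2 (e_u-e_v)$. Expanding in an orthonormal eigenbasis $u_1,\ldots,u_n$ of $L$ and using that $e_u-e_v$ is orthogonal to $u_1 = \mathbf{1}_n/\sqrt{n}$, the expansion truncates to the nonzero-eigenvalue modes and bounds each eigenvalue from below by $\lambda_2$:
\[
\|s_j\|_2^2 = \sum_{i=2}^n \frac{\langle u_i,\,e_u-e_v\rangle^2}{\lambda_i^2} \leq \frac{\|e_u-e_v\|_2^2}{\lambda_2^2} = \frac{2}{\lambda_2^2}.
\]
Taking the maximum over $j\in[m]$ yields $\rho \leq \sqrt{2}/\lambda_2$.

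The main obstacle, if any, is just careful bookkeeping of the constants in the compatibility bound and justification of the two auxiliary facts used as black boxes, namely the Gershgorin estimate $\|L_T\|\leq 2 d_{\max}(T)$ and the pseudo-inverse identity $D^{\dagger}=L^{\dagger}D^\T$; I would dispatch both in a short preliminary lemma, since everything else follows directly from Cauchy--Schwarz and the spectral decomposition of $L$.
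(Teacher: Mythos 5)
Your proof is correct. The paper does not actually prove this proposition itself --- it is imported verbatim from \citet{hutter_optimal_2016} --- and your two arguments (Cauchy--Schwarz together with the Gershgorin bound $\Vert L_T\Vert \le 2\,d_{\max}(T)$ and $d_{\max}(T)\le \min(d_{\max},\vert T\vert)$ for the compatibility factor; the identity $D^{\dagger}=L^{\dagger}D^{\T}$ together with the spectral expansion of $L^{\dagger}$ applied to $e_u-e_v\perp \mathbf{1}_n$ for the inverse scaling factor) are essentially the standard ones from that reference, with your version of the first bound in fact slightly sharper, namely $1/\sqrt{2\min(d_{\max},\vert T\vert)}$, than the stated constant.
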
 
Equipped with these notations, we establish the following bound on the risk of TV denoiser \eqref{def:tv_denoiser}: 
\begin{theorem}[Risk bound for one-bit TV denoising]\label{thm:denoising}
	Fix $\delta\in[0,1]$, $T\subset [m]$ and assume that $G$ is connected. Define the regularization parameter 
	\begin{equation}\label{def: lambda}
		\lambda=\frac{\sqrt{2}\rho}{n}\log\left (\frac{4n^{2}}{\delta}\right )\enspace . 
	\end{equation}
	Then, the TV denoiser $ \widehat{p}$ defined in \eqref{def:tv_denoiser} satisfies 
	\begin{equation} \label{bound_l2_risk}
			\begin{split}
				{\Vert \widehat p-p^{*}\Vert^{2}_{2}}\leq  \dfrac{16\rho^{2}\vert T\vert\log^{2}\left (\frac{4n^{2}}{\delta}\right )}{\kappa^{2}_{T}}+4\sqrt{2}\rho\left \Vert \left (Dp^{*}\right )_{T^{c}}\right \Vert_{1}\log\left (\frac{4n^{2}}{\delta}\right )
				 + \frac{4\Vert p^{*}\Vert_{0}\log\left (\tfrac{4}{\delta}\right )}{n}
			\end{split}
		\end{equation}
		with probability at least $1-\delta$.	
	\end{theorem}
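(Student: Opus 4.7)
I will follow the standard LASSO template for TV-penalized denoising (as in \citet{hutter_optimal_2016}), with two modifications to accommodate Bernoulli noise and the kernel of $D$. Starting from the optimality of $\hat p$ in \eqref{def:tv_denoiser} and substituting $Y=p^{*}+\xi$, with $v=\hat p-p^{*}$, the basic inequality reads
\[
\|v\|_2^2\leq 2\langle v,\xi\rangle+n\lambda\bigl(\|Dp^{*}\|_1-\|D\hat p\|_1\bigr).
\]
Because $\mathcal G$ is connected, $\ker D=\spann(\mathbf{1}_n)$ and $I=D^{\dagger}D+n^{-1}\mathbf{1}_n\mathbf{1}_n^{\T}$, so I decompose
\[
\langle v,\xi\rangle=\langle Dv,S^{\T}\xi\rangle+n^{-1}\bigl(\mathbf{1}_n^{\T}v\bigr)\bigl(\mathbf{1}_n^{\T}\xi\bigr),
\]
where $S=D^{\dagger}$. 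The first summand is bounded by the dual pairing $|\langle Dv,S^{\T}\xi\rangle|\leq\|Dv\|_1\|S^{\T}\xi\|_\infty$, while Cauchy--Schwarz gives $|n^{-1}(\mathbf{1}_n^{\T}v)(\mathbf{1}_n^{\T}\xi)|\leq|\sum_i\xi_i|\,\|v\|_2/\sqrt n$. This isolates a TV-dual noise term from a kernel-direction noise term.

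Both noise quantities are then controlled by Bernstein's inequality, using $|\xi_i|\leq 1$ and $\Var(\xi_i)=p_i^{*}(1-p_i^{*})$. For each column $s_j$ of $S$, $\Var(s_j^{\T}\xi)\leq\|s_j\|_2^2/4\leq\rho^2/4$ and the summands are bounded by $\rho$; a union bound over the $m\leq n^2$ edges yields $\|S^{\T}\xi\|_\infty\leq\tfrac{\sqrt 2}{2}\rho\log(4n^2/\delta)$ with probability at least $1-\delta/2$, which is precisely the calibration underlying the choice of $\lambda$ in \eqref{def: lambda}. For the kernel direction, $\sum_i\xi_i$ has total variance $\sum_i p_i^{*}(1-p_i^{*})\leq\|p^{*}\|_0$ and bounded summands, so Bernstein gives $\bigl(\sum_i\xi_i\bigr)^2\lesssim\|p^{*}\|_0\log(4/\delta)$ with probability at least $1-\delta/2$. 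This latter bound is what ultimately produces the $\|p^{*}\|_0\log(4/\delta)/n$ term in \eqref{bound_l2_risk}.

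The remainder is routine. I split $\|Dp^{*}\|_1-\|D\hat p\|_1\leq\|(Dv)_T\|_1-\|(Dv)_{T^c}\|_1+2\|(Dp^{*})_{T^c}\|_1$ via the triangle inequality; the calibration $n\lambda\geq 2\|S^{\T}\xi\|_\infty$ ensures the $\|(Dv)_{T^c}\|_1$ contributions on the two sides cancel, leaving
\[
\|v\|_2^2\lesssim\rho\log(4n^2/\delta)\,\|(Dv)_T\|_1+\rho\log(4n^2/\delta)\,\|(Dp^{*})_{T^c}\|_1+\frac{\|p^{*}\|_0\log(4/\delta)}{n}.
\]
I then apply the compatibility inequality $\|(Dv)_T\|_1\leq\sqrt{|T|}\|v\|_2/\kappa_T$ from Definition~\ref{def:compatibility factor}, with a harmless constant factor to extend the definition from $[0,1]^n$ to $v\in[-1,1]^n$ via the decomposition $v=v^+-v^-$ (using $\|v^+\|_2,\|v^-\|_2\leq\|v\|_2$). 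Absorbing $\|v\|_2$ into $\|v\|_2^2$ with Young's inequality then yields \eqref{bound_l2_risk}.

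The main obstacle relative to the Gaussian-noise analysis of \citet{hutter_optimal_2016} is the kernel direction $\spann(\mathbf{1}_n)$: the TV penalty does not constrain the mean, so the component $(\mathbf{1}_n^{\T}v)(\mathbf{1}_n^{\T}\xi)/n$ of the noise never enters the sparse dual bound and must be absorbed by a dedicated concentration step. Using Bernstein (rather than a subgaussian tail) is essential, because only the variance bound $\Var(\xi_i)\leq p_i^{*}$ produces the sharp factor $\|p^{*}\|_0$ rather than a trivial $n$ in that residual term.
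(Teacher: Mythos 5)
Your proposal is correct and follows essentially the same route as the paper's proof: the same basic inequality from first-order optimality, the same decomposition of $\langle \xi, \widehat p - p^*\rangle$ into the $\ker D$ component and the $(D^{\dagger})^{\T}\xi$ dual term, Bernstein with a union bound over the $m\leq n^2$ edges to calibrate $\lambda$, a support-size-dependent concentration bound for $\sum_i \xi_i$ producing the $\|p^*\|_0\log(4/\delta)/n$ term, and the compatibility-factor plus Young's inequality finish. The only minor differences are cosmetic (Bernstein versus Hoeffding on the support for the kernel direction, and your explicit handling of the sign issue when applying the compatibility factor to $v\in[-1,1]^n$, a point the paper passes over silently).
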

 The proof of Theorem~\ref{thm:denoising} is provided in Appendix~\ref{appendix:thm:denoising}. 
Let $\mathcal S=\supp( p^{*})$ and $s=\vert \mathcal S \vert= \|p^*\|_{0}$. Inequality \eqref{bound_l2_risk} allows trading off $\vert T\vert$ and $\left \Vert \left (Dp^{*}\right )_{T^{c}}\right \Vert_{1}$. Taking $T=\supp(Dp^{*})$ , we obtain the following bound on the estimation  error:
\[
	{\Vert \widehat p-p^{*}\Vert^{2}_{2}}\leq  \dfrac{16\rho^{2}\Vert Dp^{*}\Vert_{0}\log^{2}\left (\frac{4n^{2}}{\delta}\right )}{\kappa^{2}_{T}}+ \frac{4s\log\left (\tfrac{4}{\delta}\right )}{n}.
\]

On the other hand, taking $T$ to be the empty set in \eqref{bound_l2_risk}, we obtain the alternative error bound:
\[
{\Vert \widehat p-p^{*}\Vert^{2}_{2}}\leq  4\sqrt{2}\rho\left \Vert Dp^{*}\right \Vert_{1}\log\left (\frac{4n^{2}}{\delta}\right )+ \frac{4s\log\left (\tfrac{4}{\delta}\right )}{n}.
\]
The next proposition provides a risk bound in $\ell_1$ norm.
\begin{proposition}\label{prp:risk_l1}
	Fix $\delta\in[0,1]$, $T\in[m]$ and assume that $G$ is connected. Define the regularization parameter 
	\begin{equation*}
		\lambda=\frac{\sqrt{2}\rho}{n}\log\left (\frac{4n^{2}}{\delta}\right )
	\end{equation*}
	Then, the TV denoiser $ \widehat{p}$ defined in \eqref{def:tv_denoiser} satisfies
	\begin{equation} \label{bound_l1_risk}
		\begin{split}
			{\Vert \widehat p-p^{*}\Vert_{1}}&\leq  \dfrac{4\rho\sqrt{s\vert T}\vert}{\kappa_{T}}\log\left (\frac{4n^{2}}{\delta}\right )+2s\sqrt{\frac{\log\left (\frac{4}{\delta}\right )}{n}}\\&
			+ 3\sqrt{\rho\,s\left \Vert \left (Dp^{*}\right )_{T^{c}}\right \Vert_{1}\log\left (\frac{4n^{2}}{\delta}\right )} + \sqrt{2}\rho \Vert Dp^*\Vert_0 \log\left (\frac{4n^{2}}{\delta}\right )\enspace  , 
		\end{split}
	\end{equation}
	with probability at least $1-\delta$. Here, we recall that $s$ stands for $\|p^*\|_0$. 
\end{proposition}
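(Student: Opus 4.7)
The overall strategy would be to split the $\ell_1$ error
\[
\|\widehat p - p^*\|_1 \;=\; \|(\widehat p - p^*)_{\mathcal{S}}\|_1 + \|\widehat p_{\mathcal{S}^c}\|_1,
\]
using that $p^*_i = 0$ on $\mathcal{S}^c$, and to bound each piece with a different tool: the on-support part via the $\ell_2$ risk bound of Theorem~\ref{thm:denoising} and Cauchy--Schwarz, and the off-support part via a KKT-based deterministic argument that exploits the noiseless observation $y_i = 0$ for $i \in \mathcal{S}^c$.

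For the on-support term, I would apply Cauchy--Schwarz to get $\|(\widehat p - p^*)_{\mathcal{S}}\|_1 \leq \sqrt{s}\,\|\widehat p - p^*\|_2$, then plug in the bound \eqref{bound_l2_risk} and use the subadditivity $\sqrt{a+b+c} \leq \sqrt{a} + \sqrt{b} + \sqrt{c}$. After multiplying through by $\sqrt{s}$ this reproduces the first three summands of \eqref{bound_l1_risk} verbatim, with the constant $3$ comfortably absorbing the $2 \cdot 2^{1/4}$ factor that appears when taking the square root of the middle term. No additional probabilistic input is required at this stage, since all randomness is already encoded in the event of probability at least $1-\delta$ on which Theorem~\ref{thm:denoising} holds.

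For the off-support term, the crucial observation is that $p^*_i = 0$ on $\mathcal{S}^c$ forces $y_i = 0$ almost surely under the Bernoulli model. I would then write the first-order optimality conditions for \eqref{def:tv_denoiser}: there exists a dual variable $\widehat z \in [-1,1]^m$ with $\widehat z_e \in \partial|\cdot|\big((D\widehat p)_e\big)$ satisfying
\[
\widehat p \;=\; Y \;-\; \tfrac{n\lambda}{2}\, D^{\T}\widehat z,
\]
and this identity survives the projection onto $[0,1]^n$ because $\mathcal{F}(\mathcal{P}(p)) \leq \mathcal{F}(p)$ promotes any unconstrained minimizer to one lying in $[0,1]^n$. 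Restricting to $i \in \mathcal{S}^c$, where $Y_i = 0$ and $\widehat p_i \geq 0$, and summing yields
\[
\|\widehat p_{\mathcal{S}^c}\|_1 \;=\; \sum_{i \in \mathcal{S}^c}\widehat p_i \;=\; \frac{n\lambda}{2}\left|\sum_{e \in E}\widehat z_e \sum_{i \in \mathcal{S}^c}D_{e,i}\right|.
\]
For any edge $e=(u,v)$ the inner sum $\sum_{i\in\mathcal{S}^c}D_{e,i}$ vanishes unless exactly one endpoint lies in $\mathcal{S}^c$; but for any such \emph{boundary} edge, $(Dp^*)_e = \pm p^*_u \neq 0$, so $e \in \supp(Dp^*)$. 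Using $|\widehat z_e|\leq 1$ and counting boundary edges gives $\|\widehat p_{\mathcal{S}^c}\|_1 \leq (n\lambda/2)\,\|Dp^*\|_0$, and substituting the stated $\lambda$ recovers the final term of \eqref{bound_l1_risk}.

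The main obstacle is the combinatorial reduction in the off-support step: without it, one only recovers the crude bound $\|\widehat p_{\mathcal{S}^c}\|_1 \leq |\mathcal{S}^c|$ from $\widehat p_i \leq 1$, which is much too weak. It is the telescoping cancellation in $D^{\T}\widehat z$ restricted to $\mathcal{S}^c$, together with the observation that only edges in $\supp(Dp^*)$ survive, that produces the correct scaling in $\|Dp^*\|_0$ and makes the graph structure enter essentially. The probabilistic content is then just a union bound on the event from Theorem~\ref{thm:denoising}, the KKT computation being deterministic once $y_i = 0$ is enforced on $\mathcal{S}^c$.
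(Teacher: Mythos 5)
Your proposal is correct and follows essentially the same route as the paper: the same split $\|\widehat p-p^*\|_1=\|(\widehat p-p^*)_{\mathcal S}\|_1+\|\widehat p_{\mathcal S^c}\|_1$, with the on-support part handled by Cauchy--Schwarz plus the $\ell_2$ bound of Theorem~\ref{thm:denoising}, and the off-support part handled by the KKT condition together with $y_i=0$ on $\mathcal S^c$ and the observation that only boundary edges (which lie in $\supp(Dp^*)$) survive. The paper phrases the off-support step as testing the optimality condition \eqref{eq:1} against $\bar p=-\mathbf{1}_{\mathcal S^c}$ rather than summing the stationarity identity over $\mathcal S^c$, but these are the same computation.
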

 The proof of Proposition~\ref{prp:risk_l1} is provided in Appendix~\ref{appendix:prp:risk_l1}. In Proposition \ref{prp:risk_l1}, we can again trade off $\vert T\vert$ and $\left \Vert \left (Dp^{*}\right )_{T^{c}}\right \Vert_{1}$. Taking $T=\supp(Dp^{*})$, 
 the previous bound becomes: 
\[
{\Vert \widehat p-p^{*}\Vert_{1}}\leq \left(\dfrac{4\sqrt{s\Vert Dp^*\Vert_0}}{\kappa_{T}}+\sqrt{2}\Vert Dp^*\Vert_0\right)\rho\log\left (\frac{4n^{2}}{\delta}\right )+2s\sqrt{\frac{\log\left (\frac{4}{\delta}\right )}{n}}.
\]
On the other hand, taking $T$ to be the empty set, we obtain:
\[
{\Vert \widehat p-p^{*}\Vert_{1}}\leq  
2s\sqrt{\frac{\log\left (\frac{4}{\delta}\right )}{n}}+ 3\sqrt{\rho\,s\left \Vert Dp^{*}\right \Vert_{1}\log\left (\frac{4n^{2}}{\delta}\right )} + \sqrt{2}\rho \Vert Dp^*\Vert_0 \log\left (\frac{4n^{2}}{\delta}\right ).
\]

	The errors terms given by \eqref{bound_l2_risk} and \eqref{bound_l1_risk} crucially depend on the inverse scaling factor $\rho$ and the compatibility factor $\kappa_{T}$. \citet{hutter_optimal_2016} provides upper bounds for these quantities for different graphs. Using $\lesssim$ for an inequality where the left-hand side is bounded by the right-hand side up to a numerical constant, we specialize the upper bound on the risk of the TV denoiser to specific graphs. All the following bounds ( case 1 through 4) hold with probability at least $1-4/n$.

\xhdr{Case 1: $2D-$grids}  For the 2D grid, $\rho \lesssim \sqrt{\log n}$ (see  \citet{hutter_optimal_2016}). The $\ell_2$ and $\ell_1$ bounds on the risk of Proposition~\ref{prp:risk_l1} and Theorem~\ref{thm:denoising} thus yield that:
		\[
		{\Vert \widehat p-p^{*}}\Vert_{2}\lesssim \left \{\sqrt{\left \Vert Dp^{*}\right \Vert_{1} +\frac{s}{n\log^{1/2}(n)}} \;\right \}\log^{3/4}\left (n\right )\quad \text{and}
		\]
		\[
		{\Vert \widehat p-p^{*}\Vert_{1}}\lesssim \left \{\sqrt{\frac{s \left \Vert Dp^{*}\right \Vert_{1}}{\log^{3/2}(n)}}+\left \Vert Dp^{*}\right \Vert_{0}\right \}\;\log^{3/2}\left (n\right ).
		\]
\xhdr{Case 2: Complete graphs} For  a complete graph,  $\rho \lesssim 1/n$ and $\kappa_{T} \gtrsim 1/\sqrt{n}$ \citep{hutter_optimal_2016}. This implies:	
		\[
		{\Vert \widehat p-p^{*}\Vert_{2}}\lesssim \sqrt{\dfrac{\left \Vert Dp^{*}\right \Vert_{1} +s}{n}}\;\log^{1/2}\left (n\right )\quad \text{and}
		\]
		\[
		{\Vert \widehat p-p^{*}\Vert_{1}}\lesssim 
	\left (	\sqrt{\dfrac{s \left \Vert Dp^{*}\right \Vert_{1} }{n\log(n)}}+\frac{\Vert Dp^*\Vert_0} {n}\right )\log\left (n\right ).
		\]
\xhdr{Case 3: Star graphs} For a star graph, $\rho \lesssim 1$ and $\kappa_{T} \gtrsim 1/\sqrt{\vert T\vert}$, so that:
		\[
		{\Vert \widehat p-p^{*}\Vert_{2}}\lesssim \left \{ \sqrt{\left \Vert Dp^{*}\right \Vert_{1}+\frac{s}{n}}\right \}\log^{1/2}\left (n\right )\quad \text{and}
		\]
			\[
		{\Vert \widehat p-p^{*}\Vert_{1}}\lesssim \left (\sqrt{\frac{s\left \Vert Dp^{*}\right \Vert_{1}}{\log(n)}}+ \Vert Dp^*\Vert_0 \right )\log\left (n\right )
		\]
\xhdr{Case 4: Random graphs}  
  Random $d_n$-regular and  Erd\"os-R{\'e}nyi graphs $\mathcal{G}(n,p)$ with $p=d_n/n$
		where $d_n=d_0(\log n)^{\beta}$ for some $\beta >0$, $d_0>1$ exhibit a spectral
gap of the order $O(d_n)$
(see \cite{Kolokolnikov2014,friedman08}) which implies:
		\[
		{\Vert \widehat p-p^{*}\Vert_{2}}\lesssim \sqrt{\frac{\Vert Dp^{*}\Vert_{1}\log\left (n\right )}{d_n} + \frac{s\log(n)}{n}}\quad \text{and}
		\]
		\[
		{\Vert \widehat p-p^{*}\Vert_{1}}\lesssim \left (\sqrt{\frac{s \Vert Dp^{*}\Vert_{1}}{d_n}}+\frac{\Vert Dp^*\Vert_0 \sqrt{\log\left (n\right )}}{d_n}+\frac{s}{\sqrt{n}}\right )\sqrt{\log\left (n\right )}.
		\]

 \xhdr{Discussion} These bounds allow to shed more light on the interplay between the size of the support $\supp(Dp^*)$, the size of the epidemic $\| p^*\|_0$, as well as the topology of the graph. At a high level --- and as confirmed by our experiments in Section~\ref{sec:num_study}---, we expect our total variation denoiser to be better suited to situations where the cardinality of the signal's support is small compared to the size of the graph ($\|p^*\|_0=s <n$), and the maximal degree of the graph remains of constant order, so that  $|\supp(Dp^*)| \leq s {d_{\max}}(V)$ is also controlled. This means for instance that the denoiser is likely to provide a better control of the $\ell_1$-error in the $2D$-grid  (where the $\ell_1$ error is of the order of $s \log^{3/2}(n)$ compared to star graph, where the error could be in the order of $sn \log(n)$. 

\section{ Virus spread dynamics on network} \label{sec:epidemics}

We now apply our one-bit total variation denoiser to the analysis of epidemic spreads over networks.
As highlighted in the introduction, numerous compartmental models have been suggested to depict the spread of epidemic processes over networks. We refer the reader to  the recent surveys by \citet{draief_massoulie_2009,pastor_epidemic_2015} or \citet{pare_modeling_2020} for a more extensive review of these methods. We first propose deploying our model to the simple Susceptible - Infectious - Susceptible (SIS), before showing how our approach can be used in a wider variety of epidemic models.

\subsection*{SIS Model}

We start by considering the networked version of the susceptible - infectious - susceptible (SIS) compartmental model, originally introduced in \citet{kermack1927contribution}.  This widely used model has been indeed adapted for network settings in various subsequent studies \citep{ahn_global_2013, pare_analysis_2020}, as detailed in the subsequent paragraph.
We take the vector $(p_i)_{i \in [n]}$ to represent the probabilities of each node being infected, with each entry corresponding to an individual node's infection probability. In the networked setting \citep{pare_modeling_2020}, the evolution of the $p_i$s is governed by the following differential equation:
\begin{equation}\label{SIS_ODE}
\dot{p}_i(t)= (1-p_i(t))\beta_i \sum_{j=1}^{n}\omega_{ij}p_j(t)-\gamma_ip_i(t).	
\end{equation}
Here $\omega_{ij}$ quantifies the strength of the connection from node $j$ to node $i$, $\beta_i>0$ is the infection (or susceptibility) rate and $\gamma_i$ is the healing rate. The \textit{infection rate} $\beta_i$ quantifies the rate at which a susceptible individual (node $i$) in the population is infected through contacts with infected individuals (denoted here by the sum $\sum_{j}w_{ij}p_j$). In many cases, the susceptibility to a virus may not be uniform across the population. Susceptibility to a virus often varies across a population due to factors such as age, pre-existing health conditions, or immunity from previous exposure or vaccination. To account for this heterogeneity in infection rates, one solution consists in identifying relevant subgroups within the population and determining the infection rate for each subgroup.

The \textit{healing rate} $\gamma_i$ models the rate at which an infected individual $i$ transitions between infected and recovered states. When $\gamma_i$ is assumed to be identical across all individuals ($\gamma_i = \gamma, \quad \forall i=1, \cdots, n$), it can be understood as the proportion of infected individuals who recover from the virus in a given time window $\Delta t$. In this case, the healing rate can be estimated by tracking the number of individuals who transition from infected to recovered status over time, and calculated as the number of recoveries divided by the number of confirmed cases during that time window $\Delta t$. Usually, the time window for calculating the recovery rate is determined by the average duration of infection.  In general, to estimate the healing rate, we need information on the typical progression of the disease, including the duration of the infection, the severity of symptoms, and the factors that contribute to recovery. 
Note that the healing rate can vary depending on factors such as age, underlying health conditions, and the availability and effectiveness of treatments.

The SIS model introduced in \eqref{SIS_ODE} can be derived from a subpopulation perspective or through a mean-field approximation of a $2^{n}$ state Markov chain model  \citep{pare_analysis_2020}. As the epidemics dynamic is usually sampled at discrete time steps, we will consider the discrete version of model \eqref{SIS_ODE}:
\begin{equation}\label{SIS_DT}
	p^{k+1}_i= p^{k}_i +  (1-p^{k}_i)\beta_i \sum_{j=1}^{n}\omega_{ij}p^{k}_i-\gamma_ip^{k}_i	
\end{equation}
where $k>0$ is the time index. In matrix form, the model can be re-written as:
\begin{equation}\label{SIS_Matrix}
	p^{k+1}= p^{k} +  \left [(1-P^{k})B \Omega -\Gamma \right] p^{k}
\end{equation}
where $P^{k}=\diag{p^{k}}$, $B=\diag{\beta_i}$, $\Omega=(\omega_{ij})$ and $\Gamma=\diag{\gamma_i}$. We denote by 
$$\mathcal{O}_{k}=1 +  (1-P^{k})B \Omega -\Gamma. $$ 
For this model to be well defined,  we need the following assumption:
\begin{assumption}\label{Ass_SIS_model}
	For all $i\in[n]$, we have $\gamma_i<1$ and $\sum_{j=1}^{n}\beta_i\omega_{ij}<1$.
\end{assumption}
In this case, we have the following result:
	\begin{lemma}[\citep{pare_analysis_2020}]
	Consider the model \eqref{SIS_Matrix} under Assumption \ref{Ass_SIS_model}.
	Suppose $p^{0}_i\in[0,1]$ for all $i\in[n]$. Then, for all $k>0$ and $i\in[n]$, $p^{k}_i\in[0,1]$.
\end{lemma}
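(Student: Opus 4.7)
The plan is a straightforward induction on the time index $k$, with the base case $k=0$ holding by hypothesis. Given that $p^k_i \in [0,1]$ for every $i$, the task is to establish both $p^{k+1}_i \ge 0$ and $p^{k+1}_i \le 1$.

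I would first rewrite the componentwise update in \eqref{SIS_DT} as
\begin{equation*}
p^{k+1}_i = (1-\gamma_i)\,p^{k}_i + (1-p^{k}_i)\,\beta_i \sum_{j=1}^{n}\omega_{ij}\,p^{k}_j,
\end{equation*}
which cleanly separates two nonnegative contributions. The nonnegativity $p^{k+1}_i \ge 0$ then falls out immediately: by the induction hypothesis $p^k_i \in [0,1]$ so $1-p^k_i \ge 0$ and $p^k_j \ge 0$; the weights $\omega_{ij},\beta_i$ are nonnegative; and the coefficient $1-\gamma_i$ is positive by Assumption~\ref{Ass_SIS_model}.

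The upper bound $p^{k+1}_i \le 1$ is where the full strength of Assumption~\ref{Ass_SIS_model} is needed. The idea is to upper-bound the infection term by using $p^k_j \le 1$, which, combined with the condition $\sum_j \beta_i \omega_{ij} < 1$, gives $\beta_i \sum_j \omega_{ij} p^k_j \le \beta_i \sum_j \omega_{ij} \le 1$. Substituting this bound,
\begin{equation*}
p^{k+1}_i \;\le\; (1-\gamma_i)\,p^{k}_i + (1-p^{k}_i) \;=\; 1 - \gamma_i\, p^{k}_i \;\le\; 1,
\end{equation*}
since $\gamma_i \ge 0$ and $p^k_i \ge 0$ by the induction hypothesis. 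This closes the induction.

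There is no real obstacle here; the argument is essentially a direct computation. The only subtlety worth flagging is the implicit assumption that $\beta_i, \omega_{ij}, \gamma_i$ are nonnegative (standard for epidemic parameters but not stated explicitly in Assumption~\ref{Ass_SIS_model}), and the fact that the decomposition into $(1-\gamma_i)p^k_i$ plus $(1-p^k_i)\beta_i\sum_j\omega_{ij}p^k_j$ is exactly what makes both bounds transparent --- the first rewrite is the entire insight of the proof.
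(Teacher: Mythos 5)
Your proof is correct, and the paper itself offers no proof to compare against: the lemma is imported verbatim from \citet{pare_analysis_2020}, and your induction (split the update into $(1-\gamma_i)p^k_i$ plus the nonnegative infection term for the lower bound, then bound the infection term by $1-p^k_i$ via $\sum_j\beta_i\omega_{ij}<1$ for the upper bound) is exactly the standard argument used there. One small point in your favour: you implicitly correct the typo in the paper's display \eqref{SIS_DT}, which writes $\omega_{ij}p^k_i$ inside the sum where the continuous model \eqref{SIS_ODE} and the matrix form \eqref{SIS_Matrix} make clear it should be $\omega_{ij}p^k_j$; your flagged reliance on $\gamma_i,\omega_{ij}\geq 0$ is likewise the right caveat, as Assumption~\ref{Ass_SIS_model} only states the upper bounds.
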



Using data on healthy / infected individuals at time point $k^{0}$, we can estimate ${p}^{k_0}$ using the TV estimator given by \eqref{def:tv_denoiser}. Let \begin{equation}\label{est_evolution_operator}
	\widehat{\mathcal{O}}_{k}=1 +  (1-\widehat P^{k}) B \widehat \Omega - \Gamma
\end{equation} 
 be the evolution operator, where $\widehat P^{k}=\diag(\widehat{p}^{k})$ and $\widehat{\Omega}$ is an estimator of $\Omega$ from noisy observations of the network (e.g. using variational maximum likelihood from some (partial) observations of the network \citep {gaucher_maximum_2021}, as discussed in Remark 3 below).  
Then, by recursively computing $\hat p^{k}= \widehat{\mathcal{O}}_{k-1}\hat p^{k-1} $ for $k>k^{0}$,  we can predict the evolution of the epidemic. In particular, the expected number of infected individuals at time $k$, $ \bE (I^{k})= \sum p^{k}_i=\Vert p^{k}\Vert_{l_1}$ can be estimated using Proposition \ref{prp:risk_l1}. 
\begin{proposition}\label{prp:est_number_infections}
	Consider the discrete time SIS given by \eqref{SIS_DT}. Assume that we observe $Y^{k_0}\in\{0,1\}^{n}$ which encodes the data on the healthy / infected status of a population of $n$  connected individuals at time point $k^{0}$. Fix $\delta\in[0,1]$, $T\subset [m]$ and assume that the contact graph $G$ is connected. 
	Then, with probability at least $1-\delta$, we have 
	\begin{equation*}
		\begin{split}
		&\left \vert 	\Vert 	\widehat p^{k}\Vert_{l_1} -\bE (I^{k})\right \vert\leq  \left (1-\gamma_{\min}+\beta_{\max} \Vert \Omega\Vert_{1,\infty}\right )^{k-k_0}\left (\dfrac{4\rho\sqrt{s\vert T}\vert}{\kappa_{T}}\log\left (\frac{4n^{2}}{\delta}\right )+\right .\\&
	\left .	+2s\sqrt{\frac{\log\left (\frac{4}{\delta}\right )}{n}}+
		ˇ3\sqrt{\rho\,s\left \Vert \left (Dp^{k_0}\right )_{T^{c}}\right \Vert_{1}\log\left (\frac{4n^{2}}{\delta}\right )} + \sqrt{2}\rho \Vert Dp^{k_0}\Vert_0 \log\left (\frac{4n^{2}}{\delta}\right )\right )\enspace . 
		\end{split}
	\end{equation*}
where $\gamma_{\min}=\underset{i}{\min}\;\gamma_i$ and $\beta_{\max}=\underset{i}{\max}\;\beta_i$ 
\end{proposition}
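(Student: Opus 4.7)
The plan is to reduce the question to controlling the $\ell_1$-denoising error at time $k_0$, and then to propagate this error through the (nonlinear) SIS dynamics via a one-step Lipschitz bound. Since both $\widehat p^k$ and $p^k$ lie in $[0,1]^n$ and $\bE(I^k)=\sum_i p^k_i=\Vert p^k\Vert_1$, the triangle inequality immediately yields
$$\bigl|\Vert \widehat p^k\Vert_1-\bE(I^k)\bigr|\;\leq\;\Vert e^k\Vert_1,\qquad e^k:=\widehat p^k-p^k.$$
It then suffices to establish a one-step contraction $\Vert e^{k+1}\Vert_1\leq c\,\Vert e^k\Vert_1$ with $c=1-\gamma_{\min}+\beta_{\max}\Vert\Omega\Vert_{1,\infty}$, iterate $k-k_0$ times, and apply Proposition~\ref{prp:risk_l1} at the base step to bound $\Vert e^{k_0}\Vert_1$ with probability at least $1-\delta$.

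For the one-step bound, assuming $\widehat\Omega=\Omega$ and writing \eqref{SIS_DT}--\eqref{est_evolution_operator} componentwise gives
$$e^{k+1}_i=(1-\gamma_i)\,e^k_i+\beta_i\bigl[(1-\widehat p^k_i)(\Omega\widehat p^k)_i-(1-p^k_i)(\Omega p^k)_i\bigr].$$
The key algebraic identity
$$(1-\widehat p^k_i)(\Omega\widehat p^k)_i-(1-p^k_i)(\Omega p^k)_i=(1-\widehat p^k_i)(\Omega e^k)_i-e^k_i(\Omega p^k)_i$$
decouples the propagation of the error through $\Omega$ from the extra perturbation induced by $\widehat P^k\neq P^k$. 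Using $\widehat p^k,p^k\in[0,1]^n$ together with Assumption~\ref{Ass_SIS_model} to bound $|1-\widehat p^k_i|\leq 1$ and $(\Omega p^k)_i\leq \Vert\Omega\Vert_\infty$, taking absolute values and summing over $i$ then produces an upper bound of the form $(1-\gamma_{\min})\Vert e^k\Vert_1+\beta_{\max}\bigl(\Vert\Omega e^k\Vert_1+\Vert\Omega\Vert_\infty\Vert e^k\Vert_1\bigr)$, which collapses into $c\,\Vert e^k\Vert_1$ once the two $\Omega$-operator-norm contributions are packaged into the single quantity $\Vert\Omega\Vert_{1,\infty}$.

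Iterating this contraction $k-k_0$ times yields $\Vert e^k\Vert_1\leq c^{k-k_0}\Vert e^{k_0}\Vert_1$, and plugging in the high-probability $\ell_1$-risk bound of Proposition~\ref{prp:risk_l1} for $\Vert\widehat p^{k_0}-p^{k_0}\Vert_1$ delivers the stated inequality. I expect the main obstacle to be precisely this one-step Lipschitz bound: because $\mathcal{O}_k$ depends on $p^k$ through $P^k=\diag(p^k)$, the dynamics are genuinely nonlinear in $p^k$ and one cannot simply advance $e^k$ by multiplying with a single fixed matrix. The algebraic identity above is the crucial trick, as it rewrites the difference of products $(1-p)\Omega p$ as a sum of two terms each linear in $e^k$, one governed by $\Omega$ acting on $e^k$ and one by $e^k$ acting against the vector $\Omega p^k$. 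A secondary subtlety, worth flagging but not treated in the statement, is that the argument presumes $\widehat\Omega=\Omega$; if $\Omega$ were only estimated, an additional perturbation term $(\widehat\Omega-\Omega)\widehat p^k$ would need to be carried along the recursion, inflating the base $c$ accordingly.
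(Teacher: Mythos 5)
Your proof follows essentially the same route as the paper's: the triangle inequality on the $\ell_1$ norms, an iterated one-step contraction with constant $1-\gamma_{\min}+\beta_{\max}\Vert\Omega\Vert_{1,\infty}$, and Proposition~\ref{prp:risk_l1} applied at time $k_0$ --- indeed the paper merely asserts the contraction step, whereas you actually derive it via the decomposition $(1-\widehat p^k_i)(\Omega e^k)_i-e^k_i(\Omega p^k)_i$. The one caveat is that your two $\Omega$-contributions (a column-sum bound on $\Vert\Omega e^k\Vert_1$ plus a row-sum bound on $\max_i(\Omega p^k)_i$) add up to roughly $2\beta_{\max}\Vert\Omega\Vert$ rather than $\beta_{\max}\Vert\Omega\Vert_{1,\infty}$, so the final ``packaging'' hides a factor of two; since the paper offers no proof of this step, that discrepancy arguably afflicts its stated constant as much as yours.
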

\begin{proof} We have $\bE (I^{k})=\Vert  p^{k}\Vert_{l_1}$ and
	\begin{equation*}
		\left \vert 	\Vert 	\widehat p^{k}\Vert_{l_1} -\Vert  p^{k}\Vert_{l_1}\right \vert\leq \Vert 	\widehat p^{k} - p^{k}\Vert_{l_1}\leq \left (1-\gamma_{\min}+\beta_{\max} \Vert \Omega\Vert_{1,\infty}\right )^{k-k_0}\Vert \widehat p^{k_0}-p^{k_0}\Vert_{l_1}.
	\end{equation*}
The result follows by bounding $\Vert \widehat p^{k_0}-p^{k_0}\Vert_{l_1}$ using the results of Proposition \ref{prp:risk_l1}. 
\end{proof}
Note that Assumption \ref{Ass_SIS_model} implies $\beta \Vert \Omega\Vert_{1,\infty}\leq 1$.

\begin{remark}
	Beyond SIS model, one can also consider Susceptible - Infected - Recovered (SIR) and Susceptible - Exposed - Infected - Recovered (SEIR) models \citep{pare_modeling_2020}. For a SIR process, for example, let $r^{k}_i$ denotes   the probability of node $i$ being  recovered at time $k$. Then, the following equations govern the evolution of $p^{k}$ and $r^{k}$ :
	\begin{equation*}
		\begin{split}
			p^{k+1}_i&= p^{k}_i +  (1-p^{k}_i-r^{k}_i)\beta_i \sum_{j=1}^{n}\omega_{ij}p^{k}_i-\gamma_ip^{k}_i\\
			r^{k+1}_i&= r^{k}_i -\gamma_ip^{k}_i.
		\end{split}
	\end{equation*}
\end{remark}
Our denoiser can therefore easily be deployed in the context of SIR and SEIR models by simply substituting the evolution operator of Equation~\ref{est_evolution_operator} with a appropriate version.

\begin{remark}[\textbf{Taking into account false positives}]
	In the case of epidemic models on networks, it is also necessary to account for errors that may occur in our observations $Y^{k_0}$   due to false positives. Such false positives might be due to errors due to imperfect test specificity for instance: while gold standard medical tests can have specificity up to 99\%, reporting based on symptom data is usually less accurate, and can thus lead to false positives.
 Here we assume that we know the false positive rate $\alpha$, and that our observations $y^{k}_{i}$ follow Bernoulli distribution with probability $\rho^{k}_i=(1-\alpha) p^{k}_i+ \alpha$. In this case, to estimate $p^{k}$ we add a post-processing step thresholding to zero all the coordinates of $\widehat \rho^{k}$ obtained from \eqref{def:tv_denoiser} that are smaller or equal to $\alpha$. We test the performance of this method on graphs with various topologies, using the experimental setup described in Section~\ref{sec:num_study}. As shown in Figure~\ref{fig:res_fpr_knn2} for the case of the k-NN graph,
 our procedure can significantly improve the accuracy of the nowcasted estimates $\hat{p}$ over the naive estimates $\hat{p}^{\text{naive}} = y_{\text{observed}}$, particularly as the epidemic size increases. Plots for other graph topologies (small-world, preferential attachment, etc) can be found in Appendix~\ref{appendix:results}.

 \begin{figure}[h!]
     \centering
     \includegraphics[width=\textwidth]{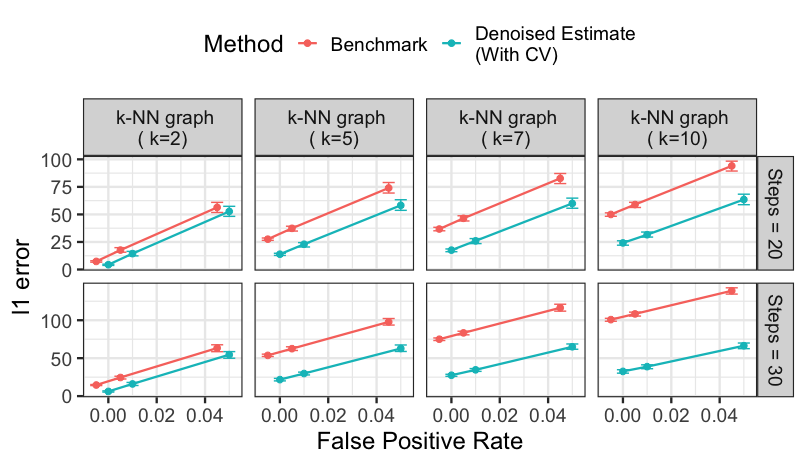}
     \caption{$\ell_1$ error for the k-NN graph, as a function of the noise level (false positive rate) $\alpha \in \{0, 1\%, 5\% \}$, $k_0$ and $\beta$. The healing rate is here fixed at $\gamma=0.1.$. The benchmark values (in red) have been slightly shifted on the x-axis to improve legibility. The results are averaged over 100 independent experiments. Error bars indicate interquartile ranges.}
     \label{fig:res_fpr_knn2}
 \end{figure}
\end{remark}
\begin{remark}[{\bf Estimating the strength of the connections in the network}]
	One of the key quantities that determines the evolution of the epidemic on the network in \eqref{SIS_DT} is the matrix $\Omega=(\omega_{ij})_{(ij)\in [n]\times  [n] }$, which  quantifies the strength of the connections between nodes in the network.  We can estimate $\Omega$ from the observation of a network using the inhomogeneous random graph model assuming that for $i<j$, the elements $A_{ij}$ of the adjacency  matrix $A$ are independent Bernoulli random variables with the success probability $\omega_{ij}\in [0,1]$.  In this case, each entry $\omega_{ij}$ can be interpreted as the probability that the edge $(i,j)$ is present in the graph $\mathcal G$. 
	
	The problem of estimating $\Omega$ from a single observation of the network has been considered in several papers \citep{gao_rate-optimal_2015,klopp_oracle_2017,klopp_optimal_2019,gaucher_maximum_2021} where strong theoretical guarantees have been obtained for this problem.   One of the most popular methods is based on the variational approximation of the likelihood function \citep{celisse_consistency_2012,bickel_asymptotic_2013}.
	Loosely speaking, this method attaches to the latent variables a distribution  with free
	parameters. These parameters are fitted in a way to obtain a distribution close in Kullback-Leibler
	divergence to the true posterior. The
	crux of this approach lies in the realisation that the variational distribution is simpler than the true posterior, which allows to solve
	the optimization problem approximately, see, for example, \citet{gaucher_maximum_2021}.

\end{remark}

\begin{remark} [Learning  parameters $\gamma$ and $\beta$.]
Assume that we know  $p_k, p_{k+1},\dots p_{K}$. We can write the following equation:
\begin{equation*}
\begin{bmatrix}
	p^{k+1}-p^{k}\\
	p^{k+2}-p^{k+1}\\
	\dots\\
	p^{K-1}-p^{K}
\end{bmatrix}
= 	\Phi
\begin{bmatrix}
 \beta\\
 \gamma
\end{bmatrix}
\end{equation*}
where 
\begin{equation*}
\Phi=\begin{bmatrix}
	(I-P^{k})\Omega p^{k}&-p^{k}\\
	\dots\\
	(I-P^{K-1})\Omega p^{K-1}&-p^{K-1}\\
\end{bmatrix}	
\end{equation*} \citet{vrabac_overcoming_2020} propose to estimate the vector of parameters $[\beta,\gamma]^{T}$ using the inverse (or pseudoinverse) of matrix $\Phi$. However, in practice, we do not have access to $p^{j}$ but only to its noisy counterpart $Y_i$, as given by \eqref{model}. In such situations, one can replace $p^{j}$ by its estimator $\widehat{p}^{j}$ given by \eqref{def:tv_denoiser} or \eqref{def:tv_denoiser_missing} and compute the inverse (or pseudoinverse) of matrix $\Phi$ using $\widehat{p}^{j}$. We illustrate the efficiency of this approach in numerical studies in Section \ref{sec:num_study_parameters}.
\end{remark}

\section{Partially observed epidemics}

In \eqref{def:tv_denoiser}, we assume that we observe the state $y_i\in \{0, 1\}$ for any node in the network. However, in many real-world situations, it is challenging to obtain complete information about each individual's disease status. People may not be aware of their infection or may choose not to disclose it. Therefore, methods that can work with partial information become indispensable. In this section, we extend our analysis to the case of partial observations of $Y^{k}$.

Let $\mathfrak{M}=(m_{i})$ be a mask vector. That is, $m_i=1$ if we observe the information about node $i$, and $m_i=0$ if not. We assume that $m_i$ are independent Bernoulli random variables with parameter $\pi_i$: $m_i\sim \mathcal{B}(\pi_i)$ and $\pi_{i}>0$. In the simplest setting, each coefficient is observed with the same probability, i.e., for every $i\in [n]$, $\pi_{i}=\nu$.
In the case of partial observations, we observe two vectors, the mask vector $\mathfrak{M}$ and the vector $\widetilde{Y}= \mathfrak{M} \circ Y$ where $\circ$ denotes the entry-wise (Hadamar) product.
For any $v=(v_i)_{i\in[n]}\in \mathbb{R}^{n}$ we define the weighted $l_2$-norm of $v$ as:
\begin{equation*}
	\Vert v\Vert _{l_2(\pi)}^{2}=\sum_{i}\pi_{i}v^2_{i}.
\end{equation*}
Let 
\begin{equation*}
	\kappa_{\pi}=\left (\underset{i}{\min}\; \pi_i\right )^{-1}
\end{equation*}
 and $\pi^{-1}=\left (\pi_i^{-1}\right )_{i\in [n]}$. It is easy to see that for any $v\in[0,1]^{n}$
\begin{equation}\label{relation_norms}
	\Vert v\Vert_{2}\leq 	\kappa_{\pi}\Vert v\Vert _{l_2(\pi)}.
\end{equation}
We use the partial observations $(\mathfrak{M},\widetilde{Y})$ to construct the TV denoiser $\widehat p_{miss}$ associated to $G$ as any solution of the following minimization problem:
\begin{equation}\label{def:tv_denoiser_missing}
	\widehat{p}_{miss}\in\underset{p\in \mathbb{R}^{n}}
	{\argmin}\left \{\dfrac{1}{n}\Sum m_i\left (\widetilde y_i-p_i\right )^{2}+\lambda \Vert Dp\Vert_1\right \}.
\end{equation}
We prove the following bound on the risk of the TV denoiser estimated on partially observed data:
\begin{theorem}[Risk bound for  TV denoising with partial observations]\label{thm:denoising_missing}

	Assume that $G$ is connected and define the regularization parameter 
	\begin{equation}\label{def: lambda_missing}
		\lambda=\frac{9\sqrt{2}\rho\,\log\left (n\right )}{n}.
	\end{equation}
For $T=\supp \left (Dp^{*}\right )$, the TV denoiser from partial observations, $ \widehat{p}_{miss}$, defined in \eqref{def:tv_denoiser_missing} satisfies 
	\begin{equation} 
		\begin{split}
			{\Vert \widehat p_{miss}-p^{*}\Vert^{2}_{l_2(\pi)}}\leq C\left \{\dfrac{\rho^{2}\, \kappa_{\pi}\vert T\vert }{\kappa^{2}_{T}}+\dfrac{\Vert \pi\Vert_{1}\Vert \pi^{-1}\Vert_{1}}{n^{2}}\right \} \log^{2}(n)
		\end{split}
	\end{equation}
	with probability at least $1-c/n$ where $C,c$ are fixed numerical constants.	
\end{theorem}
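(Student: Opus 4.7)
The plan is to follow the Lasso-type argument used for Theorem~\ref{thm:denoising}, with two new ingredients forced by the partial-observation setting: (i) the stochastic term now involves the random weights $m_i$ as well as the Bernoulli noise $\xi_i$, and (ii) the quadratic form on the left-hand side of the basic inequality is an empirical weighted form $\tfrac{1}{n}\sum_i m_i\Delta_i^2$ that must be compared to the population norm $\tfrac{1}{n}\|\Delta\|^2_{l_2(\pi)}$.

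Step 1 (Basic inequality and stochastic term). Using $m_i\widetilde y_i = m_i y_i$ and $m_i^2=m_i$, optimality of $\widehat p_{miss}$ yields, after expansion,
\begin{equation*}
\tfrac{1}{n}\sum_i m_i\Delta_i^2 \leq \tfrac{2}{n}\sum_i m_i\xi_i\Delta_i + \lambda\bigl(\|Dp^*\|_1 - \|D\widehat p_{miss}\|_1\bigr),
\end{equation*}
with $\Delta=\widehat p_{miss}-p^*$ and $\xi_i=y_i-p^*_i$. Writing $\Delta = \bar\Delta\,\mathbf{1}+SD\Delta$ with $S=D^\dagger$ separates the constant component of $\Delta$ from its variation. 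The variation part is handled by $\langle S^\top(m\circ\xi),D\Delta\rangle \leq \|S^\top(m\circ\xi)\|_\infty\|D\Delta\|_1$; since $m_i\xi_i$ is centered and bounded by $1$, a sub-Gaussian union bound over the $m$ columns $s_j$ of $S$ (each with $\|s_j\|_2\leq\rho$) gives $\|S^\top(m\circ\xi)\|_\infty\lesssim \rho\sqrt{\log n}$ with probability $1-c/n$, matching the choice $\lambda=9\sqrt{2}\rho\log(n)/n$. For the constant part $\bar\Delta\cdot\tfrac{1}{n}\sum m_i\xi_i$, Cauchy--Schwarz gives $|\bar\Delta|\leq n^{-1}\sqrt{\|\pi^{-1}\|_1}\,\|\Delta\|_{l_2(\pi)}$ and Bernstein gives $|\sum m_i\xi_i|\lesssim\sqrt{\|\pi\|_1\log n}$, whose product contributes a term proportional to $n^{-2}\sqrt{\|\pi\|_1\|\pi^{-1}\|_1\log n}\,\|\Delta\|_{l_2(\pi)}$.

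Step 2 (Cone inequality and compatibility). With $T=\supp(Dp^*)$, the usual manipulation gives $\|Dp^*\|_1-\|D\widehat p_{miss}\|_1\leq \|(D\Delta)_T\|_1-\|(D\Delta)_{T^c}\|_1$. Applying Definition~\ref{def:compatibility factor} together with the relation $\|\Delta\|_2\leq\sqrt{\kappa_\pi}\|\Delta\|_{l_2(\pi)}$ from~\eqref{relation_norms} yields
\begin{equation*}
\|(D\Delta)_T\|_1 \leq \tfrac{\sqrt{|T|}}{\kappa_T}\|\Delta\|_2 \leq \tfrac{\sqrt{\kappa_\pi|T|}}{\kappa_T}\|\Delta\|_{l_2(\pi)}.
\end{equation*}
Combined with Step~1, this gives an upper bound on the right-hand side of the basic inequality linear in $\|\Delta\|_{l_2(\pi)}$.

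Step 3 (Empirical-to-population lower bound: the main obstacle). The new and most delicate step is to show that uniformly over the relevant data-dependent set,
\begin{equation*}
\tfrac{1}{n}\sum_i m_i\Delta_i^2 \gtrsim \tfrac{c}{n}\|\Delta\|^2_{l_2(\pi)} - \text{remainder},
\end{equation*}
so that the stochastic LHS of the basic inequality can be replaced by the deterministic population norm. Because $\Delta\in[-1,1]^n$, the terms $m_i\Delta_i^2-\pi_i\Delta_i^2$ are centered, bounded by $1$, and have variance $\pi_i(1-\pi_i)\Delta_i^4\leq\pi_i\Delta_i^2$. Bernstein's inequality applied pointwise, combined with a peeling argument over level sets $\{r/2\leq\|\Delta\|_{l_2(\pi)}\leq r\}$ (or equivalently Talagrand's inequality on the empirical process generated by $v\mapsto v^2$), should produce the desired uniform lower bound with a remainder of order $\log^2(n)\|\pi\|_1\|\pi^{-1}\|_1/n^3$. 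This is precisely where the term $\|\pi\|_1\|\pi^{-1}\|_1/n^2$ in the conclusion originates.

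Step 4 (Conclusion). Assembling Steps 1--3 produces a quadratic inequality of the form $\tfrac{c}{n}\|\Delta\|^2_{l_2(\pi)}\leq a\|\Delta\|_{l_2(\pi)}+\text{const}$ with $a \lesssim \rho\sqrt{\kappa_\pi|T|}\log(n)/(n\kappa_T)+n^{-2}\sqrt{\|\pi\|_1\|\pi^{-1}\|_1\log n}$, which, after the standard $2ab\leq a^2+b^2$ trick and reabsorption of any small multiple of $\|\Delta\|^2_{l_2(\pi)}$ into the LHS, yields the claimed bound. The real obstacle is Step~3: unlike in Theorem~\ref{thm:denoising}, the LHS of the basic inequality is itself random, and obtaining a sharp uniform lower bound in the $\|\cdot\|_{l_2(\pi)}$ norm is what makes the partial-observation analysis substantially harder than the fully observed case and is where a nontrivial empirical-process argument must be invoked.
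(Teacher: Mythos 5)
Your proposal follows essentially the same route as the paper's proof: the same basic inequality, the same split of the stochastic term into its constant component and its $D^{\dagger}D$-component, the same cone-plus-compatibility step, and the same key ingredient — a uniform empirical-to-population lower bound on $\Vert \mathfrak{M}\circ \Delta\Vert_2^2$ over the cone, which the paper establishes via symmetrization, contraction, a peeling argument, and Talagrand's inequality (Lemmas \ref{lm:partial_isometry} and \ref{lm:sup}). Your identification of Step 3 as the genuinely new difficulty, and of peeling/Talagrand as the right tools, matches the paper exactly.
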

The proof of Theorem~\ref{thm:denoising_missing} is provided in Appendix~\ref{appendix:thm:denoising_missing}.  These results therefore allow us to provide theoretical guarantees on the recovery of $p^*$ despite missing observations. In particular, we consider the setting where $\pi_i=v$, in which case, $v$ captures the fraction of observed nodes within the network. In this case, the dependency between the number of missing observations and the quality of the bound becomes clearer: as expected,  the smaller the fraction $v$, the larger the error.

\section{Numerical experiments}\label{sec:num_study}
 In this section, we validate our method through a series of synthetic and semi-synthetic experiments. These controlled experiments allow us to test a variety of data regimes by varying the topology of the contact graph - a dependency highlighted by the presence of the scaling factor $\rho$ in Proposition ~\ref{prp:est_number_infections} - as well as varying the parameters of the epidemic model ($\beta$ and $\gamma$). 

To simulate an epidemic process, we first sample a graph on $n$ nodes, where $n$ is fixed to 1,000 unless otherwise specified.  In the results presented here, we consider a variety of random graph models, including Erd{\"o}s-R{\'e}nyi graphs (where each edge is independently sampled with probability $p$), Stochastic Block Models, Small World Networks (parametrised by their average degree $m$ and a rewiring parameter $p$), Power-Law graphs (parametrised by a parameter, $m$, indicating the number of edges to attach from a new node to existing nodes as the graph is generated), as well as $k$-Nearest Neighbour graphs (generated by uniformly sampling 2D coordinates between 0 and 1, and connecting nodes to their $k$-nearest neighbours in Euclidean norm). To further improve the realism of our experiments and consider graphs reflective of real-world contact network characteristics, we also generate an epidemic process on the Berkeley graph \citep{nr}. The latter is a large social friendship network extracted from Facebook consisting of users (nodes) with edges representing friendship ties. This also allows us to test the performance of our algorithm on a larger-size graph, as this social network has indeed a total of 22,900 nodes and 852,419 edges.

For each experiment, we randomly sample a ``patient 0'', and propagate the epidemic on the graph $\mathcal G$ as per Equation \eqref{SIS_Matrix} for $k_0$ steps, using a specified value of the infection rate $\beta$ and healing rate $\gamma$. The probability of infection for each node here is chosen to be identical: $\beta_{i} = \beta$. We select the interaction strength between each node $i, j$ as $w_{ij} = \frac{1}{d_i \vee d_j}$. This models a situation in which high degree nodes (e.g. influencers) have looser connections to their neighbours, while peripheral nodes with fewer connections might have tighter bounds to their neighbours. The rationale for this choice of prior is reinforced by the observation that the rate of epidemic transmission increases over time. Individuals with a high number of connections, denoted by high degrees, are here presumed to have, in average, brief interactions with each contact. In contrast, individuals with fewer connections tend to represent closer-knit groups, such as family units, where extended interactions are more common. Note that Assumption \ref{Ass_SIS_model} imposes additional conditions on the topology of the graph. For instance, in Erd{\"o}s-R{\'e}nyi (ER) graphs, Assumption \ref{Ass_SIS_model} implies  that $n p_{ER} \beta \leq 1$ so that $p_{ER} = O(\frac{1}{n})$ --- in other words, the ER graph cannot be too dense. This is by no means a limitation of our model, but rather, a way of establishing graph topologies in which our sparsity assumption can be verified.

In our experiments, unless specified otherwise, $\gamma$ is fixed to 0.1. This corresponds to an average recovery time of 9 days, assuming that the recovery is distributed as a negative binomial with a probability $\gamma$ of recovery at each time step.  The result of the simulation is an underlying probability $(p^{\star})_{i\in [n]}$ that each node is infected after $k_0$ epidemic diffusion steps. In this setting, the lower the value of $k_0$, the more localized the epidemic. 
Finally, we generate a vector of observations, sampling each entry (i.e. the infection status of each node) as $y_i \sim \text{Bernouilli}(p^{\star}_i)$.  An example of this diffusion process is presented in Figure~\ref{fig:test}.

\begin{figure}[ht]
   \centering
   
   \begin{subfigure}[b]{0.33\linewidth}
      \includegraphics[width=\linewidth]{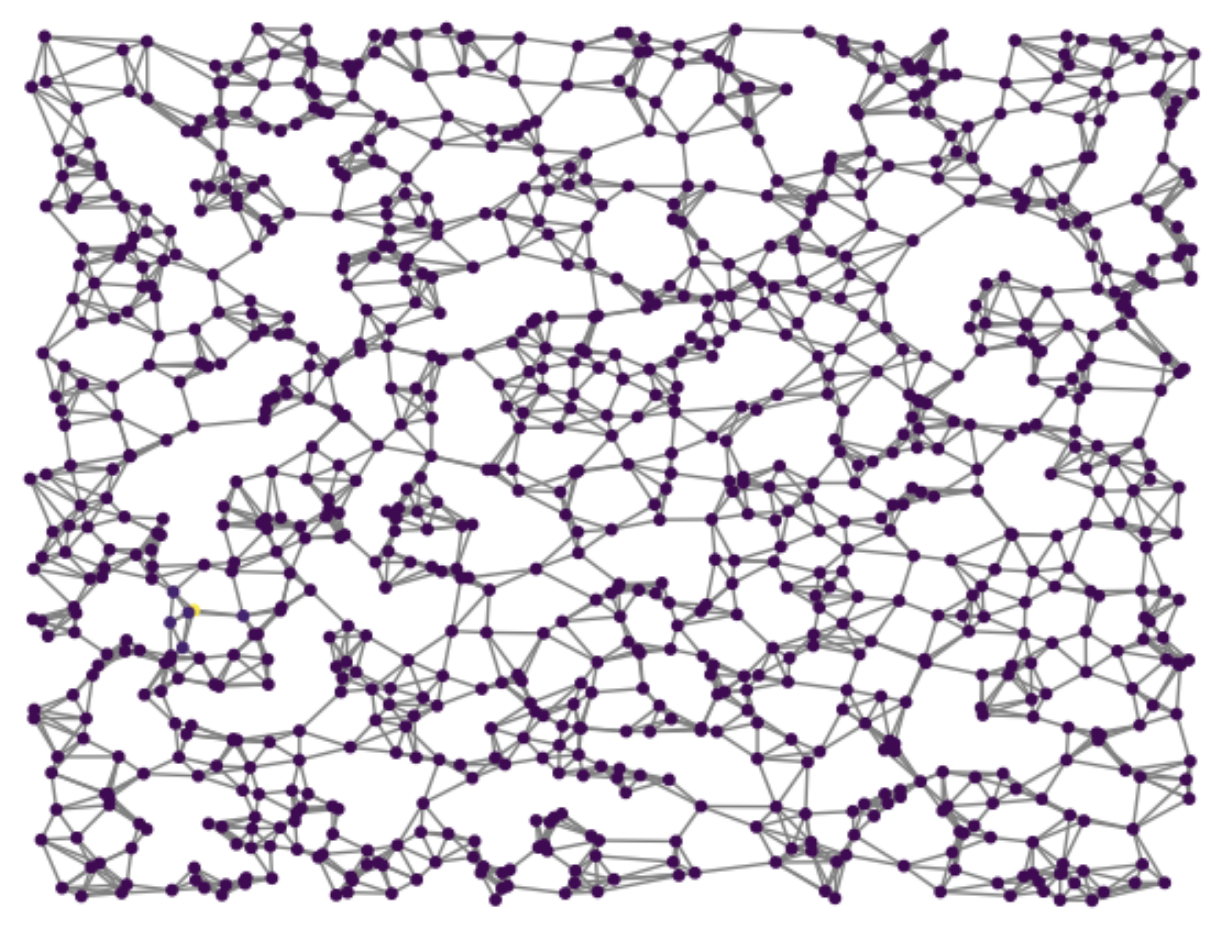}
      \caption{k-Nearest-Neighbour Graph: Time 0}
      \label{fig:sub1}
   \end{subfigure}
   \begin{subfigure}[b]{0.32\linewidth}
      \includegraphics[width=\linewidth]{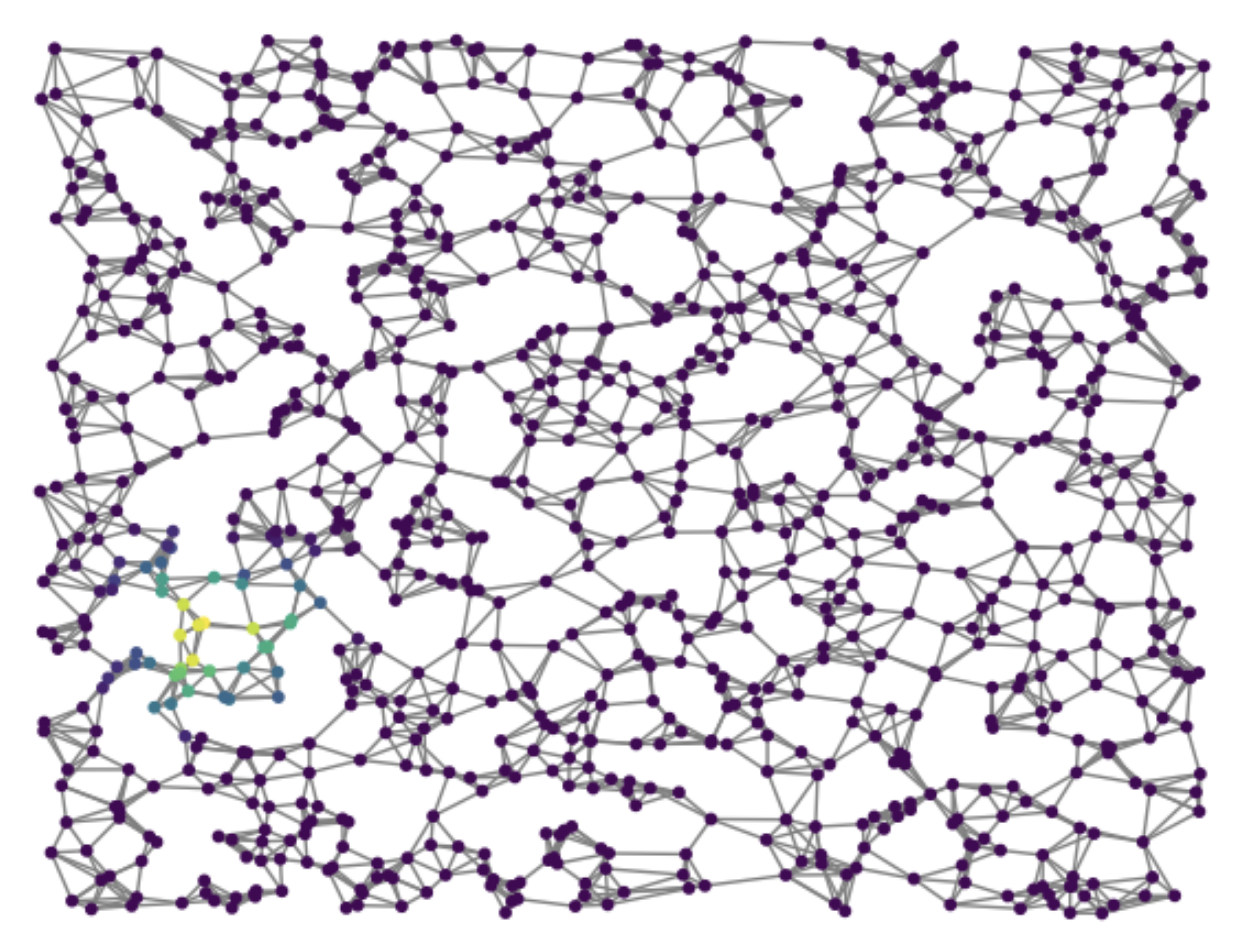}
      \caption{k-Nearest-Neighbour Graph: Time 10}
      \label{fig:sub2}
   \end{subfigure}
      \begin{subfigure}[b]{0.33\linewidth}
      \includegraphics[width=\linewidth]{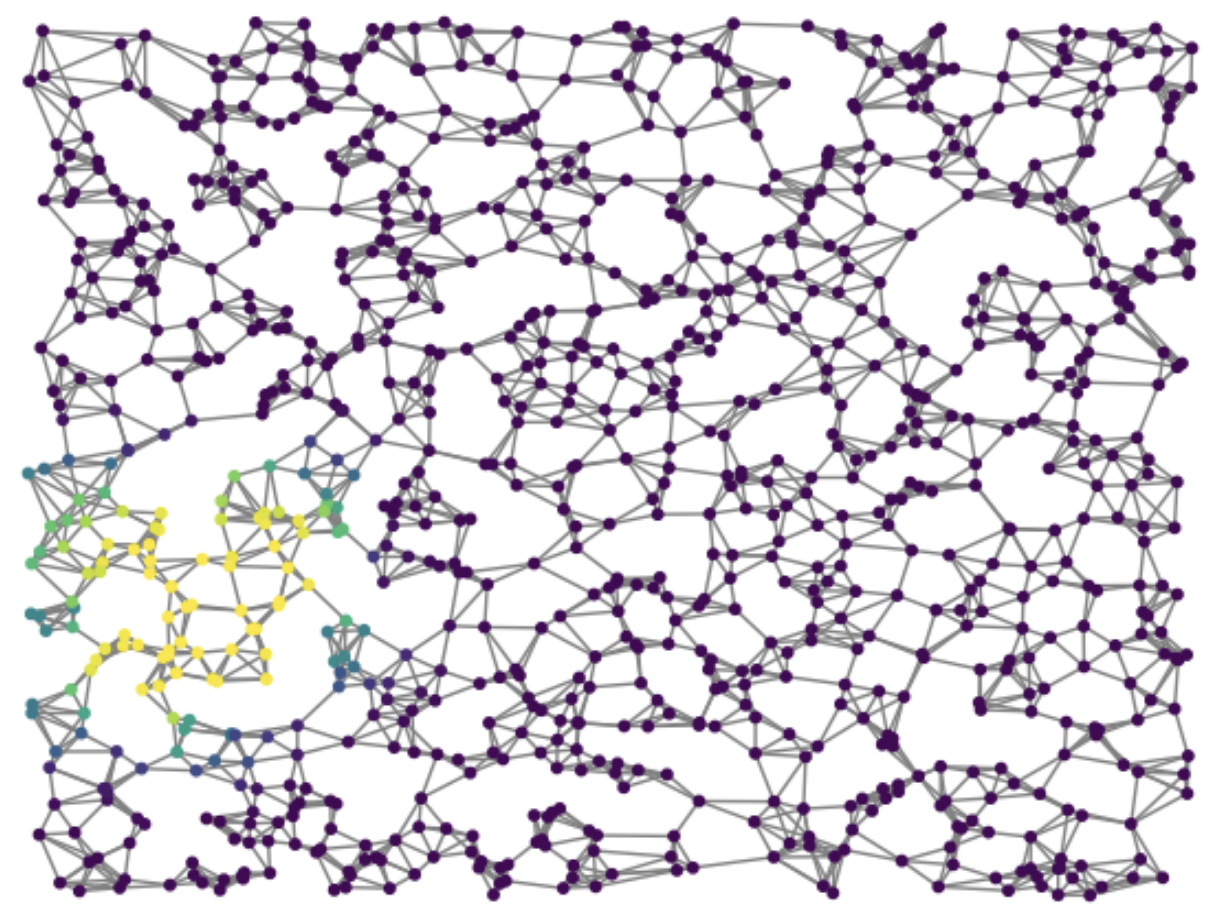}
      \caption{k-Nearest-Neighbour Graph: Time 20}
      \label{fig:sub2}
   \end{subfigure}

      \caption{Visualisation of a SIS epidemic process on a 5-nearest neighbour graph at time 0 (left: one initial infected node), and after $k_0 = 10$ (middle) and $k_0=20$ (right) diffusion steps, for $\beta=0.9$.}
   \label{fig:test}
\end{figure}

\subsection{Algorithm.}
To efficiently solve this problem, we use  the semi smooth-Newton augmented Lagrangian
method (SSNAL) of Sun et al \citep{sun2021convex}.  SSNAL is a scalable and efficient algorithm that was originally designed to solve the following optimization problem:

\begin{equation}\label{eq:SSNAL}
    \min_{X \in \R^{d \times n}} \frac{1}{2} \sum_{i=1}^n \| x_i - a_i\|^2 + \lambda \sum_{i < j } W_{ij} \| x_i -x_j\|_2 
\end{equation}
 where $W_{ij}\in [0,1]$ is a matrix of weights such that $W_{ij}=0$ if nodes $i$ and $j$ are disconnected in the graph, and reflects the proximity between node $i$ and $j$ otherwise. This formulation appears in several settings, particularly in convex clustering problems, where it is used to define a hierarchical clustering of the data.
Note that in our case, since our vector $(p_i)_{i\in [n]}$ is of dimension 1, so $\| p_i -p_j\|_2 = \sqrt{ (p_i - p_j)^2 } = | p_i - p_j|$, and our problem shares the same objective function as Equation~\eqref{eq:SSNAL}. In SSNAL, the optimisation problem \eqref{eq:SSNAL} is solved using an augmented Lagrangian method on equation~\eqref{eq:SSNAL}, and solving the corresponding sub-problems using semi-smooth Newton conjugate gradient updates.
In their work, \citet{sun2021convex} show that as long as the dimension of the feature vector is small (i.e., in our case, equal to 1), this method has the potential to considerably improve existing solvers. We provide a comparison of this solver against alternative solvers (e.g. ADMM~\citep{boyd2011distributed} and CVX \citep{diamond2016cvxpy}) in Appendix~\ref{appendix:results}. In our experiments, an appropriate value for the parameter $\lambda$ is chosen by cross-validation\footnote{The code for all of our simulations and analyses is available at the following link: \texttt{https://github.com/donnate/epidemics}.}. 

\subsection{Results.}
Figures~\ref{fig:denoise_knn_l1} and~\ref{fig:denoise_knn_l2} show the results for the k-NN graph, as the values of $k$, $k_0$ and $\beta$ vary. In these plots,  the $\ell_1$ and $\ell_2$ errors (Figure~\ref{fig:denoise_knn_l1} and ~\ref{fig:denoise_knn_l2}, respectively) of our estimator are presented as a function of the probability of transmission $\beta$. Each column represents a number of diffusion of steps ($k_0$): as emphasized earlier, the higher the value, the more advanced --- and therefore difficult to denoise, from an inference perspective --- the epidemic spread.  Each row corresponds to a different graph topology: in this case, the higher the value of $k$, the denser the contact network. Table~\ref{tab:berkeley} presents the results for the Berkeley graph, and plots corresponding to other topologies are displayed in Figures~\ref{fig:resothers} and~\ref{fig:resother2}, Appendix~\ref{appendix:results}. Overall, we observe that for graphs where the epidemic has not propagated, the denoising method exhibits only marginal improvement over the naive estimator $\hat{p}^{\text{naive}} = y_{\text{observed}}$: in the k-NN and Power-Law graphs for instance, when the number of steps is small, the average number of infections is low (less than 10--- see {Figure~\ref{fig:knn_infection}}), and the one-bit TV denoising method only improve the raw observations by 7\% ($\|\hat{p}-p^*\|_1 = 2.03$ for the 2-NN denoised estimate vs 2.17 for the raw observations at $k_0=10$). However, as the size of the epidemic increases, the effect of the denoising estimator grows considerably. For the 5-NN graph, with $\beta=0.5$ and $k_0=30$, our denoising leads to an almost 50\% decrease of the $\ell_1$ error for the estimated current state of the epidemic.

\begin{figure}[h!]
    \centering
    \includegraphics[width=\textwidth]{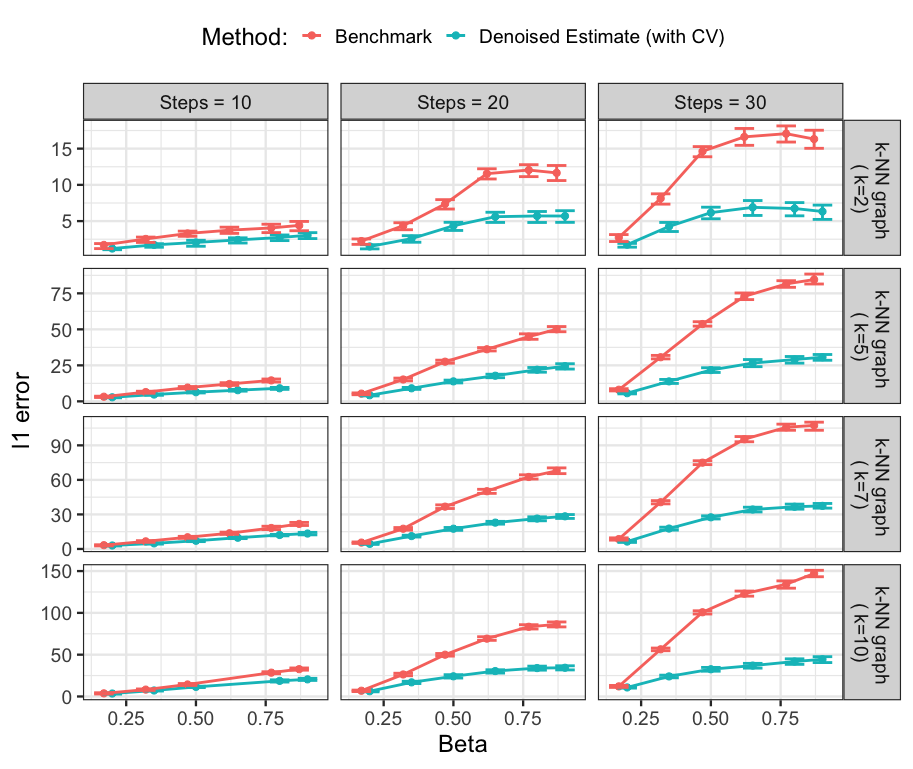}
    \caption{Effect of the denoising (with $\lambda$ chosen through cross validation), as a function of epidemic time (columns) and graph topology (row), measured by the difference in  $\ell_1$-norm between the ground-truth. Points denote the mean $\ell_1 = \|p^* - \hat{p}\|_1$ error over 200 simulations for $k=2$, and 100 simulations for the other topologies. Error bars indicate interquartile ranges. A plot showing the size of the corresponding epidemic can be found in figure~\ref{fig:knn_infection}, {Appendix~\ref{appendix:results}}.}
    \label{fig:denoise_knn_l1}
\end{figure}

\begin{figure}[h!]
    \centering
    \includegraphics[width=\textwidth]{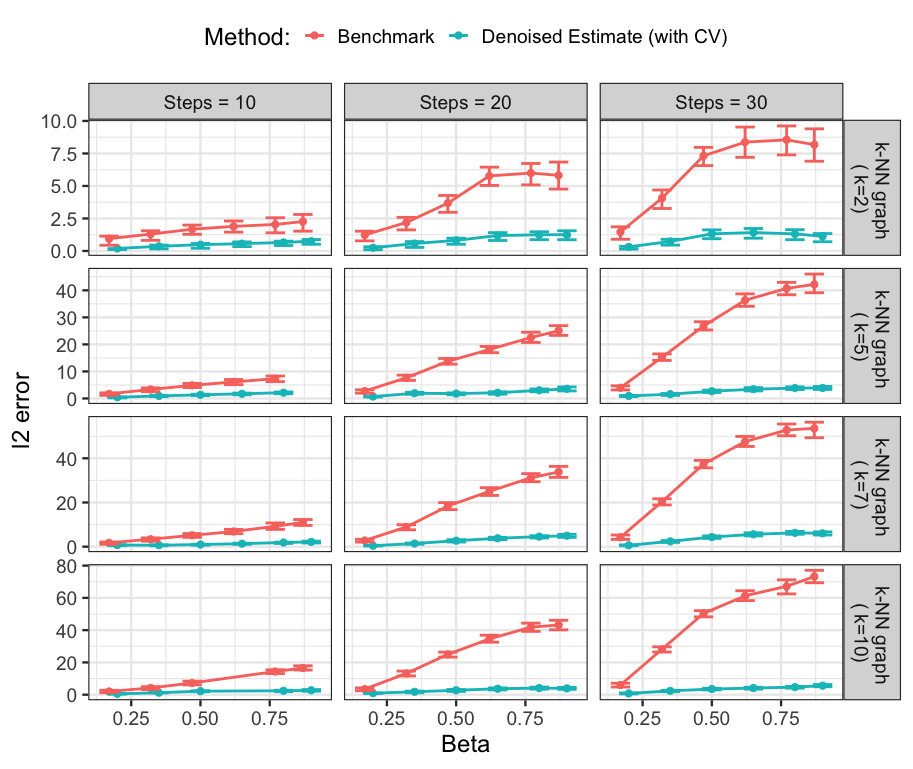}
    \caption{Effect of the denoising (with $\lambda$ chosen through cross validation), as a function of epidemic time (columns) and graph topology (row), measured by the difference in square $\ell_2$-norm between the ground truth and estimated infection parameter $p$. Points denote the mean $\ell_2 = \|p^* - \hat{p}\|_2^2$   error over 200 simulations for $k=2$, and 100 simulations for the other topologies.}
    \label{fig:denoise_knn_l2}
\end{figure}

{Figures~\ref{fig:prediction_knn-2days} shows the $\ell_2$ error in the predicted state of the epidemic in 2 days.} These estimates are obtained by entering the estimated value of $\hat{p}$ in Equation~\eqref{SIS_DT}, and rolling out the updates for two time steps.
 This yields potentially substantial differences in the predictions of the state of the epidemic, as depicted in Figure~\ref{fig:prediction_knn-2days}.

\begin{table}[]
    \centering
\begin{tabular}{|c|c|c|c|}
\hline
  Steps & $\beta$ & $\| \hat{p} - p^*\|_1 $ & $\| \hat{p}^{\text{naive}}- p^*\|_1$ \\
\hline\hline
 5 & 0.7 & 0.00014 & 0.00014\\\hline
5 & 0.9 & 0.00024 & 0.00026\\\hline
10 & 0.3 & 0.00011 & 0.00012\\\hline
 10 & 0.5 & 0.00026 & 0.00032\\\hline
10 & 0.7 & 0.00094 & 0.00124\\\hline
20 & 0.3 & 0.00033 & 0.00042\\\hline
\end{tabular}
    \caption{Accuracy of the estimation (in $\ell_1$ norm) for the Berkeley Graph. Note that in both cases, the accuracy is high, since the epidemic in this case remains very localised.}
    \label{tab:berkeley}
\end{table}

\begin{figure}[h!]
    \centering
    \includegraphics[width=\textwidth]{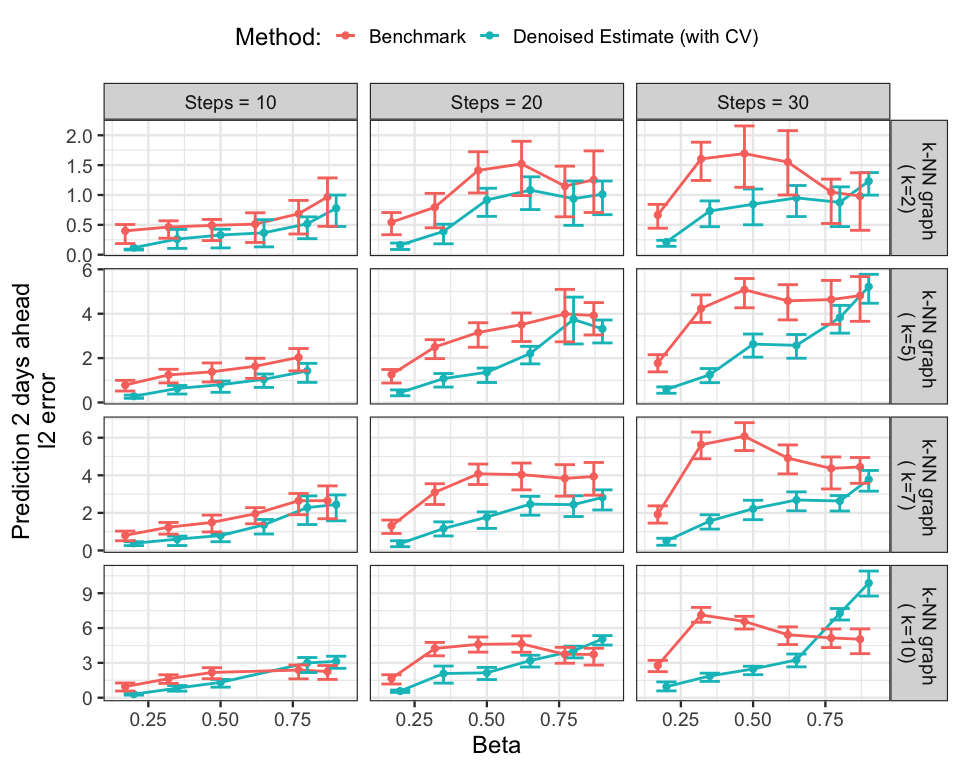}
    \caption{$\ell_2$ error on the state of the epidemic in 2 days, using the predicted $\hat{p}^{(2)}$. Results are compared to the naive estimator that solely uses the observed values: $\hat{p}^{\text{naive}} = y_{\text{observed}}$ to nowcast the epidemic and is then propagated twice as per Equation~\eqref{est_evolution_operator}. The benchmark values (in red) have been slightly shifted on the x-axis to improve legibility. The results are averaged over 100 independent experiments. Error bars indicate interquartile ranges.}
    \label{fig:prediction_knn-2days}
\end{figure}

\subsection{Estimating parameters $\gamma$ and $\beta$.}\label{sec:num_study_parameters}
 To illustrate the efficiency of 1-bit denoiser in the case of estimation of parameters $\gamma$ and $\beta$ we use a kNN-nearest neighbour graph and propagate the epidemic for 20 steps. We vary the values of $\beta$ and $\gamma$, using $k_0=30$. We compare the values of the estimated values of $\beta$ and $\gamma$ as reported by our estimator with those using the same method, but fitting instead the observed values of the state of the epidemic (the naive estimator) . 
The results for the estimates of $\beta$, $\gamma$ and the reproductive number $R_0 = \frac{\beta}{\gamma}$ - an important characteristic of the disease - , averaged over 100 simulations, are reported in Table~\ref{tab:param}. In most cases, our TV denoiser improves the parameter estimates, achieving almost perfect reconstruction for $\beta=0.8$ and $m=5$. In Figure~\ref{fig:params}, we evaluate the effect of the number of observed time steps $k_0$, the graph parameter $k$, and the transmission probability $\beta$ on the estimation error. Overall, we observe that, for this method to be accurate, the epidemic must have a size that is neither too big nor too small: if the epidemic is too big (e.g. as it tends to be the case when $\beta=0.9)$, the sparsity assumption ceases to hold, while if the epidemic is too small (e.g. as it tends to be the case when $\beta=0.2)$, there are too few observations to reliably estimate the parameters.

\begin{table}[]
\resizebox{\textwidth}{!}{%
\begin{tabular}{|c|c|c|c|c|c|c|c|}
\hline
 k-NN& $\beta$ &$\gamma$ & $R_0$ &\textbf{Method} & $\hat \beta$ & $\hat \gamma$ & $\hat R_0$ \\ \hline
 &  &  &  & TV denoiser & \textbf{0.78 (0.58, 1) } & \textbf{0.32 (0.23, 0.44)} & 2.4 (2.27, 2.55)\\ \cline{5-8} 
 & \multirow{-2}{*}{0.35} & \multirow{-2}{*}{0.1} & \multirow{-2}{*}{3.5} & Naive & 0.99 (0.72, 1)  & 0.39 (0.27, 0.52) & \textbf{2.47 (1.91, 2.7) }\\ \cline{2-8} 
 &  &  &  & TV denoiser & \textbf{0.72 (0.58, 0.84)} & \textbf{0.17 (0.13, 0.22)} & \textbf{4.1 (3.89, 4.38)} \\ \cline{5-8} 
 & \multirow{-2}{*}{0.5} & \multirow{-2}{*}{0.1} & \multirow{-2}{*}{5} & Naive & 0.85 (0.69, 1) & 0.21 (0.15, 0.27) & 4.11 (3.76, 4.39) \\ \cline{2-8} 
 &  &  &  & TV denoiser & \textbf{0.84 (0.72, 0.98)} & \textbf{0.11 (0.09, 0.14)} &\bf{7.53 (7.07, 8.2)} \\ \cline{5-8} 
\multirow{-8}{*}{5} & \multirow{-2}{*}{0.8} & \multirow{-2}{*}{0.1} & \multirow{-2}{*}{8} & Naive & 0.94 (0.8, 1) & {0.12 (0.09, 0.14)} & 7.38 (6.21, 8.02) \\ \hline
 &  &  &  & TV denoiser &\textbf{ 0.62 (0.48, 0.78)} & \textbf{0.23 (0.17, 0.31)} & 2.64 (2.51, 2.81) \\ \cline{5-8} 
& \multirow{-2}{*}{0.35} & \multirow{-2}{*}{0.1} & \multirow{-2}{*}{3.5} & Naive & 0.78 (0.6, 0.97) & 0.29 (0.21, 0.38) &  \textbf{ 2.67 (2.53, 2.81)}\\ \cline{2-8} 
&  &  &  & TV denoiser & \textbf{0.66 (0.55, 0.76)} & \textbf{0.15 (0.12, 0.19)} & \textbf{4.29 (4.09, 4.6) } \\ \cline{5-8} 
 & \multirow{-2}{*}{0.5} & \multirow{-2}{*}{0.1} & \multirow{-2}{*}{5} & Naive & 0.77 (0.64, 0.9) & 0.18 (0.14, 0.22) &{4.26 (4.08, 4.53))} \\ \cline{2-8} 
  \multirow{-8}{*}{10} &  &  &  & TV denoiser & \textbf{ 0.83 (0.74, 0.93)} & \textbf{0.11 (0.09, 0.13)} & \textbf{7.75 (7.38, 8.11)}\\ \cline{5-8} 
 & \multirow{-2}{*}{0.8} & \multirow{-2}{*}{0.1} & \multirow{-2}{*}{8} & Naive & 0.91 (0.82, 1)  &  0.12 (0.1, 0.14) & 7.62 (7.11, 7.97) \\ \cline{1-8} 
\end{tabular}
}
\caption{Results for the estimation of the parameters $\gamma$ and $\beta$, as well as the reproductive number $R_0 = \frac{\beta}{\gamma}.$ Intervals denotes interquartile ranges. Results averaged over 500 simulations, for $K_0=30$.}\label{tab:param}
\end{table}

\begin{figure}[h!]
    \centering
    \includegraphics[width=\textwidth]{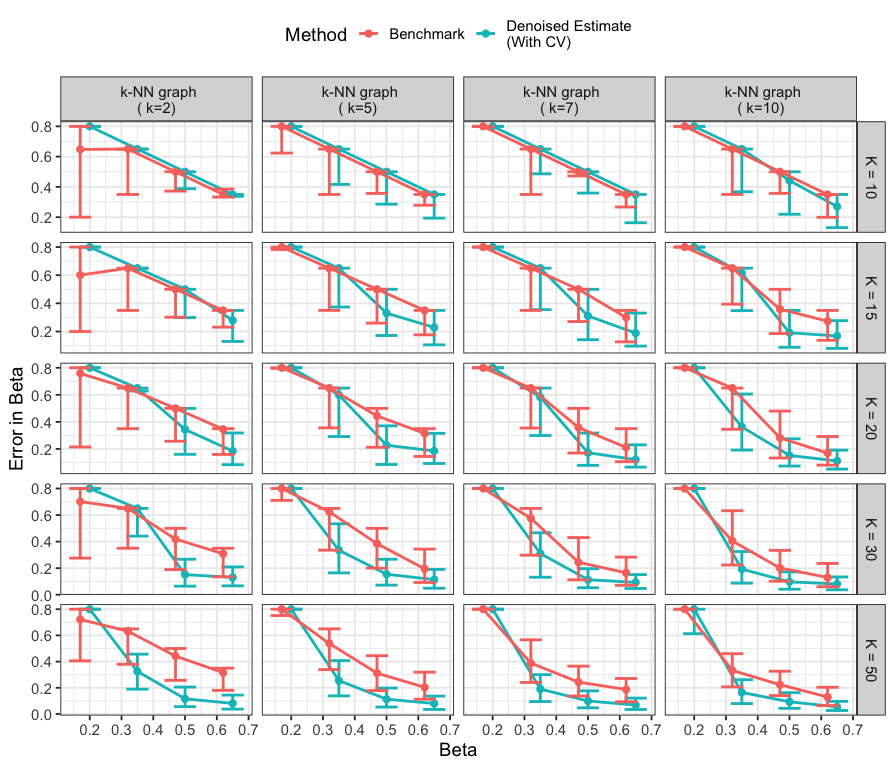}
    \caption{Error in the estimation of $\beta$, as a function of graph density for the kNN graph (parametrized by $k$, represented by each column) and number of observation steps $K$. Results are aggregated over 500 experiments. Points denote the mean error in estimating $\beta$, and error bars indicate the interquartile range. }
    \label{fig:params}
\end{figure}

\subsection{Partially observed epidemic}
 We also study the performance of the 1-bit TV denoiser in the case of partial observations, that is the goodness-of-fit term of our estimator is only applied to nodes whose infectious statuses are observed. We vary the proportion of missing values, and plot the corresponding estimation error. The results are presented in Figure~\ref{fig:missing}. We observe that our estimator can provide substantial improvement over the naive estimator (where all masked entries are filled in with 0) for even large proportions of missing values. This improvement is particularly substantial in cases where the epidemic has progressed enough so that the trivial estimation $p^*_1=0$ is no longer accurate, but not enough to violate our sparsity assumption on the size of the support $\|p^*\|_0.$

\begin{figure}[h!]
    \centering
    \includegraphics[width=\textwidth]{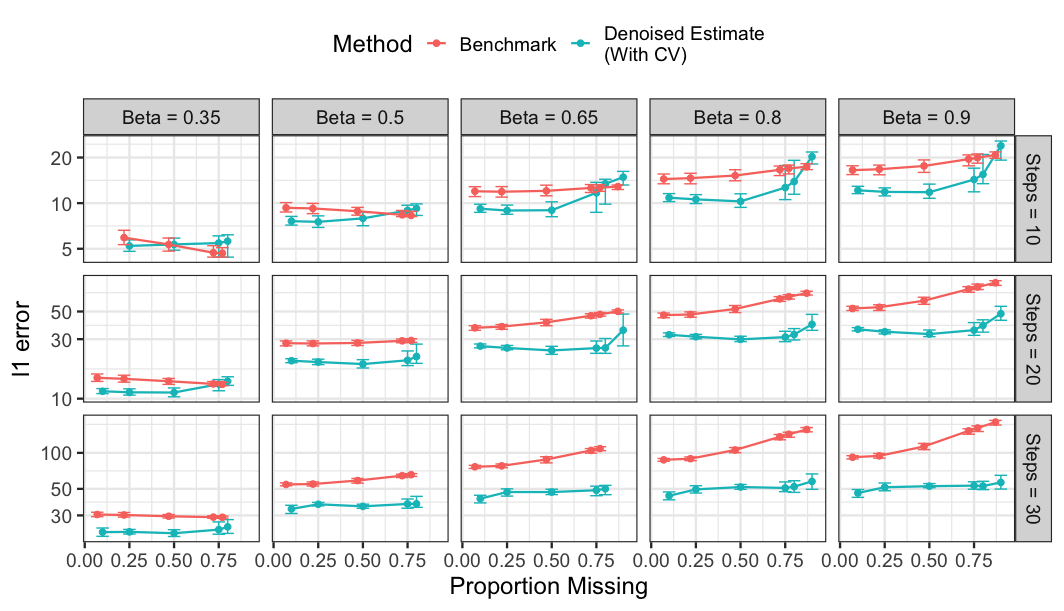}
    \caption{$\ell_1$ error of our one-bit TV denoiser compared to $\hat{p}^{\text{naive}}=y_{\text{observed}}$ in the presence of missing data as a function of the proportion of missing values for the 5-NN graph. Here we vary $k_0 \in \{10, 20, 30\}$ (one value of $k_0$ per row), while columns corresponds to various values of $\beta.$}
    \label{fig:missing}
\end{figure}

\section{Real Data Analysis}
In this section, we propose to analyse two real epidemics datasets using our method.

\subsection{Analysis of the COVID-19 outbreak in California}

To demonstrate the applicability of our method for real-time forecasting (nowcasting), we propose to re-analyse the initial COVID-19 outbreak in California. Our analysis considers the graph formed by the 58 California counties and their corresponding incidence data from January 31, 2020, to April 1, 2020. 
Incidence data are known to suffer from considerable variations throughout the week and underreporting {\citep{graham2020daily,lipsitch2015potential}}, which can significantly skew the estimation of the reproductive number and the predicted epidemic trajectory {\citep{siegenfeld2020models}}. It is, in fact, standard practice to first smooth the raw incidence data through a 7-day rolling average to mitigate irregularities due to weekend reporting (see, for instance, case reporting by the Johns Hopkins Coronavirus Resource Center~\citep{JH}, or by the World Health Organization~\citep{WHO}). However, while the importance of {accounting} for temporal variations seems widely recognized, the extension of this practice to the spatial domain remains unexplored. Yet, due to movement of people between counties, it seems logical to assume that neighbouring counties should expect similar prevalence: any deviations from this pattern could likely be attributed to measurement bias, and should therefore be corrected.
We propose here to correct such biases by employing our method, which we believe leads to a more precise estimation of local disease prevalence. 

We determine the optimal regularization parameter $\lambda$ {by} splitting the data between training and testing. For a given $\lambda$, we train a simple epidemic model on the smoothed data from January 31 to March 14, 2020. This first part of the data is used to estimate the basic reproduction number ($R_0$) and to forecast the following two weeks. Fitting and predictions are performed here using the packages \texttt{R0} and \texttt{earlyR}, which are designed for the estimation and prediction of early-stage epidemics. These packages use the incidence matrix to estimate $R_0$ and predict the epidemic trajectory using simple growth models, which only require as input an estimate of the mean and standard deviation of the serial interval. Following \citet{du2020serial}, we set these values at $\mu=3.95$ and $\sigma=4.75$, respectively.

We then choose the regularization parameter ($\lambda$) that yields the most accurate forecast on the next three days, from March 15 to March 18, 2020. Having chosen an appropriate $\lambda$, we evaluated our model in the period from March 19 to March 31, 2020. The primary reason for this partitioning is that COVID reporting began only in early March for a subset of California counties. However, on March 16, 2020, California implemented a lockdown order, aiming to aggressively reduce $R_0$ and therefore changing the dynamics of the epidemic. However, we anticipate that the incidence immediately after the lockdown reflects the initial dynamics of the epidemic spread, as such changes are not instantaneous given the incubation period of the virus of around 4 to 5 days.

Given the lack of individual-level data and having only access county-level aggregates, we find it easier to model each county as a 'super node' - or equivalently, the entire population as a graph such that each inhabitant is connected to all other inhabitants of the same county, as well as those in neighbouring counties. Letting $n_c$ denote the number of inhabitants in county $c$, this approach thus requires solving the optimisation problem defined as follows:
\begin{equation}
    \argmin_{p \in \R^C} \sum_{c \in [C]}  I_c (1- p_c)^2 + (n_c - I_c)p_c^2 + \lambda \sum_{c, c'} n_c n_c' w_{c,c'} | p_c- p_{c'} |
\end{equation}
where $w_{c,c'}$ is a set of appropriately chosen weights. In the previous equation, $I_c$ denotes the total number of infected inhabitants in county $c$, while $n_c-I_c$ represents the number of susceptible individuals in county $c$.  Here, consistently with our simulation experiments, we pick $w_{c,c'} = \frac{1}{\sum_{c'\sim c} n_{c'}}$ the inverse of the degree of each node. 

The results of this additional preprocessing procedure are presented in Table~\ref{ref:tab_covid}. Overall, for 85\% of the counties on which the epidemic could be estimated (meaning that these counties had at least one COVID case before March 20th), our estimator provides a significant improvement over the raw data. This improvement is particularly striking in well populated and connected areas, such as the San Francisco Bay, where our estimator systematically improves the naive estimator.

\begin{table}[]
\centering
\begin{tabular}{|c|c|c|}
\hline
\textbf{county} & \textbf{\begin{tabular}[c]{@{}c@{}}MSE Prediction\\ (Raw Data)\end{tabular}} & \textbf{\begin{tabular}[c]{@{}c@{}}MSE Prediction\\ (Smoothed Data)\end{tabular}} \\ \hline
Contra Costa & 60 & 27 \\ \hline
Fresno & 0 & 347 \\ \hline
Humboldt & 0 & 0 \\ \hline
Los Angeles & 652 & 425 \\ \hline
Marin & 285 & 17 \\ \hline
Orange & 136 & 73 \\ \hline
Placer & 0 & 1 \\ \hline
Riverside & 32 & 25 \\ \hline
Sacramento & 25 & 18 \\ \hline
San Benito & 1,382 & 297 \\ \hline
San Diego & 2,140 & 716 \\ \hline
San Francisco & 63 & 23 \\ \hline
San Joaquin & 2,157 & 351 \\ \hline
San Mateo & 158 & 114 \\ \hline
Santa Clara & 101 & 78 \\ \hline
Solano & 2 & 0 \\ \hline
Sonoma & 147 & 3 \\ \hline
Stanislaus & 784 & 211 \\ \hline
Tulare & 4,558 & 3 \\ \hline
Ventura & 8,679 & 3,043 \\ \hline
Yolo & 3 & 32 \\ \hline
\end{tabular}
\caption{Comparison of the results in terms of the MSE between predicted epidemic trajectory and realized trajectory for the period from March 20th to March 30th 2023, using the raw data or the spatially smoothed data from our estimator.}\label{ref:tab_covid}
\end{table}

\subsection{The ExFLU Dataset of Aiello et al. \citep{aiello2016design}}

In this second dataset, we reinvestigate the epidemic dataset of \citet{aiello2016design}. This dataset records a flu epidemic outbreak over a college campus in Winter 2013. The study's initial aim was to examine the influence of social interventions {on } the transmission of respiratory infections. A total of 590 university students were enrolled in the study, engaging in weekly surveys regarding symptoms resembling influenza-like illness (ILI) and their social interactions. 
This enables the generation of a weekly contact graph,  which helps in mapping potential epidemic propagation phenomena. 
More specifically here, to evaluate the potential of our method to denoise contact tracing networks, we consider the ego network of each participant declaring experiencing ILI sympoms induced by the contact graph spanning the week of, and the week  after their symptom onset. We define as a contagion event any symptomatic or asymptomatic individual having contracted an ILI illness based on the day of or after the contact's reported symptom onset. This creates a total of 97 ego networks. We then evaluate the ability of our one-bit denoising approach in correctly estimating the ILI statuses of the nodes within this contact network the following week (see Figure \ref{fig:aiello}).

To select the appropriate regularisation parameter, we split the ego networks into two sets, and select lambda on the first, and evaluate on the second. We randomize this procedure and average the results over $100$ iterations. As in the previous example, we quantify the error using the $l_2$ norm. We obtain, in average, an $l_2$ error of $0.0482$, compared to $0.0493$ when setting the regularisation parameter to $0$. Despite the noisy nature of this dataset, we are therefore able to improve upon the naive estimator $(\lambda=0).$ This suggests that our TV denoiser might be effective for epidemic nowcasting using contact tracing networks.

\begin{figure}[h!]
    \centering
    \includegraphics[width=\textwidth]{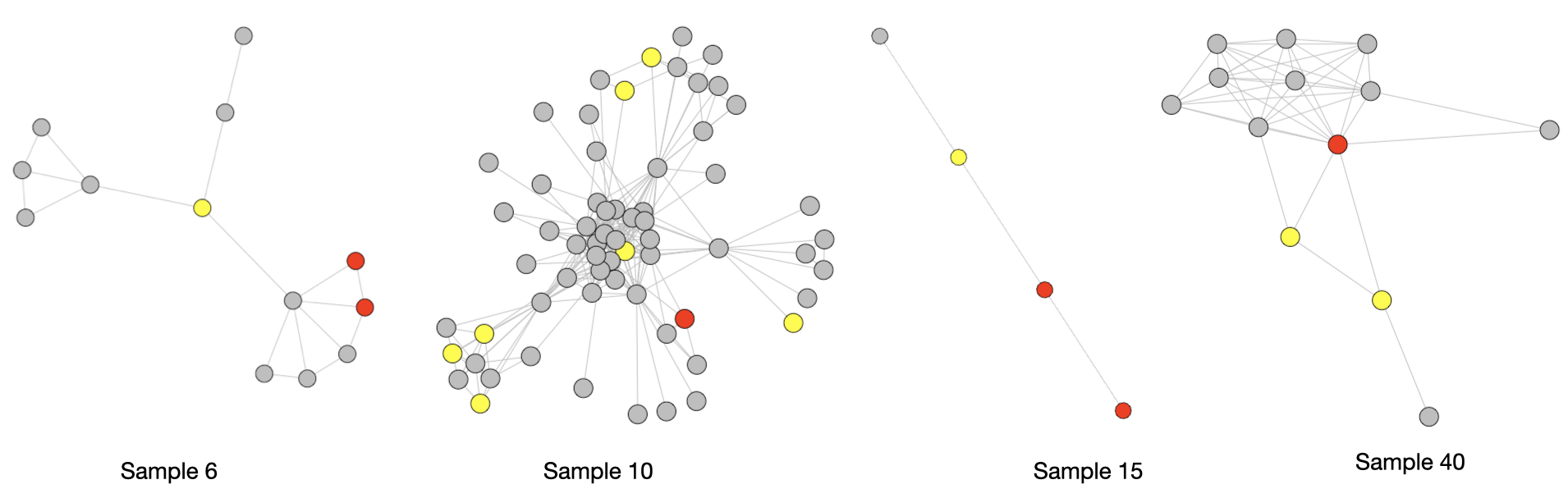}
    \caption{Examples of contact networks induced by taking the 2-hop induced subgraph around participants exhibiting ILI-like symptoms. The contact network here represents all reported contacts between participants on the week of and the week following the symptom onset of the egonet source node. Colours denote different illness statuses. Grey: healthy participant. Yellow: participant sick with ILI symptoms on the week prior to the source node's symptom onset. Red: participant healthy before the source node's symptom onset and developing ILI symptoms in the week following the source node's symptom onset. }
    \label{fig:aiello}
\end{figure}

\section{Related Works}\label{sec:related_works}
In this section we provide a more in-depth overview of related works.

\xhdr{Total Variation Denoising} Total Variation (TV) denoising  (also known as Fused Lasso) has experienced a considerable practical success in the area of image denoising. First introduced in \citet{rudin_nonlinear_1992}, it benefits from the strong theoretical guarantee in the case of Gaussian noise. We refer the readers to the works of \citet{mammen_locally_1997,dalalyan_prediction_2017} for chain graphs and \citet{wang_trend_2016,hutter_optimal_2016, padilla_dfs_2018} for more general graphs.
Earlier works on TV denoising  \citep{mammen_locally_1997} marked an important step by obtaining the first  statistical guarantees.
Another notable development in understanding the behaviour of the TV denoiser was presented in \citet{needell_near-optimal_2013}. In their work, the authors focused on cases where the noise possesses a small $l_2$ norm but is otherwise arbitrary. This framework is commonly encountered in the literature on noisy compressed sensing. An analysis of the statistical
performance of total variation image denoising was performed in \citet{wang_trend_2016} for sub-Gaussian noise and for Gaussian white noise model in \citet{hutter_optimal_2016}  where, in particular, optimal fast rates for the 2D grid are obtained. By contrast, in this work, we built on the rich literature on total variation denoising and provide an analysis of its statistical guarantees in the case of binary observations (see Section \ref{Sec:signal_denoising}). \\

\xhdr{Epidemic spreads on networks} Works on epidemic spread on networks and, more generally, on graph dynamical systems, can be found in various areas. 
Common types of network epidemic models include SIR  and SEIR models, Agent-Based models, Percolation models and etc. Network-based stochastic models, such as the Susceptible-Infectious-Recovered (SIR) and Susceptible-Exposed-Infectious-Recovered (SEIR) models, are commonly used to simulate epidemic dynamics. Transition rates between compartments are estimated based on network structure and contact patterns \citep{newman_spread_2002, kenah_network-based_2007}. Researchers have extended these models to incorporate heterogeneous mixing patterns and complex network structures \citep{miller_incorporating_2013}.
Agent-based models (ABMs)  have gained popularity for simulating realistic human behavior within network contexts. For example, \citet{balcan_multiscale_2009} used ABMs to investigate the impact of human mobility on epidemic spread, considering factors like travel patterns and contact behaviour. These models provide insights into the effectiveness of public health interventions and behavioral changes during outbreaks.

Stochastic network models, such as the temporal exponential random graph model (TERGM), capture dynamic changes in network structure over time. They are employed to study how changes in contact patterns impact epidemic dynamics (see, for example, \citet{volz_susceptibleinfectedrecovered_2007}). These models provide a statistical framework for exploring network evolution during epidemics.

Discrete-time Markov Processes (DTMP) and Continuous-time Markov Processes (CTMP) are commonly used in network epidemic models to represent the dynamics of disease spread or information diffusion within a population \citep{gomez_discrete-time_2010, zhang_diffusion_2014}. In DTMP, time is divided into discrete intervals or time steps. For example, in a SIR (Susceptible-Infectious-Recovered) model, susceptible individuals may become infected with a certain probability during each time step. In Continuous-time Markov Processes (CTMP), time is treated as a continuous variable, allowing for modeling events that occur at any moment in time. State transitions can occur at any time, not just at discrete time steps.
Events such as contacts, infections, recoveries, or interventions are represented as stochastic processes. The rates at which these events occur depend on factors such as contact frequencies, transmission probabilities, and recovery rates.

Bayesian inference methods have gained popularity for parameter estimation in epidemic models. Markov Chain Monte Carlo (MCMC) algorithms, such as the Metropolis-Hastings algorithm, are used to sample from posterior distributions of model parameters \citep{jewell_bayesian_2009,hu_bayesian_2017}. Bayesian frameworks allow, in particular for uncertainty quantification and model selection. However, to the best of our knowledge, no statistical guarantees have yet been provided for the recovery of the parameters.

Sampling and contact tracing data are essential for estimating epidemic parameters. \citet{lloyd-smith_superspreading_2005} used network-based contact tracing to reconstruct transmission chains in emerging infectious diseases. This approach provides insights into superspreading events and helps identify high-risk nodes. Statistical techniques for inferring contact networks from observed epidemic data have been also developed. For instance, \citet{rocha_simulated_2011} introduced a likelihood-based method for reconstructing contact networks using observed infection times. Such methods are crucial for understanding the underlying network structure in real-world epidemics.

The study of infectious disease spread in the context of partially observed epidemics is essential in real-world situations where complete information about individuals' health status is often unavailable or uncertain. Research in this area has addressed various aspects of the problem, including modeling, inference, and control strategies (for a review of the latter, see for instance \citet{britton_2016}).
Several models have been developed that incorporate partial information. For example, the Susceptible-Infectious-Unknown (SIU) model extends the classic Susceptible-Infectious-Recovered (SIR) model to account for individuals with unknown health status.

Within the broader context of partially-observed Markov processes, several techniques have been incorporated into the R package POMP \citep{king_2016}. Notably, one of these methods is maximum iterated filtering (MIF) \citep{ionides_2011}. Other approaches include Gaussian approximation of
the epidemic density-dependent Markovian jump process \citep{narci_2021}.
Additionally, likelihood-free approaches such as approximate Bayesian computation based on sequential Monte Carlo \citep{toni_2009,Sisson_2007} and particle Markov chain Monte Carlo \citep{andrieu_2010} have emerged. These approaches aim to infer the hidden states of individuals (infected or healthy) based on observed data, such as reported cases or symptom data.


\section*{Acknowledgments}
Claire Donnat gratefully acknowledges {support} from NSF Award Number RI:2238616, as well as the resources provided by the University of Chicago’s Research Computing Center.
The work of Olga Klopp was funded by CY Initiative (grant “Investissements d’Avenir” ANR-16-IDEX-0008) and Labex MME-DII (ANR11-LBX-0023-01). This work was done  while O. Klopp  and C. Donnat were visiting the Simons Institute for the Theory of Computing.  The work of N. Verzelen has been partially supported by ANR-21-CE23-0035 (ASCAI).

\section*{Supplementary material}
\label{SM}
The Supplementary Material includes the proofs of the theorems and proposition established in this paper, as well as further results stemming from our synthetic experiments.

\appendix
\appendixone
\section{Proof of Theorem \ref{thm:denoising}}\label{appendix:thm:denoising}
The proof of  Theorem \ref{thm:denoising} is close in spirit to the proof of the Theorem 2 in \citep{hutter_optimal_2016} the main difference being the control of the noise term. 

Applying the chain rule, we have that  the subdifferential of the $l_1$ term is given by
\[\partial \Vert Dp\Vert_{1}=D^{T}\sign (Dp)\]
where 
\begin{equation*}
	\sign(x)_i=\begin{cases}
		1& \text{if } x_i>0,\\
		[-1,1]& \text{if } x_i=0,\\
		-1& \text{if} x_i<0.\\
	\end{cases}
\end{equation*}
Using first order optimality condition for the convex problem \eqref{def:tv_denoiser}, we derive that there exists $z\in \sign (D\widehat p)$ such that  for any $\bar p\in \mathbb{R}^n$,    we have
\begin{equation}\label{eq:1}
	\dfrac{2}{n}\langle \bar p, Y-\widehat{p}\rangle= \lambda \langle \bar{p},D^{T}z\rangle=\lambda \langle D\bar{p},z\rangle\enspace .
\end{equation}
This  implies that 
\[\dfrac{2}{n}\langle \widehat p, Y-\widehat{p}\rangle=\lambda \Vert D\widehat{p}\Vert_{1}\quad \text{and}
\]
\[\dfrac{2}{n}\langle  p^{*}, Y-\widehat{p}\rangle\leq \lambda \Vert D{p^{*}}\Vert_{1}.
\]
Subtracting the two terms above, we get 
\[\dfrac{2}{n}\Vert p^{*}-\widehat{p}\Vert^{2}_{2}\leq \dfrac{2}{n}\langle\xi,\widehat{p}-p^{*}\rangle+\lambda\Vert Dp^{*}\Vert_{1}-\lambda\Vert D\widehat p\Vert_{1}.\]
Next, we control the error term. Let $\Pi$ denote the projection on $\ker D$, the kernel of $D$. Then, by the definition of $D^{\dagger}$, $D^{\dagger}D=I-\Pi$ is the projection on $(\ker D)^{\perp}$. We have that $\ker D= \ker (D^{T}D)=\ker L$. We assume that $G$ is connected, which implies that zero is an eigenvalue of $L$ of multiplicity one and $\dim\ker D=1$. We can write
	\begin{align}\label{eq:2}
		\langle \xi  , \widehat{p}- p^{*}\rangle&= \langle \Pi\xi  , \widehat{p}- p^{*}\rangle + \langle (I-\Pi)\xi  , \widehat{p}- p^{*}\rangle\nonumber\\
		&= \langle \Pi\xi  , \Pi(\widehat{p}- p^{*})\rangle + \langle D^{\dagger}D\,\xi  , \widehat{p}- p^{*}\rangle
	\nonumber	\\
		&= \langle \Pi\xi  , \Pi(\widehat{p}- p^{*})\rangle + \langle \xi  , D^{\dagger}D\left (\widehat{p}- p^{*}\right )\rangle
	\nonumber	\\
		&= \langle \Pi\xi  , \Pi(\widehat{p}- p^{*})\rangle + \langle \left (D^{\dagger}\right )^{T} \xi  , D\left (\widehat{p}- p^{*}\right )\rangle \enspace . 
	\end{align}
To bound the right hand side in \eqref{eq:2}, we first use the H\"older inequality to get
\begin{align*}
	\langle \Pi\xi  , \Pi(\widehat{p}- p^{*})\rangle\leq \Vert \Pi\xi\Vert_{2}\Vert \Pi(\widehat{p}- p^{*})\Vert_{2}\enspace .
\end{align*}
Note that $L\,\mathbf{1}_{n}=\mathbf{0}_{n}$ and, for any vector $v\in \mathbb{R}^{n}$, $\Pi v=\frac{1}{n}\langle v,\mathbf{1}_{n}\rangle \mathbf{1}_{n}$ which implies 
$$\Vert \Pi\xi\Vert_{2}= \tfrac{1}{\sqrt{n}}\vert \langle \xi,\mathbf{1}_{n}\rangle\vert.$$
Now, using Hoeffding's inequality 
 we get that, with probability at least $1-\delta/2$, 
\begin{equation}\label{eq:3}
\vert \langle \xi,\mathbf{1}_{n}\rangle\vert=\left \vert \sum_{i=1}^{n}\xi_i\right \vert\leq \sqrt{2\Vert p^{*}\Vert_{0}\log\left (\tfrac{4}{\delta}\right )}\,\quad \text{and}\quad \Vert \Pi\xi\Vert_{2}\leq \sqrt{\frac{2\Vert p^{*}\Vert_{0}\,\log\left (\tfrac{4}{\delta}\right )}{n}}\enspace .
\end{equation}
On the other hand, we get
\begin{equation}\label{eq:4}
	\Vert \Pi(\widehat{p}- p^{*})\Vert_{2}=\tfrac{1}{\sqrt{n}}\vert \langle \widehat{p}- p^{*},\mathbf{1}_{n}\rangle\vert\leq \tfrac{1}{\sqrt{n}}\Vert  \widehat{p}- p^{*}\Vert_{1}
\end{equation} 
and (\ref{eq:3}--\ref{eq:4}) imply
\begin{equation}\label{eq:5}
	\langle \Pi\xi  , \Pi(\widehat{p}- p^{*})\rangle\leq \frac{\sqrt{2\Vert p^{*}\Vert_{0}\,\log\left (\tfrac{4}{\delta}\right )}}{n}\Vert  \widehat{p}- p^{*}\Vert_{1}.
\end{equation}
For the second term in \eqref{eq:2} we have that, with probability at least $1-\delta/2$,
\begin{align}\label{eq:6}
\langle \left (D^{\dagger}\right )^{T} \xi  , D\left (\widehat{p}- p^{*}\right )\rangle &\leq \Vert \left (D^{\dagger}\right )^{T} \xi\Vert_{\infty}	\Vert D\left (\widehat{p}- p^{*}\right )\Vert_{1}\nonumber\\
&\leq \sqrt{2}\rho\log\left (\frac{4n^{2}}{\delta}\right )	\Vert D\left (\widehat{p}- p^{*}\right )\Vert_{1}\enspace ,
\end{align}
where the last inequality follows from Lemma \ref{lem:noise_term}. Putting together \eqref{eq:5} and \eqref{eq:6} we get that, with probability at least $1-\delta$,
\begin{align*}
	\langle \xi  , \widehat{p}- p^{*}\rangle\leq \sqrt{2}\rho\log\left (\frac{4n^{2}}{\delta}\right )	\Vert D\left (\widehat{p}- p^{*}\right )\Vert_{1}+ \frac{\sqrt{2\|p^*\|_0\,}}{n}\log\left (\frac{4}{\delta}\right )\Vert  \widehat{p}- p^{*}\Vert_{1}\enspace , 
	\end{align*}
	since $\|p^*\|_1\leq \|p^*\|_0$. This leads us to
\begin{align*}
\Vert p^{*}-\widehat{p}\Vert^{2}_{2}\leq \sqrt{2}\rho\log\left (\frac{4n^{2}}{\delta}\right )	\Vert D\left (\widehat{p}- p^{*}\right )\Vert_{1}&+n\lambda\Vert Dp^{*}\Vert_{1}-n\lambda\Vert D\widehat p\Vert_{1}\nonumber\\
&+ \frac{\sqrt{2\|p^*\|_0\,\log\left (\tfrac{4}{\delta}\right )}}{n}\Vert  \widehat{p}- p^{*}\Vert_{1}\enspace .
\end{align*}
Using $	n\lambda=\sqrt{2}\rho\log\left (\frac{4n^{2}}{\delta}\right )$ and the triangle inequality we get
\begin{align*}
	\Vert p^{*}-\widehat{p}\Vert^{2}_{2}\leq 2n\lambda\left (	\Vert D\left (\widehat{p}- p^{*}\right )_{T}\Vert_{1}+\Vert \left (Dp^{*}\right )_{T^{c}}\Vert_{1}\right )+ \frac{2\|p^*\|_0\log\left (\tfrac{4}{\delta}\right )}{n}+\frac{1}{4}\Vert  \widehat{p}- p^{*}\Vert^{2}_{2}\enspace .
\end{align*}
Now, applying Definition \ref{def:compatibility factor} of compatibility factor we can write
\begin{align*}
	\frac{1}{2}\Vert p^{*}-\widehat{p}\Vert^{2}_{2}\leq \dfrac{8\rho^{2}\vert T\vert}{\kappa^{2}_{T}}\log^{2}\left (\frac{4n^{2}}{\delta}\right )+2\sqrt{2}\rho\log\left (\frac{4n^{2}}{\delta}\right )\Vert \left (Dp^{*}\right )_{T^{c}}\Vert_{1}+ \frac{2\|p^*\|_0\log\left (\tfrac{4}{\delta}\right )}{n}
\end{align*}
which proves the statement of Theorem \ref{thm:denoising}.

\subsection{Lemma \ref{lem:noise_term}}
\begin{lemma}\label{lem:noise_term}
		Fix $\delta\in[0,1]$. For $i\in [n]$, let $z_i$ be independent centered Bernoulli random variables with parameter $q_i$ and $z=(z_i)_{i\in[n]}$. Then, with probability at least $1-\delta$, 
		\[\Vert \left (D^{\dagger}\right )^{T} z\Vert_{\infty}\leq \sqrt{2}\rho\log\left (\frac{2n^{2}}{\delta}\right ).\]	
\end{lemma}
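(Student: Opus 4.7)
The plan is to bound each coordinate of $(D^\dagger)^T z$ separately through a Hoeffding-type tail inequality and then combine the individual bounds via a union bound over the $m$ columns of $D^\dagger$, using the elementary fact that $m \leq n(n-1)/2 \leq n^2$. Observe that $((D^\dagger)^T z)_j = \langle s_j, z\rangle = \sum_{i=1}^n s_{j,i}z_i$ is a weighted sum of independent centered Bernoulli random variables $z_i \in [-q_i, 1-q_i] \subseteq [-1,1]$. Hoeffding's lemma therefore implies that each $z_i$ is sub-Gaussian with variance proxy $1/4$, and by independence $\langle s_j, z\rangle$ is sub-Gaussian with variance proxy $\|s_j\|_2^2/4$, which is at most $\rho^2/4$ by the very definition of the inverse scaling factor. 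The usual sub-Gaussian deviation bound then yields
\[\P\bigl(|\langle s_j, z\rangle| > t\bigr) \leq 2\exp(-2t^2/\rho^2)\]
for every $j \in [m]$ and every $t>0$.

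Next, I would apply a union bound over $j \in [m]$. Using $m \leq n^2$ and inverting the resulting tail at total level $\delta$ gives
\[\|(D^\dagger)^T z\|_\infty \leq \rho\sqrt{\tfrac{1}{2}\log(2n^2/\delta)}\]
with probability at least $1-\delta$. To match the slightly looser but cleaner form stated in the lemma, I would invoke the elementary inequality $\sqrt{x/2}\leq \sqrt{2}\,x$, which is valid whenever $x \geq 1/4$; this trivially holds for $x = \log(2n^2/\delta) \geq \log 2$ in the regime $n \geq 1$, $\delta \in (0,1]$, giving the stated bound $\sqrt{2}\rho\log(2n^2/\delta)$.

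The argument is essentially routine and I do not expect any real obstacle. The only choice to make is between Hoeffding (used above) and Bernstein, which would shave constants by exploiting the Bernoulli-specific variance $\mathrm{Var}(z_i)= q_i(1-q_i)\leq 1/4$; however, since any such refinement would be absorbed into the $\log^2(4n^2/\delta)$ factor that subsequently appears in Theorem~\ref{thm:denoising}, the Hoeffding-only proof suffices and keeps the derivation self-contained.
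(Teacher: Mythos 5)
Your proof is correct and follows essentially the same route as the paper's: a coordinatewise tail bound for $\langle s_j, z\rangle$ followed by a union bound over the $m \le n^2$ columns of $D^{\dagger}$. The only difference is that you invoke Hoeffding where the paper invokes Bernstein (using $\vert D^{\dagger}_{ij}z_i\vert \le \rho$ and $\sum_i (D^{\dagger}_{ij})^2 q_i \le \rho^2$); this is immaterial here, and your version in fact yields the sharper intermediate bound $\rho\sqrt{\log(2n^2/\delta)/2}$, which you correctly relax to the stated $\sqrt{2}\,\rho\log(2n^2/\delta)$ via $\log(2n^2/\delta)\ge \log 2 > 1/4$.
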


\begin{proof}
	We start by applying Bernstein's inequality to $\sum_{i=1}^{n}D^{\dagger}_{ij}z_i$. Note that $ \left \vert D^{\dagger}_{ij}z_i\right \vert \leq \rho$ and $\sum_{i=1}^{n}\left (D^{\dagger}_{ij}\right )^{2}\bE \,z^{2}_i\leq \sum_{i=1}^{n}\left (D^{\dagger}_{ij}\right )^{2}q_{i}\leq \rho^{2}$.  Then, the Bernstein's inequality implies that, with probability at least $1-\frac{1}{n^{2}\delta}$, 
\[\left \vert	\sum_{i=1}^{n}D^{\dagger}_{ij}z_i \right \vert\leq \sqrt{2}\rho\log\left (\frac{2n^{2}}{\delta}\right )\ .\]
Now, applying the union bound and using $m\leq n^{2}$ we get the result of Lemma \ref{lem:noise_term}.
\end{proof}

\appendixtwo
\section{Proof of Proposition \ref{prp:risk_l1}}\label{appendix:prp:risk_l1}

 Take $\bar p=(\bar{p}_{i})^{n}_{i=1}$ such that 
\begin{equation*}
	\bar p_{i}=\begin{cases}
		-1& \text{if }\; i\in S^{c},\\
		0& \text{if not. }
	\end{cases}
\end{equation*}
Then, the first order optimality condition \eqref{eq:1} implies  
\begin{equation*}
	\langle \bar p, \left (p^{*}-\widehat{p}\right )_{S}\rangle - 	\langle \bar p, \widehat{p}_{S^{c}}\rangle +	\langle \bar p,\xi\rangle=\frac{n\lambda}{2} \langle D\bar{p},z\rangle.
\end{equation*}
Note that $\xi_{S^{c}}=0$ which implies
\begin{equation*}
	\Vert \widehat{p}_{S^{c}}\Vert_{1}=\frac{n\lambda}{2} \langle D\bar{p},z\rangle\leq \frac{n\lambda}{2}\Vert D{p}^*\Vert_{0}.
\end{equation*}
We have that for any $e=(i,j)\in E$, $(D\bar{p})_{e}=0$ if $(i,j)\in S\times S$ or if $(i,j)\in S^{c}\times S^{c}$ which implies that $\langle D\bar{p},z\rangle\leq \Vert D{p}^*\Vert_{0}$. Then we compute
\begin{align*}
	\Vert p^{*}-\widehat{p}\Vert_{1}=\Vert \left (p^{*}-\widehat{p}\right )_{S}\Vert_{1}+\Vert \widehat{p}_{S^{c}}\Vert_{1}\leq \frac{n\lambda}{2} \Vert Dp^*\Vert_0 + \sqrt{s}\Vert p^{*}-\widehat{p}\Vert_{2}.
\end{align*}
Combining this bound with Theorem \ref{thm:denoising} we get the statement of Proposition \ref{prp:risk_l1}.

\appendixthree
\section{Proof of Theorem \ref{thm:denoising_missing}}\label{appendix:thm:denoising_missing}

For simplicity,  we drop the $miss$ index, that is we write $\widehat{p}_{miss}=\widehat{p}$. By first order optimality condition for the convex problem \eqref{def:tv_denoiser_missing}, there exists 
$z\in \sign (D\widehat p)$ such that for any $\bar p\in \mathbb{R}$,  we have
\begin{equation}\label{eq:1_missing}
	\tfrac{2}{n}\left \langle \bar p, \mathfrak{M}\circ\left ( \widetilde{Y}-\widehat{p}\right )
	\right \rangle= \lambda \langle \bar{p},D^{T}z\rangle=\lambda \langle D\bar{p},z\rangle\ , 
\end{equation}
which implies 
\[\tfrac{2}{n}\left \langle \widehat p, \mathfrak{M}\circ\left ( \widetilde{Y}-\widehat{p}\right )\right \rangle=\lambda \Vert D\widehat{p}\Vert_{1}\quad \text{and}
\]
\[\tfrac{2}{n}\left \langle p^{*}, \mathfrak{M}\circ\left ( \widetilde{Y}-\widehat{p}\right )\right \rangle\leq \lambda \Vert D{p^{*}}\Vert_{1}.
\]
Subtracting the above two, we obtain 
\[\tfrac{2}{n}\left \Vert \mathfrak{M}\circ \left (p^{*}-\widehat{p}\right )\right \Vert^{2}_{2}\leq \tfrac{2}{n}\langle \mathfrak{M}\circ \xi,\widehat{p}-p^{*}\rangle+\lambda\Vert Dp^{*}\Vert_{1}-\lambda\Vert D\widehat p\Vert_{1}.\]
The control of the error term is similar to the case with complete observations and we only sketch it.
 Let $\eta=\mathfrak{M}\circ \xi=(\eta_{i})_{i\in [n]}$. Note that the $\eta_{i}$'s are independent Bernoulli random variables with parameter $\pi_{i}p_i^{*}$. Then, following similar argument as in the proof of Theorem \ref{thm:denoising} and using Bernstein's inequality, we get  with probability at least $1-\delta/2$, 
\begin{align}\label{eq:2_missing}
	\langle \Pi\eta  , \Pi\left (\widehat{p}- p^{*}\right )\rangle&\leq \Vert \Pi\eta\Vert_{2}\left \Vert  \Pi\left (\widehat{p}- p^{*}\right )\right \Vert_{2}\\
	&\leq \frac{\sqrt{2\Vert \pi\circ p^{*}\Vert_{1}\,}\log\left (\tfrac{4}{\delta}\right )}{n}\;\Vert  \widehat{p}- p^{*}\Vert_{1}\nonumber\\
	&\leq \frac{\sqrt{2\Vert \pi\circ p^{*}\Vert_{1}\,\Vert \pi^{-1}\Vert_{1}}\log\left (\tfrac{4}{\delta}\right )}{n}\;\Vert  \widehat{p}- p^{*}\Vert_{l_2(\pi)}\nonumber
\end{align}

%
Similarly,  we have that with probability $1-\delta/2$
\begin{align}\label{eq:3_missing}
	\left\langle \left (D^{\dagger}\right )^{T} \eta  , D\left (\widehat{p}- p^{*}\right )\right\rangle
	 &\leq \Vert \left (D^{\dagger}\right )^{T} \eta\Vert_{\infty}	\Vert D\left (\widehat{p}- p^{*}\right )\Vert_{1}\nonumber\\
	&\leq \sqrt{2}\rho\log\left (\frac{4n^{2}}{\delta}\right )	\Vert D\left (\widehat{p}- p^{*}\right )\Vert_{1}
\end{align}
where the last inequality follows from Lemma \ref{lem:noise_term}. Putting together \eqref{eq:2_missing} and \eqref{eq:3_missing} we get that, with probability at least $1-\delta$,
\begin{align*}
	\langle \mathfrak{M}\circ \xi, \widehat{p}- p^{*}\rangle\leq & \sqrt{2}\rho\log\left (\frac{4n^{2}}{\delta}\right )	\Vert D\left (\widehat{p}- p^{*}\right )\Vert_{1}\\
	&\hskip 0.5 cm +
	\frac{\sqrt{2\Vert \pi\circ p^{*}\Vert_{1}\,\Vert \pi^{-1}\Vert_{1}}\log\left (\tfrac{4}{\delta}\right )}{n}\;\Vert  \widehat{p}- p^{*}\Vert_{l_2(\pi)}
\end{align*}
and 
\begin{align*}
	\left \Vert \mathfrak{M}\circ \left (p^{*}-\widehat{p}\right )\right \Vert^{2}_{2}\leq \sqrt{2}\rho&\log\left (\frac{4n^{2}}{\delta}\right )	\Vert D\left (\widehat{p}- p^{*}\right )\Vert_{1}+\dfrac{n\lambda}{2}\Vert Dp^{*}\Vert_{1}-\dfrac{n\lambda}{2}\Vert D\widehat p\Vert_{1}\nonumber\\
	&+ \frac{\sqrt{2\Vert \pi\circ p^{*}\Vert_{1}\,\Vert \pi^{-1}\Vert_{1}}\log\left (\tfrac{4}{\delta}\right )}{n}\;\Vert  \widehat{p}- p^{*}\Vert_{l_2(\pi)}\enspace .
\end{align*}
For $T=\supp \left (Dp^{*}\right )$ and $\delta=4/n$, using $	n\lambda=9\sqrt{2}\rho\log\left (n\right )$,  we can write
\begin{align}\label{eq:4_missing}
	\left \Vert \mathfrak{M}\circ \left (p^{*}-\widehat{p}\right )\right \Vert^{2}_{2}+	\frac{n\lambda}{6}\left \Vert D\,\widehat p_{T^{C}}\right \Vert_{1}\leq&  \frac{5n\lambda}{6}\left \Vert D(p^{*}-\widehat p)_{T}\right \Vert_{1}\\
	& \hskip 0.5 cm+ 
	\frac{\sqrt{2\Vert \pi\circ p^{*}\Vert_{1}\,\Vert \pi^{-1}\Vert_{1}}\log\left (n\right )}{n}\;\Vert  \widehat{p}- p^{*}\Vert_{l_2(\pi)}\nonumber
\end{align}
and 
\begin{align}\label{eq:5_missing}
	\left \Vert \left ( D\,\widehat p\right )_{T^{C}}\right \Vert_{1}\leq 5\left \Vert \left (D(p^{*}-\widehat p)\right )_{T}\right \Vert_{1}
	+
	\frac{2\sqrt{\Vert \pi\circ p^{*}\Vert_{1}\,\Vert \pi^{-1}\Vert_{1}}\log\left (n\right )}{3n\rho \log\left (n\right ) }\;\Vert  \widehat{p}- p^{*}\Vert_{l_2(\pi)}.
\end{align}
Next, we need to provide a lower bound for $\left \Vert \mathfrak{M}\circ \left (p^{*}-\widehat{p}\right )\right \Vert^{2}_{2}$. 
It can be done on a set of vectors that satisfy  \eqref{eq:5_missing}  if $	\left \Vert p^{*}-\widehat{p} \right \Vert^{2}_{l_2(\pi)}$ is not too small. We start by introducing the corresponding set of vectors. Let 
\begin{equation}\label{def_Delta}
	\Delta=  \frac{\sqrt{2\Vert \pi\circ p^{*}\Vert_{1}\,\Vert \pi^{-1}\Vert_{1}}}{3n\rho}.
\end{equation}
For a set of indices $T$ and $\epsilon \geq \dfrac{\log(n)}{0.01\,\log\left (6/5\right )}$, we denote by $\mathcal{C}(T,\epsilon)$ the following set of vectors:
\begin{equation*}
	\mathcal{C}(T,\epsilon)=\left \{v\in[0,1]^{n} \,:\,  \left\Vert v\right\Vert_{l_2(\pi)}^{2}\geq \epsilon, \Vert \left (Dv\right) _{T^{C}}\Vert_{1}\leq 	5\Vert \left (Dv\right) _{T}\Vert_{1}+ \Delta \Vert  v\Vert_{l_2(\pi)}\right \}.
\end{equation*}
Next result provides partial isometry  for vectors in  $\mathcal{C}(T,\epsilon)$:
\begin{lemma}\label{lm:partial_isometry} With probability higher than $1-8/n$, we have for all $v\in \mathcal{C}(T, \epsilon)$
	$$\left \Vert \mathfrak{M}\circ v\right \Vert^{2}_{2}\geq \frac{\Vert v\Vert _{l_2(\pi)}^{2}}{2}-77 \left \{\dfrac{36\rho\sqrt{\kappa_{\pi} \vert T\vert }\log(n)}{\kappa_{T}}
	+\dfrac{\sqrt{\Vert \pi^{-1}\Vert_{1}}}{n}\left (5\sqrt{\Vert \pi\Vert_{1}}\log(n)+ 1\right) \right \}^{2}.$$
\end{lemma}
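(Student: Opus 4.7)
The plan is to establish a uniform concentration of the random quadratic form $\|\mathfrak{M}\circ v\|_2^2 = \sum_i m_i v_i^2$ around its expectation $\|v\|_{l_2(\pi)}^2$, simultaneously over the cone $\mathcal{C}(T, \epsilon)$. Setting $Z(v) = \sum_i (\pi_i - m_i) v_i^2$, the inequality to establish is equivalent to $\sup_{v \in \mathcal{C}(T, \epsilon)} Z(v) \leq \tfrac12 \|v\|_{l_2(\pi)}^2 + 77\, R^2$, where $R$ denotes the sum appearing inside the braces. The strategy is the one classically used for quadratic empirical processes on a cone: linearize the quadratic form, bound the resulting Rademacher process uniformly using the cone constraint, and then peel over dyadic scales of $\|v\|_{l_2(\pi)}$ so that AM--GM absorbs the linear-in-$\|v\|_{l_2(\pi)}$ term into the quadratic one.

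First I would observe that since $v_i \in [0,1]$, the variance of $(m_i - \pi_i) v_i^2$ is bounded by $\pi_i v_i^2$, so Bernstein's inequality yields $Z(v) \leq \sqrt{2 \|v\|_{l_2(\pi)}^2 \log(1/\delta)} + \log(1/\delta)/3$ for any fixed $v$. To upgrade this to a uniform statement over the cone, I would linearize via symmetrization together with the Ledoux--Talagrand contraction principle (the map $u \mapsto u^2$ is $2$-Lipschitz on $[0,1]$), reducing the control of $\sup_{v} Z(v)$ to that of the linear process $\sup_{v} |\langle \eta, v\rangle|$, with $\eta$ inheriting a Bernstein-type tail from the centered mask.

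To bound the linear supremum, I would reuse the orthogonal decomposition $v = D^{\dagger} D v + \Pi v$ used in the proof of Theorem~\ref{thm:denoising}, giving
$$|\langle \eta, v\rangle| \;\leq\; \|(D^{\dagger})^T \eta\|_{\infty}\, \|Dv\|_1 \;+\; \|\Pi \eta\|_2\, \|\Pi v\|_2.$$
An adaptation of Lemma~\ref{lem:noise_term} to the noise $\eta_i = m_i - \pi_i$ (whose variance is bounded by $\pi_i$) yields $\|(D^{\dagger})^T \eta\|_{\infty} \lesssim \rho \log n$, and a direct Bernstein bound on $\sum_i (m_i - \pi_i)$ gives $\|\Pi \eta\|_2 \lesssim (\sqrt{\|\pi\|_1 \log n} + \log n)/\sqrt{n}$. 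On the cone side, the constraint $\|(Dv)_{T^c}\|_1 \leq 5\|(Dv)_T\|_1 + \Delta\|v\|_{l_2(\pi)}$ combined with Definition~\ref{def:compatibility factor} and the elementary inequality $\|v\|_2 \leq \sqrt{\kappa_\pi}\,\|v\|_{l_2(\pi)}$ gives $\|Dv\|_1 \lesssim (\sqrt{\kappa_\pi |T|}/\kappa_T + \Delta)\,\|v\|_{l_2(\pi)}$. For the kernel component, I would exploit $v_i \geq 0$ together with the Cauchy--Schwarz inequality on the weights $(\pi_i^{-1/2}, \pi_i^{1/2})$ to obtain $\|\Pi v\|_2 = n^{-1/2}\|v\|_1 \leq \sqrt{\|\pi^{-1}\|_1/n}\,\|v\|_{l_2(\pi)}$, which is exactly the source of the $\sqrt{\|\pi^{-1}\|_1}$ factor appearing in the statement. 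Collecting these pieces produces an estimate of the form $|Z(v)| \leq R\,\|v\|_{l_2(\pi)}$ with $R$ matching the quantity inside the braces of the lemma.

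The final step is a peeling argument over scales $r_k = (6/5)^k \sqrt{\epsilon}$: on each dyadic shell the linear bound together with AM--GM gives $Z(v) \leq \tfrac12 \|v\|_{l_2(\pi)}^2 + \tfrac12 R^2$, and a union bound over the $O(\log n)$ scales needed to cover $[\sqrt{\epsilon}, \sqrt{n}]$ preserves the probability budget $1 - 8/n$. The lower cutoff $\epsilon \geq \log(n)/[0.01\log(6/5)]$ is calibrated so that the additive residual of each scale-specific Bernstein bound is dominated by $R^2$ and the geometric sum of residuals contributes only a constant multiplicative factor. The main obstacle is the careful bookkeeping of numerical constants through symmetrization, contraction, Bernstein's inequality, and peeling, in order to recover the multiplicative constant $\tfrac12$ in front of $\|v\|_{l_2(\pi)}^2$ together with the inner constants $36$, $5$ and the outer pre-factor $77$; each intermediate step loses a factor of $2$ or $3$, and the $77$ is essentially the product of these losses.
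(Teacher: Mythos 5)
Your overall architecture matches the paper's: symmetrization plus the contraction principle to linearize the quadratic form, the orthogonal decomposition $v=D^{\dagger}Dv+\Pi v$ with H\"older applied to each piece, the cone constraint combined with the compatibility factor and $\Vert v\Vert_2\leq\sqrt{\kappa_\pi}\Vert v\Vert_{l_2(\pi)}$ to control $\Vert Dv\Vert_1$, and a peeling argument over shells of $\Vert v\Vert_{l_2(\pi)}$ closed by AM--GM. However, there is a genuine gap at the hinge of the argument. Symmetrization and the Ledoux--Talagrand contraction principle are inequalities between \emph{expectations} of suprema: they give $\bE\,Z_r\leq 8\,\bE\sup_v|\langle\psi,v\rangle|$ with $\psi_i=\zeta_i m_i$ a Rademacher-randomized mask. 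They do not produce a pathwise, high-probability reduction of $\sup_v Z(v)$ to the linear process $\sup_v|\langle\eta,v\rangle|$ with $\eta_i=m_i-\pi_i$, which is what your write-up then bounds on a high-probability event. As stated, your chain of inequalities controls only $\bE\,Z_r$, and the subsequent peeling requires, on each shell, a tail bound of the form $\P(Z_r\geq \tfrac{5}{12}r+\mathcal{E})\leq 4e^{-c_1 r}$ so that the union bound over shells converges. The paper supplies exactly this missing bridge by applying Talagrand's concentration inequality for empirical processes (with variance proxy $V_n\leq\tfrac{17}{12}r+2\mathcal{E}$ coming from $\sum_i\bE[(m_i-\pi_i)^2v_i^4]\leq\Vert v\Vert^2_{l_2(\pi)}\leq r$) to show that $Z_r$ concentrates around its expectation; your proposal omits this step entirely.

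The gap is repairable in two ways. Either insert the Talagrand step as the paper does, or abandon symmetrization altogether and bound $Z(v)=-\langle\eta,v^{\circ 2}\rangle$ directly, where $v^{\circ 2}=(v_i^2)_i$: since $v\in[0,1]^n$ one has $|(Dv^{\circ 2})_e|=|v_i-v_j|\,|v_i+v_j|\leq 2|(Dv)_e|$ and $\Vert v^{\circ 2}\Vert_1\leq\Vert v\Vert_1$, so the same decomposition and H\"older bounds applied to $v^{\circ 2}$ yield a genuinely pathwise bound $\sup_v|Z(v)|\lesssim R\,\Vert v\Vert_{l_2(\pi)}$ on the intersection of the events $\{\Vert(D^{\dagger})^{T}\eta\Vert_\infty\lesssim\rho\log n\}$ and $\{|\langle\eta,\mathbf{1}_n\rangle|\lesssim\sqrt{\Vert\pi\Vert_1}\log n\}$, after which your peeling goes through. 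As written, though, the proof does not close.
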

We now consider two cases, depending on whether the vector $p^{*}-\widehat{p}$ belongs to the set $ \mathcal{C}(T,\epsilon)$ or not.
\\
\textbf{Case 1}: Suppose first that $\left\Vert p^{*}-\widehat{p} \right\Vert_{l_2(\pi)}^{2} < \epsilon$, then the statement  of Theorem \ref{thm:denoising_missing} is true.

\medskip 

\noindent 
\textbf{Case 2}: It remains to consider the case  $\left\Vert p^{*}-\widehat{p} \right\Vert_{l_2(\pi)}^{2} \geq  \epsilon$. Let $T= \supp(Dp^*)$. Then,
\eqref{eq:5_missing} implies that  $p^{*}-\widehat{p} \in 	\mathcal{C}(T,\epsilon)$ with probability at least $1-4/n$. Then, combining \eqref{eq:4_missing} and Lemma \ref{lm:partial_isometry} we get
\begin{align}\label{eq:6_missing}
&\frac{1}{2}\Vert p^{*}-\widehat{p} \Vert^{2}_{l_2(\pi)}	\leq 77 \left \{\dfrac{36\rho\sqrt{\kappa_{\pi} \vert T\vert }\log(n)}{\kappa_{T}}
+\dfrac{\sqrt{\Vert \pi^{-1}\Vert_{1}}}{n}\left (5\sqrt{\Vert \pi\Vert_{1}}\log(n)+ 1\right) \right \}^{2}\nonumber\\
& + \frac{12\rho\,\sqrt{\kappa_{\pi}\vert T\vert} \log(n)}{\kappa_{T}}\left \Vert p^{*}-\widehat p\right \Vert_{l_2(\pi)}+
\frac{\sqrt{2\Vert \pi\circ p^{*}\Vert_{1}\,\Vert \pi^{-1}\Vert_{1}}\log\left (n\right )}{n}\;\Vert  \widehat{p}- p^{*}\Vert_{l_2(\pi)}.
\end{align}
Now the statement of Theorem \ref{thm:denoising_missing} follows applying 
$2ab\leq 8a^{2}+ \tfrac{1}{8}b^{2}$ to the last two terms in \eqref{eq:6_missing}.
\subsection{Proof of Lemma \ref{lm:partial_isometry}}\label{proof-lm:partial_isometre}
Set $\nu=\dfrac{\log(n)}{0.01\,\log\left (6/5\right )}$, $\alpha=\dfrac{6}{5}$ and
\[\mathcal E = 	77 \left \{\dfrac{36\rho\sqrt{\kappa_{\pi}\vert T\vert }\log(n)}{\kappa_{T}}
+\dfrac{\sqrt{\Vert \pi^{-1}\Vert_{1}}}{n}\left (5\sqrt{\Vert \pi\Vert_{1}}\log(n)+ 1\right) \right \}^{2}. \]
The proof of Lemma \ref{lm:partial_isometry} is based on the peeling argument. Denote by $\mathcal{B}$  the set that contains the complement of the event that we are interested in: 
\begin{equation*}
	\mathcal{B}=\left \{\exists\,v\in \mathcal{C}(T,\epsilon)\,\text{such that}\,\left \vert  \left\Vert \mathfrak {M}\circ v \right\Vert_{2}^{2}-\Vert v\Vert _{l_2(\pi)}^{2}\right \vert> \tfrac{1}{2}\Vert v\Vert _{l_2(\pi)}^{2}+ \mathcal{E}\right \}.
\end{equation*} 
For $r>\nu$ we will consider the following set of vectors:
$$\mathcal{C}(T,\epsilon,r)=\left \{v\in\mathcal{C}(T,\epsilon) \,:\,  \left\Vert v\right\Vert_{l_2(\pi)}^{2}\leq r \right \}
$$ 
and, 
for $l\in\mathbb N$, the following events
$$\mathcal{B}_l=\left \{\exists\,v\in \mathcal{C}(T,\epsilon,\alpha^{l}\nu)\,:\,\left \vert \left\Vert \mathfrak {M}\circ v \right\Vert_{2}^{2}-\Vert v\Vert _{L_2(\Pi)}^{2}\right \vert> \tfrac{5}{12}\alpha^{l}\nu+ \mathcal{E}\right \},$$
$$S_l=\left \{v\in \mathcal{C}(T,\epsilon)\,:\,\alpha^{l-1}\nu \leq \Vert v\Vert _{l_2(\pi)}^{2}\leq \alpha^{l}\nu\right \}.$$ 

  If the event $\mathcal{B}$ holds for some vector $v\in \mathcal{C}(T,\epsilon)$, then $v$ belongs to some $S_l$ and 
\begin{equation}\label{Bl}
	\begin{split}
		\left \vert \left\Vert \mathfrak {M}\circ v \right\Vert_{2}^{2}-\Vert v\Vert _{l_2(\pi)}^{2}\right \vert&> \tfrac{1}{2}\Vert v\Vert _{l_2(\pi)}^{2}+ \mathcal{E}\\&> \tfrac{1}{2}\alpha^{l-1}\nu+ \mathcal{E}\\&
		= \tfrac{5}{12}\alpha^{l}\nu+ \mathcal{E}
	\end{split}
\end{equation}
 which implies  $\mathcal{B}\subset\cup \,\mathcal{B}_l$. Next, we  apply the union bound using the following lemma: 
\begin{lemma}\label{lm:sup}
	Let
	$$Z_r=\underset{v\in \mathcal{C}(T,\epsilon,r)}{\sup}\left \vert \left\Vert \mathfrak {M}\circ v \right\Vert_{2}^{2}-\Vert v\Vert _{l_2(\pi)}^{2}\right \vert.$$ 
	We have that  \begin{equation*}
		pr\left ( Z_{r} \geq \frac{5}{12}r+ \mathcal{E}\right )\leq 4e^{-c_1\,r}
	\end{equation*}
	with $c_1\geq 0.01$.
\end{lemma}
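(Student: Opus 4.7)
\textbf{Proof plan for Lemma \ref{lm:sup}.}
The plan is to view $Z_r$ as the supremum of an empirical process indexed by $\mathcal{C}(T,\epsilon,r)$ and to split the analysis into (i) a concentration step around the expected supremum, and (ii) a bound on this expected supremum via symmetrization and contraction, exploiting the cone constraint defining $\mathcal{C}(T,\epsilon,r)$.

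First I would rewrite, for each $v\in\mathcal{C}(T,\epsilon,r)$,
\begin{equation*}
\|\mathfrak{M}\circ v\|_2^2-\|v\|_{l_2(\pi)}^2=\sum_{i=1}^n (m_i-\pi_i)\,v_i^2,
\end{equation*}
which is a sum of independent, centred, bounded random variables (since $v_i^2\in[0,1]$ and $|m_i-\pi_i|\leq 1$). The variance of each summand is at most $\pi_i v_i^4\leq \pi_i v_i^2$, so the total variance over the class is bounded by $\sup_v \|v\|_{l_2(\pi)}^2\leq r$. This makes Bousquet's version of Talagrand's inequality directly applicable and yields, with probability at least $1-e^{-t}$,
\begin{equation*}
Z_r\leq 2\,\mathbb{E}[Z_r]+C\bigl(\sqrt{rt}+t\bigr),
\end{equation*}
for a universal constant $C$. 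Choosing $t$ of order $r$ (so that the deviation term is absorbed into $\tfrac{5}{12}r$) reduces the problem to controlling $\mathbb{E}[Z_r]$.

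Next I would attack $\mathbb{E}[Z_r]$ via symmetrization: introducing i.i.d.\ Rademacher signs $\varepsilon_i$,
\begin{equation*}
\mathbb{E}[Z_r]\leq 2\,\mathbb{E}\sup_{v\in\mathcal{C}(T,\epsilon,r)}\Bigl|\sum_{i=1}^n \varepsilon_i\,v_i^2\Bigr|.
\end{equation*}
Since $x\mapsto x^2$ is $2$-Lipschitz on $[0,1]$ and vanishes at $0$, the Ledoux--Talagrand contraction inequality lets me replace $v_i^2$ by $v_i$ up to a factor $4$; thus it suffices to bound $\mathbb{E}\sup_v|\langle \varepsilon, v\rangle|$. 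I would then split $v=\Pi v+(I-\Pi)v = \tfrac{1}{n}\langle v,\mathbf 1_n\rangle \mathbf 1_n + D^{\dagger}Dv$ as in the proof of Theorem~\ref{thm:denoising}, giving
\begin{equation*}
|\langle\varepsilon,v\rangle|\leq \tfrac{1}{n}|\langle\varepsilon,\mathbf 1_n\rangle|\cdot\|v\|_1 + \|(D^{\dagger})^{\T}\varepsilon\|_\infty\,\|Dv\|_1.
\end{equation*}
For the first piece $\mathbb{E}|\langle\varepsilon,\mathbf 1_n\rangle|\lesssim\sqrt{n}$, and $\|v\|_1\leq \sqrt{\|\pi^{-1}\|_1}\,\|v\|_{l_2(\pi)}\leq \sqrt{r\|\pi^{-1}\|_1}$ by Cauchy--Schwarz. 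For the second, an analogue of Lemma~\ref{lem:noise_term} (Bernstein + union bound on the $m$ rows of $D^{\dagger}$) gives $\|(D^{\dagger})^{\T}\varepsilon\|_\infty\lesssim \rho\log n$ with high probability.

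The main obstacle will be controlling $\|Dv\|_1$ uniformly over the cone $\mathcal{C}(T,\epsilon,r)$. Here I exploit the defining inequality
\begin{equation*}
\|(Dv)_{T^c}\|_1\leq 5\,\|(Dv)_T\|_1+\Delta\|v\|_{l_2(\pi)},
\end{equation*}
together with the compatibility factor of Definition~\ref{def:compatibility factor}:
\begin{equation*}
\|(Dv)_T\|_1\leq \frac{\sqrt{|T|}}{\kappa_T}\,\|v\|_2\leq \frac{\sqrt{\kappa_\pi |T|}}{\kappa_T}\,\|v\|_{l_2(\pi)},
\end{equation*}
using \eqref{relation_norms}. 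Combining these,
\begin{equation*}
\|Dv\|_1\leq \Bigl(\tfrac{6\sqrt{\kappa_\pi|T|}}{\kappa_T}+\Delta\Bigr)\|v\|_{l_2(\pi)}\leq \Bigl(\tfrac{6\sqrt{\kappa_\pi|T|}}{\kappa_T}+\Delta\Bigr)\sqrt{r}.
\end{equation*}
Substituting back yields $\mathbb{E}[Z_r]\lesssim \sqrt{\mathcal{E}\,r}$, and then $ab\leq \tfrac14 a^2+b^2$ converts this into a $\tfrac{5}{12}r+\mathcal{E}$ bound. Finally, combining with the Bousquet step and tracking the numerical constants carefully so that $c_1\geq 0.01$ yields the stated probability bound. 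The delicate point throughout is keeping the constants in check so that the cone-relaxation, contraction, symmetrization, and concentration constants all fit inside the $5/12$ slack; using the $2ab\leq 8a^2+\tfrac18 b^2$ trick as in the end of the proof of Theorem~\ref{thm:denoising_missing} is the natural way to do so.
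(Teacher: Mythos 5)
Your plan is essentially the paper's own proof: symmetrize, apply the contraction inequality to pass from $v_i^2$ to $v_i$, split $\langle\psi,v\rangle$ along $\Pi$ and $(D^{\dagger})^{\T}\psi$, control $\Vert Dv\Vert_1$ over the cone via the compatibility factor and \eqref{relation_norms}, absorb the cross term with a $2ab\leq a^2+b^2$ step, and finish with Talagrand's concentration inequality. The one slip is in the symmetrization display: the Rademacher process must retain the $m_i$ factors, i.e.\ be $\sup_v\bigl|\sum_i\varepsilon_i m_i v_i^2\bigr|$, so that the subsequent Bernstein bounds see variances $\pi_i$; with bare signs $\varepsilon_i$ the first piece gives $\mathbb{E}\vert\langle\varepsilon,\mathbf 1_n\rangle\vert\asymp\sqrt{n}$ instead of $\sqrt{\Vert\pi\Vert_1}\log(2n)$, which does not fit inside the stated $\mathcal{E}$ when $\Vert\pi\Vert_1\ll n/\log^2 n$. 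Keeping $\psi=(\varepsilon_i m_i)_i$ throughout, as the paper does, repairs this and the rest of your argument goes through.
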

Lemma \ref{lm:sup} implies that $\mathbb P\left (\mathcal{B}_l\right )\leq 4\exp(-c_1\,\alpha^{l}\nu)$. Using $e^{x}\geq x$, we obtain
\begin{equation*}
	\begin{split}
		\mathbb P\left (\mathcal{B}\right )&\leq \underset{l=1}{\overset{\infty}{\Sigma}}\mathbb P\left (\mathcal{B}_l\right )\leq 4\underset{l=1}{\overset{\infty}{\Sigma}}\exp(-c_1\,\alpha^{l}\nu)\\&\leq 4\underset{l=1}{\overset{\infty}{\Sigma}}\exp\left (-c_1\,\nu\,\log(\alpha)\,l\right )\leq \dfrac{4\exp\left (-c_1\,\nu\,\log(\alpha)\right )}{1-\exp\left (-c_1\,\nu\,\log(\alpha)\right )}
		\\&=\dfrac{4\exp\left (-\log(n)\right )}{1-\exp\left (-\log(n)\right )}.
	\end{split}
\end{equation*} 
This completes the proof of Lemma \ref{lm:partial_isometry}.
\subsection{Proof of Lemma \ref{lm:sup}}\label{pl1}

We first  provide an  upper bound on $\mathbb{E}(Z_r)$ and then show that $Z_r$ concentrates around its expectation. By definition of $Z_r$  we have
$$Z_r=\underset{v\in \mathcal{C}(T,\epsilon,r)}{\sup}\left \vert \sum _{i} m_iv_i^{2}-\bE\left (\sum _{i} m_iv_i^{2}\right )\right \vert.$$
 Using a standard symmetrization argument (see e.g. \citep{Ledoux_book}) we obtain 
\begin{equation*}
	\begin{split}
		\bE \left ( Z_r\right )&= \bE\left (\underset{v\in \mathcal{C}(T,\epsilon,r)}{\sup}\left \vert \sum_{i} m_{i}v_i^{2}-\bE\left (m_{i}v_i^{2}\right )\right \vert\right )\\&\leq 2\bE\left (\underset{v\in \mathcal{C}(T,\epsilon,r)}{\sup}\left \vert\sum_{i} \zeta_{i}m_{i}v_{i}^{2} \right \vert\right )
\end{split}\end{equation*}
where $\{\zeta_{i}\}$ is an i.i.d. Rademacher sequence. Then, using that $v_i\leq 1$, the contraction inequality (see e.g. \citep[Theorem 2.2]{Koltchinskii_st_flour}) yields
\begin{equation*}
	\bE \left ( Z_r\right )\leq 8\bE\left (\underset{v\in \mathcal{C}(T,\epsilon,r)}{\sup}\left \vert\sum_{i}\zeta_{i}m_{i}v_{i}\right \vert\right )=8\,\bE\left (\underset{v\in \mathcal{C}(T,\epsilon,r)}{\sup}\left \vert \left\langle \psi,v\right\rangle\right \vert\right )
\end{equation*}
where $\psi=(\zeta_{i}m_{i})_{i\in[n]}$. We can write
\begin{align}\label{lm_sup_eq:1}
	\left\langle \psi,v\right\rangle&= \langle \Pi\psi, v\rangle + \langle (I-\Pi)\psi  , v\rangle=
	\langle \Pi\psi  , \Pi v\rangle + \left\langle \left (D^{\dagger}\right )^{T} \psi  , D v\right \rangle
\end{align}
where $\Pi$ denote the projection on the kernel of $D$.
To bound the first term on the right hand side in \eqref{lm_sup_eq:1}, we first use the H\"older inequality:
\begin{align*}
	\left \vert \langle \Pi\psi , \Pi v\rangle \right \vert \leq \Vert \Pi\psi\Vert_{2}\Vert \Pi v\Vert_{2}.
\end{align*}
Using the same argument as in the proof of Theorem \ref{thm:denoising}, we have that 
$$\Vert \Pi\psi\Vert_{2}= \tfrac{1}{\sqrt{n}}\vert \langle \psi,\mathbf{1}_{n}\rangle\vert \quad \text{and}\quad \Vert \Pi v\Vert_{2}= \tfrac{1}{\sqrt{n}}\vert \langle v,\mathbf{1}_{n}\rangle\vert\enspace .$$
For $v\in \mathcal{C}(T,\epsilon,r)$ we get 
\begin{equation}\label{lm_sup_eq:2}
	\Vert \Pi v\Vert_{2}\leq \tfrac{1}{\sqrt{n}}\Vert v\Vert _{1}\leq \sqrt{\frac{\Vert \pi^{-1}\Vert_{1}}{n}}\left\Vert v\right\Vert_{l_2(\pi)}\leq \sqrt{\frac{\Vert \pi^{-1}\Vert_{1}\,r}{n}}\enspace .
\end{equation}
On the other hand,  Bernstein's inequality implies that 
that, with probability at least $1-\frac{1}{n}$, we have
\[\left \vert	\sum_{i=1}^{n}\psi_i\right \vert \leq \sqrt{2\Vert \pi\Vert_{1}}\log\left (2n\right )\]
and 
\begin{equation}\label{lm_sup_eq:3}
	\bE \left (\tfrac{1}{\sqrt{n}}\left \vert \langle \psi,\mathbf{1}_{n}\rangle \right \vert\right )\leq \sqrt{\frac{2\Vert \pi\Vert_{1}}{n}}\log\left (2n\right )+ \dfrac{1}{\sqrt{n}}
\end{equation}
where we used $\left \vert \langle \psi,\mathbf{1}_{n}\rangle \right \vert\leq n$.
Putting together \eqref{lm_sup_eq:2} and \eqref{lm_sup_eq:3} we get that for all 
\begin{align}\label{lm_sup_eq:9}
\mathbb{E}\left[	\sup_{v\in \mathcal{C}(T,\epsilon,r) }	\left \vert \langle \Pi\psi , \Pi v\rangle\right \vert\right]\leq \frac{\sqrt{\Vert \pi^{-1}\Vert_{1}\,r}}{n}
	\left (\sqrt{2\Vert \pi\Vert_{1}}\log\left (2n\right )+1\right ).
\end{align}
For the second term in \eqref{lm_sup_eq:1}, using again  the H\"older inequality, we get
\begin{align}\label{lm_sup_eq:4}
\left \vert	\left\langle \left (D^{\dagger}\right )^{T} \psi, Dv\right\rangle \right \vert &\leq \Vert \left (D^{\dagger}\right )^{T} \psi\Vert_{\infty}	\Vert D v\Vert_{1}.
\end{align}
Lemma \ref{lem:noise_term} implies that, with probability at least $1-4/n$, we have  
\begin{equation}\label{lm_sup_eq:5}
	\left \Vert \left (D^{\dagger}\right )^{T} \psi\right \Vert_{\infty}\leq  3\sqrt{2}\rho\log\left (8n\right ).
\end{equation}
On the other hand, we have that 
\[	\Vert \left (D^{\dagger}\right )^{T} \psi\Vert_{\infty}=\underset{j}{\sup}\left \vert	\sum_{i=1}^{n}D^{\dagger}_{ij}\psi_i \right \vert\leq 
\underset{j}{\sup} \sqrt{\sum_{i}\left (D^{\dagger}_{ij}\right )^{2}}\sqrt{\sum_{i}\psi_i^{2}}\leq 
\rho\sqrt{n}\]
which together with \eqref{lm_sup_eq:5} imply
\begin{equation}\label{lm_sup_eq:6}
	\bE\left (\left \Vert \left (D^{\dagger}\right )^{T} \psi\right \Vert_{\infty}\right )\leq 3\sqrt{2}\rho\log\left (n\right )+\dfrac{4\rho}{\sqrt{n}}\leq 6\rho\log\left (n\right )
\end{equation} 
for $n\geq 4$. Note that, as $v\in \mathcal{C}(T,\epsilon,r)$, we also obtain
	\begin{align}\label{lm_sup_eq:7}
		\Vert D v\Vert_{1}&\leq 	\Vert \left (D v\right )_{T}\Vert_{1}+ \Vert \left (D v\right )_{T^{C}}\Vert_{1}\leq 6 \Vert \left (D v\right )_{T}\Vert_{1}+\Delta \Vert  v\Vert_{l_{2}(\pi)}\nonumber\\
		&\leq \dfrac{6\sqrt{\vert T\vert }}{\kappa_{T}}\Vert v\Vert_{2}+\Delta \sqrt{r}
		\nonumber\\
		&\leq \dfrac{6\sqrt{\kappa_{\pi}\vert T\vert }}{\kappa_{T}} \Vert  v\Vert_{l_{2}(\pi)}+\Delta \sqrt{r}\quad \text{using \eqref{relation_norms}}\nonumber\\
		&\leq \left (\dfrac{6\sqrt{\kappa_{\pi}\vert T\vert }}{\kappa_{T}} +\Delta\right ) \sqrt{r}.
	\end{align}	
Using \eqref{lm_sup_eq:6} and \eqref{lm_sup_eq:7} we get 
\begin{align}\label{lm_sup_eq:8}
	\bE \left (\underset{v\in \mathcal{C}(T,\epsilon,r)}{\sup} \left\langle \left (D^{\dagger}\right )^{T} \psi, Dv\right \rangle\right )\leq 
6\rho\sqrt{r}\log (n)	\left (\dfrac{6\sqrt{\kappa_{\pi}\vert T\vert }}{\kappa_{T}}+\Delta\right ).
\end{align}
Putting together \eqref{lm_sup_eq:9}, \eqref{lm_sup_eq:8} and using the definition~\eqref{def_Delta} of $\Delta$,  we get
\begin{equation*}
	\bE \left ( Z_r\right )\leq 8\sqrt{r}\left \{\dfrac{36\rho\sqrt{\kappa_{\pi}\vert T\vert }\log(n)}{\kappa_{T}}
	+\dfrac{\sqrt{\Vert \pi^{-1}\Vert_{1}}}{n}\left (5\sqrt{\Vert \pi\Vert_{1}}\log(n)+ 1\right) \right \}
\end{equation*}
where we used $\Vert \pi\circ p^{*}\Vert_{1}\leq \Vert \pi\Vert_{1}$. Now, $2ab\leq a^{2}+b^{2}$ implies
\begin{align}\label{bound_expectation_sup}
	\bE \left ( Z_r\right )&\leq \dfrac{5}{24}r+ 77\left \{\dfrac{36\rho\sqrt{\kappa_{\pi}\vert T\vert }\log(n)}{\kappa_{T}}
	+\dfrac{\sqrt{\Vert \pi^{-1}\Vert_{1}}}{n}\left (5\sqrt{\Vert \pi\Vert_{1}}\log(n)+ 1\right) \right \}^{2}\nonumber
	\\ &
	\hskip 0.5 cm  =  \dfrac{5}{24}r+\mathcal{E}
\end{align}
Now we can show that $Z_r$ concentrates around its expectation using  Talagrand's concentration inequality \citep{Talagrand_1996}, which in the current  can be obtained by inverting the tail bound in Theorem 3.3.16 in \citep{Ginenickl}. 
\begin{theorem} \label{Talagrand} Let $(S, \mathcal{S})$ be a measurable space and let $n \in \mathbb{N}$. 
	Let $X_k,~k=1,\dots,n  $ be independent ${S}$-valued random variables and let $\mathcal{F}$ be a countable set of functions $f=(f_1,...,f_n):{S}^n \rightarrow [-K,K]^n$ such that $\mathbb{E}f_k(X_k)=0$ for all $f\in \mathcal{F}$ and $k=1,...,n$. Set
	$$Z:=\sup_{f \in \mathcal{F}} \sum_{k=1}^n f_k(X_k)\enspace  .$$
	Define the variance proxy
	$$V_n:=2K\mathbb{E}Z + \sup_{f \in \mathcal{F} }\sum_{k=1}^n \mathbb{E} \left [(f_k(X_k))^2 \right ].$$
	Then, for all $t \geq 0$,
	\begin{equation*}
		pr\left ( Z - \mathbb{E} Z \geq t \right ) \leq \exp \left ( \frac{-t^2}{4V_n + (9/2)Kt} \right ).
	\end{equation*}
\end{theorem}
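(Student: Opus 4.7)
My plan is to derive this tail bound as a consequence of a standard Bousquet-type inequality for suprema of bounded empirical processes, which appears as Theorem 3.3.16 in Gin\'e--Nickl. That reference already provides a concentration inequality for $Z - \mathbb{E} Z$ of the form
\begin{equation*}
\mathrm{pr}\bigl(Z - \mathbb{E} Z \geq t\bigr) \leq \exp\!\left(-\sigma_*^2\, h(t/\sigma_*^2)\right),
\end{equation*}
where $\sigma_*^2$ is a Bousquet-type variance proxy and $h(x) = (1+x)\log(1+x) - x$. The work will therefore be split into two pieces: first identifying the correct variance proxy matching our assumptions, and second converting the $h$-based tail bound to the sub-gamma form stated in the theorem.

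\textbf{Step 1: applying the empirical process inequality.} I would set $f_k(X_k)$ to play the role of the summand in the empirical process and verify the hypotheses: the functions are centered, uniformly bounded by $K$ in absolute value, and the collection $\mathcal{F}$ is countable (which handles measurability of $Z$). The resulting variance proxy in Bousquet's form is $\sigma_*^2 = 2K\,\mathbb{E} Z + \sup_{f \in \mathcal{F}} \sum_{k=1}^n \mathbb{E}[f_k(X_k)^2]$, which is precisely $V_n$ as defined in the statement. At this stage one obtains $\mathrm{pr}(Z - \mathbb{E} Z \geq t) \leq \exp(-V_n\, h(t/V_n))$.

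\textbf{Step 2: inverting the tail bound.} To recover the Bernstein-type fraction $t^2/(4V_n + (9/2)Kt)$, I would use a standard elementary lower bound on $h$, namely an inequality of the form $h(x) \geq x^2/(2 + cx)$ valid for all $x \geq 0$ with an appropriate constant $c$. Applying this with $x = t/V_n$ yields
\begin{equation*}
V_n\, h(t/V_n) \;\geq\; \frac{t^2}{2 V_n + c\, t}\enspace .
\end{equation*}
The constants $4$ and $9/2$ in the denominator of the statement are slightly loose compared to the sharpest Bousquet--Klein--Rio constants, and absorbing the factor of $K$ inside the linear-in-$t$ term (using that $t$ in Bousquet's bound is naturally rescaled by $K$ when we expand $V_n$) should give room to settle on the constants $4$ and $9/2$ after a short algebraic comparison of the quadratic denominators.

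\textbf{Main obstacle.} The conceptual content of the result is entirely contained in the invoked Theorem 3.3.16 of Gin\'e--Nickl, whose proof uses the entropy method (log-Sobolev / Herbst argument) to control the log-moment generating function of $Z - \mathbb{E} Z$; I would not reprove that result. The only delicate part of the present argument is therefore Step 2: one must carefully check that the elementary lower bound on $h$ produces denominator constants no larger than $4V_n$ and $(9/2)K t$. This is a routine but slightly finicky computation — the exact $9/2$ factor is not the tightest constant available, so the argument has some slack, and the main care is simply to verify that the final inequality holds uniformly in $t \geq 0$ without any unstated regime restriction.
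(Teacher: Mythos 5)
Your proposal follows exactly the route the paper itself takes: the paper states this result as a known concentration inequality, citing Talagrand (1996) and noting that it "can be obtained by inverting the tail bound in Theorem 3.3.16 in Giné--Nickl," which is precisely your Step 1 plus Step 2 (and your observation that the standard bound $h(x)\geq x^2/(2+2x/3)$ gives denominator constants $2$ and $2/3$, comfortably within the stated $4$ and $9/2$, confirms the inversion goes through with room to spare). The proposal is correct and essentially identical in approach to the paper's treatment.
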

Note that the functional $v \to \left\Vert \mathfrak {M}\circ v \right\Vert_{2}^{2}-\Vert v\Vert _{l_2(\pi)}^{2}$ is continuous on the set of vectors   $\mathcal{C}(T,\epsilon,r) $. Hence it's enough to consider a dense countable subset of $\mathcal{C}(T,\epsilon,r) $ and we may apply Talagrand's inequality to $Z_r$. We have for our particular case,  since  $\sup_{f \in \mathcal{F}} |f(X)|=\sup_{f \in \{ \mathcal{F} \bigcup\{ -\mathcal{F}\}\}} f(X)$, that
$$X_{i}=m_{i}-\mathbb{E}\left ( m_{i}\right ), ~~~S=[-1,1].$$
We can compute:
\begin{align*}
\sum_{i}\mathbb{E} \left [(f_k(X_k))^2 \right ]=\sum_{i}\mathbb{E} \left [ m_{i}-\mathbb{E}\left ( m_{i}\right )\right ]^{2}v^{4}_{i}
\leq \sum_{i} \pi_{i} v^{2}_{i}=\Vert v\Vert _{l_2(\pi)}^{2}\leq r
\end{align*}
which implies that 
$V_{n}\leq \tfrac{17}{12}r+2\mathcal{E}$.
Hence, using \eqref{bound_expectation_sup} and
Talagrand's inequality with $t=\dfrac{5}{24}r+0.5\,\mathcal{E}$, we obtain
\begin{align*}
	 	pr\left ( Z_{r} > \frac{5}{12}r+1.5\mathcal{E}\right ) \leq \exp (-c_1r )
\end{align*}
	with $c_1\geq 0.01$ which completes the proof of Lemma \ref{lm:sup}.

\appendixfour
\section{Further Results: Simulation study}\label{appendix:results}
In this section, we report additional results of our simulation study.

\subsection{Algorithms and Compute time}\label{sec:app:solver}

In our simulation study, we use the semismooth Newton based augmented Lagrangian
method (SSNAL) proposed by \cite{sun2021convex}.  SSNAL is a scalable algorithm that was originally designed to perform convex clustering to solve the following optimization problem:

 $$ \min_{X \in \R^{d \times n}} \frac{1}{2} \sum_{i=1}^n \| x_i - a_i\|^2 + \lambda \sum_{i < j } W_{ij} \| x_i -x_j\|_2 $$
Note that in our case, since our vector $p$ is one dimensional $\| p_i -p_j\|_2 = \sqrt{ (p_i - p_j)^2 } = | p_i - p_j|$, and our problem shares the same objective function as convex clustering.
In their work, \cite{sun2021convex} prove their algorithm to be theoretically efficient and practically efficient and robust. More specifically, the authors show that, as long as the dimension of the feature vector is small (i.e, in our case, equal to 1), this method has the potential of considerably improving upon existing solvers.

\xhdr{Evaluation} 
To highlight the scalability of SSNAL, we propose comparing here four different solvers on a set of synthetic experiments. More specifically, we compare the performances of:
\begin{enumerate}
   \item{\it An off-the-shelf convex optimization solver (CVXPY)},
   \item{\it The dual-coordinate descent algorithm} proposed by Tibshirani and Taylor~\cite{tibshirani2011solution}. This method relies on solving the dual problem:
$$ \text{minimize} \frac{1}{2}  \| y - \Gamma^T u \|^2 \text{  subject to } \| u \|_{\infty}\leq \lambda,$$
yielding a simple algorithm.
   \item{\it An ADMM-based approach}: ADMM is often the method of choice to deal with large scale constrained and has been suggested in multiple instance as the method of choice for graph-based penalties on large graphs~\cite{hallac},
   \item{\it The SSNAL algorithm of \cite{sun2021convex},} described in the previous paragraph.
\end{enumerate}

In this first set of experiments, we propose comparing the performance of the different algorithms when the underlying is an Erdos Renyi random graph. The reason for considering this topology here is that this is a difficult setting, since the graph is quite dense. Since the complexity of the penalty depends on the number of edges, we plot the time requirements of the algorithm as the number of nodes increases and the penalty varies.
The results are displayed in Figure~\ref{fig:compute_time} below.

\begin{figure}[h]
    \centering
    \includegraphics[width=\textwidth]{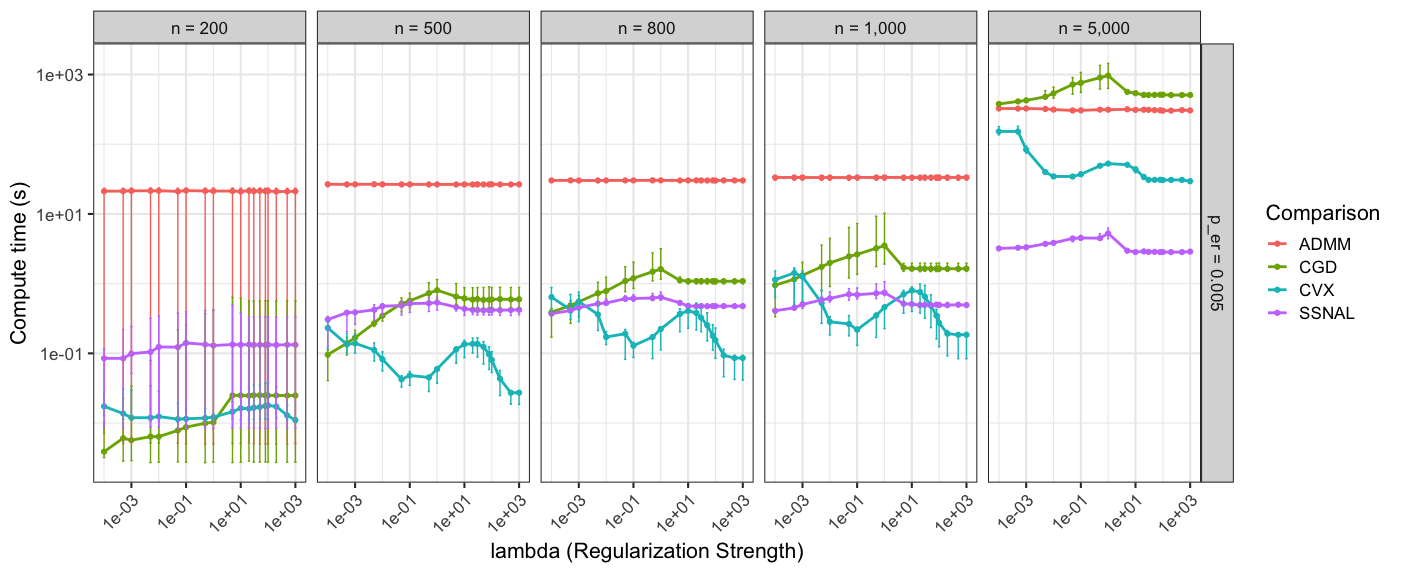}
    \caption{Computation time for the different algorithms}
    \label{fig:compute_time}
\end{figure}

\subsection{Synthetic Experiments: Additional Results}
\begin{figure}
    \centering
    \includegraphics[width=\textwidth]{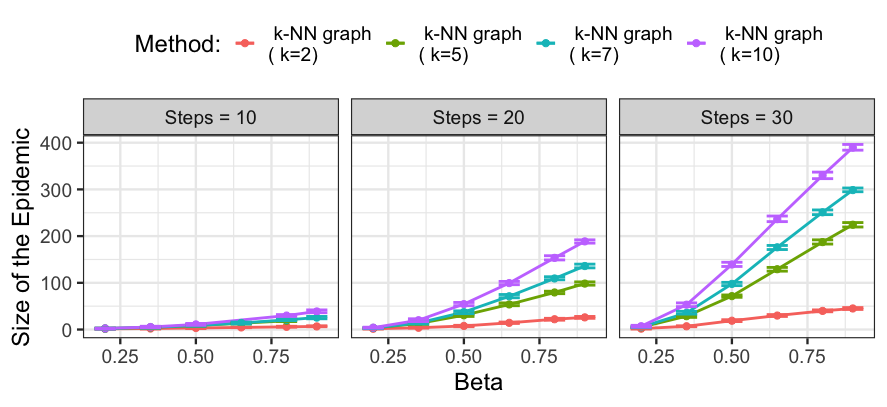}
    \caption{Number of infections for the kNN graph as a function of $\beta$, $n_{\text{steps}}$ and the parameter $k$ in the k-NN graph. The healing rate is here fixed to $\gamma=0.1.$ Results are averaged over 100 independent experiments.}
    \label{fig:knn_infection}
\end{figure}


\begin{figure}
    \centering
    \includegraphics[width=\textwidth]{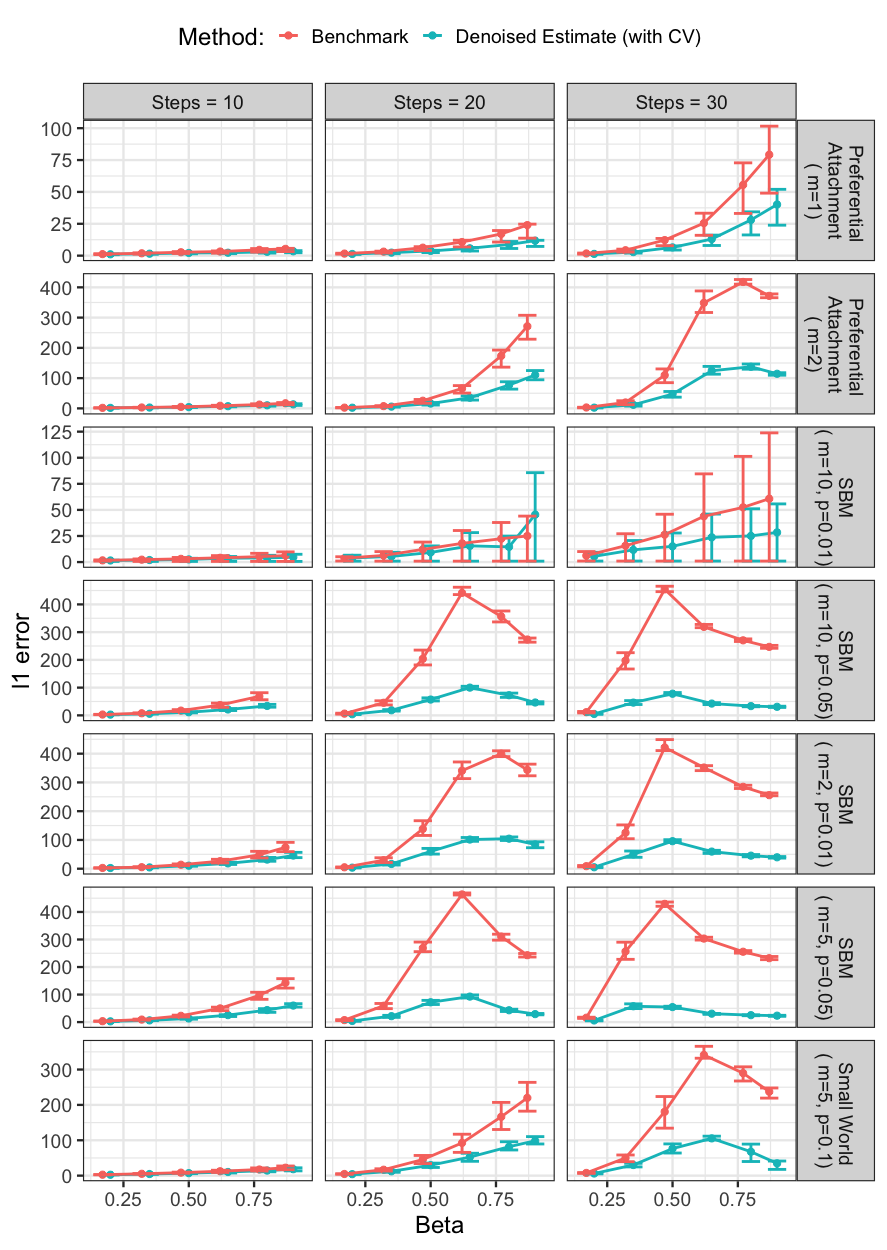}
    \caption{$\ell_1$ error for various graph topologies as a function of $\beta$, $n_{\text{steps}}$. The healing rate is here fixed to $\gamma=0.1.$ Results are averaged over 100 independent experiments, and error bars denote interquartile ranges.}
    \label{fig:resothers}
\end{figure}

\begin{figure}
    \centering
    \includegraphics[width=\textwidth]{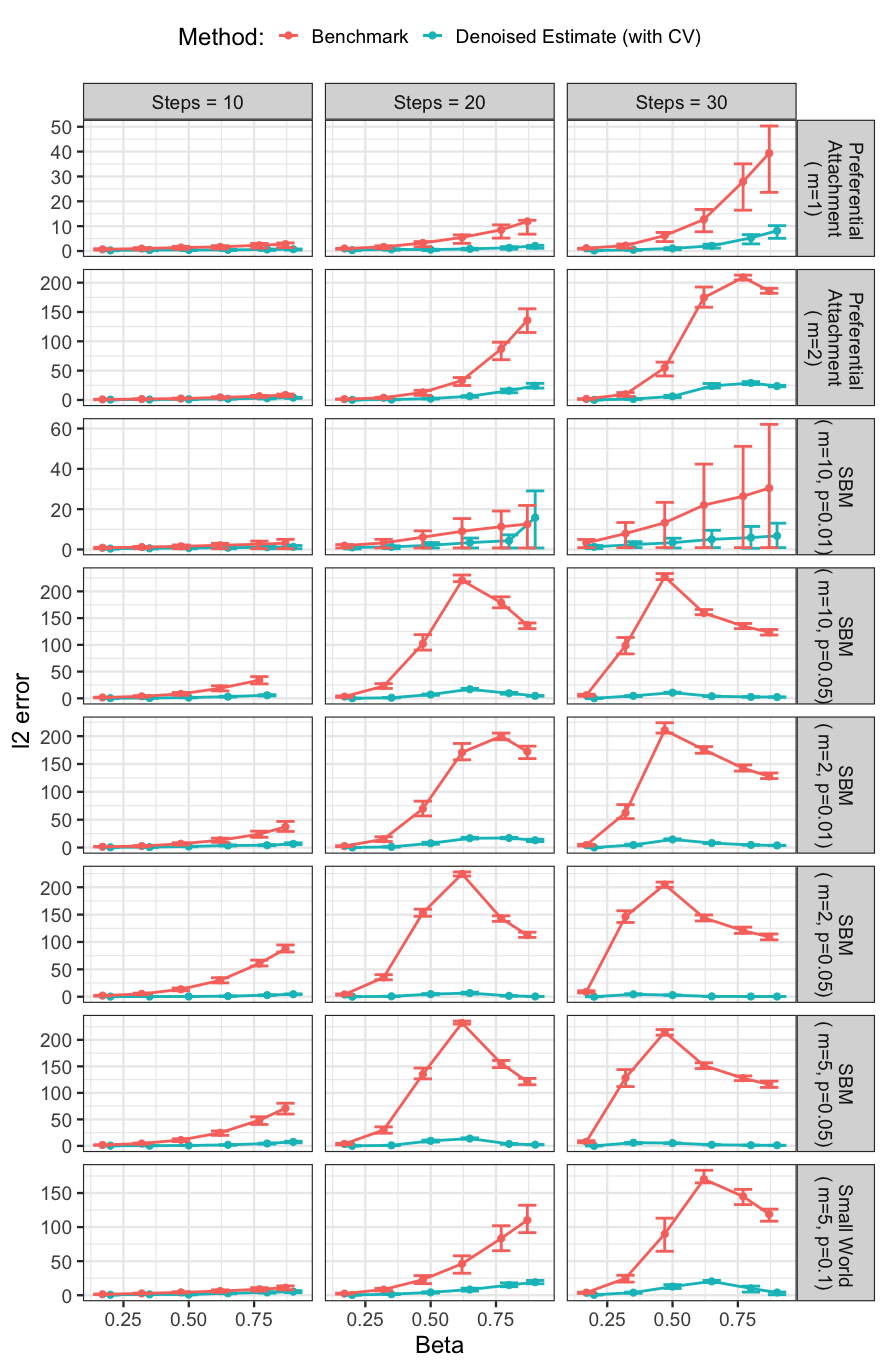}
    \caption{$\ell_2$ error for various graph topologies as a function of $\beta$, $n_{\text{steps}}$. The healing rate is here fixed to $\gamma=0.1.$ Results are averaged over 100 independent experiments, and error bars denote interquartile ranges.}
     \label{fig:resother2}
\end{figure}

\begin{figure}
    \centering
    \includegraphics[width=\textwidth]{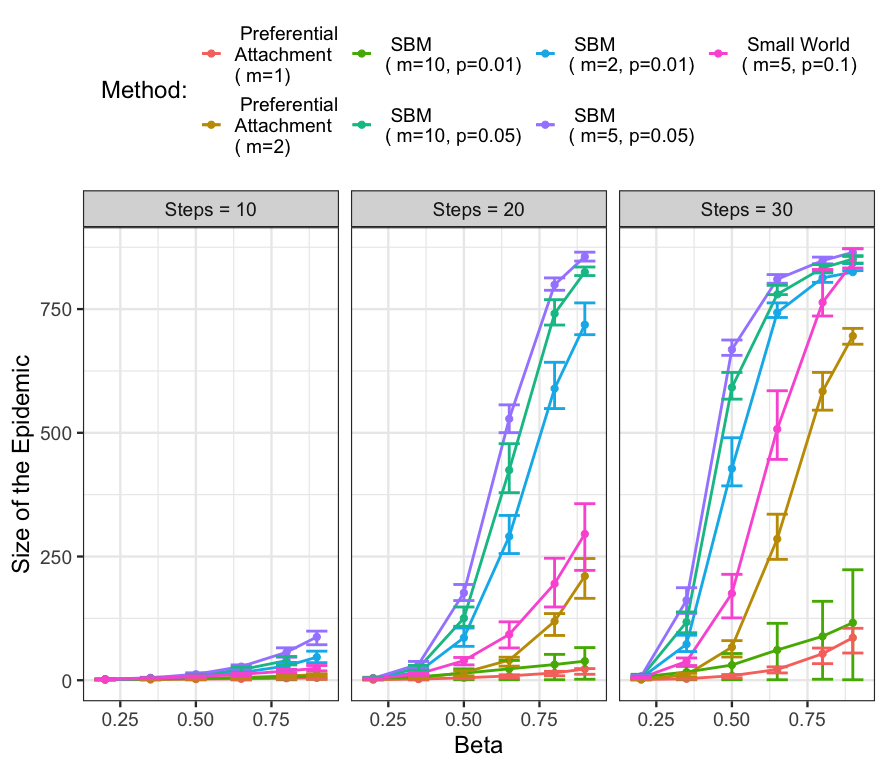}
    \caption{Number of infections for graphs with various topologies as a function of $\beta$, $n_{\text{steps}}$. The healing rate is here fixed to $\gamma=0.1.$ Results are averaged over 100 independent experiments, and error bars denote interquartile ranges.}
    \label{fig:resothers_inf}
\end{figure}




\bibliographystyle{biometrika}
\bibliography{Diffusion_Networks}

\end{document}